\theoremstyle{plain} 
\newtheorem{thm}{Theorem}
\newtheorem{cor}{Corollary}
\theoremstyle{definition}
\newtheorem{ex}{Example}
\theoremstyle{remark}
\newcommand{\prob}{\mathsf{P}} 
\newcommand{\E}{\mathsf{E}}
\newcommand{\bin}{{\sf Bin}}
\newcommand{\unif}{{\sf Unif}}
\newcommand{\nm}{{\sf N}}
\newcommand{\gam}{{\sf Gamma}}
\newcommand{\chisq}{{\sf ChiSq}}
\newcommand{\mult}{{\sf Mult}}
\newcommand{\RR}{\mathbb{R}}
\newcommand{\FF}{\mathbb{F}}
\newcommand{\LL}{\mathbb{L}}
\newcommand{\YY}{\mathbb{Y}}
\newcommand{\ZZ}{\mathbb{Z}}
\newcommand{\TT}{\mathbb{T}}
\newcommand{\iid}{\overset{\text{\tiny iid}}{\,\sim\,}}
\newcommand{\prior}{\mathsf{Q}}
\newcommand{\credal}{\mathscr{Q}}
\newcommand{\cred}{\mathscr{C}}
\newcommand{\lPi}{\underline{\Pi}}
\newcommand{\uPi}{\overline{\Pi}}
\newcommand{\uprior}{\overline{\mathsf{Q}}}
\newcommand{\lprob}{\underline{\mathsf{P}}}
\newcommand{\uprob}{\overline{\mathsf{P}}}
\title{Valid and efficient imprecise-probabilistic inference with partial priors, III. Marginalization}
\author{Ryan Martin\footnote{Department of Statistics, North Carolina State University, {\tt rgmarti3@ncsu.edu}}}
\date{\today}
\begin{document}

\maketitle 

\begin{abstract}
As Basu (1977) writes, ``Eliminating nuisance parameters from a model is universally recognized as a major problem of statistics,'' but after more than 50 years since Basu wrote these words, the two mainstream schools of thought in statistics have yet to solve the problem.  
Fortunately, the two mainstream frameworks aren't the only options.  This series of papers rigorously develops a new and very general inferential model (IM) framework for imprecise-probabilistic statistical inference that is provably valid and efficient, while simultaneously accommodating incomplete or partial prior information about the relevant unknowns when it's available.  The present paper, Part~III in the series, tackles the marginal inference problem.  Part~II showed that, for parametric models, the likelihood function naturally plays a central role and, here, when nuisance parameters are present, the same principles suggest that the profile likelihood is the key player.  When the likelihood factors nicely, so that the interest and nuisance parameters are perfectly separated, the valid and efficient profile-based marginal IM solution is immediate.  But even when the likelihood doesn't factor nicely, the same profile-based solution remains valid and leads to efficiency gains.  This is demonstrated in several examples, including the famous Behrens--Fisher and gamma mean problems, where I claim the proposed IM solution is the best solution available.  Remarkably, the same profiling-based construction offers validity guarantees in the prediction and non-parametric inference problems.  Finally, I show how a broader view of this new IM construction can handle non-parametric inference on risk minimizers and makes a connection between non-parametric IMs and conformal prediction.  

\smallskip

\emph{Keywords and phrases:} Bayesian; frequentist; inferential model; non-parametric; nuisance parameter; possibility theory; prediction; profile likelihood. 
\end{abstract}

\vfill 

\pagebreak

\tableofcontents

\pagebreak

\section{Introduction}
\label{S:intro}

In \citet{martin.partial2}, henceforth Part~II, I developed a new framework for valid and efficient (imprecise-probabilistic) statistical inference that incorporates general incomplete or partial prior specifications.  In particular, it simultaneously covers the classical frequentist case of vacuous prior information and the classical Bayesian case of complete prior specification.  Like Bayes, it provides a sort of ``probabilistic'' uncertainty quantification---that is, to each relevant hypothesis a data-dependent numerical score is assigned that represents its support/plausibility---but, unlike Bayes, this isn't done with probability theory and Bayes's formula.  Moreover, the new framework comes equipped with both reliability and coherence-like properties, so it achieves both the classical frequentist and Bayesian objectives.  The present paper is a follow-up to Part~II that addresses the common yet non-trivial situation in which the quantity of interest is just one feature of the full unknown.  In other words, this paper focuses on cases where nuisance parameters are present and need to be eliminated for valid and efficient {\em marginal} inference. 

What's unique about this framework overall is that inference is {\em imprecise-probabilistic} (actually, {\em possibilistic}---more on this below) in the sense that, in light of data and other relevant information, uncertainty about the unknowns is quantified in terms of an imprecise probability, or a lower and upper probability pair.  This ``imprecision'' isn't a shortcoming of the approach, or an undesirable feature that warrants an apology, it's necessary for the kinds of reliability that statisticians expect their methods to satisfy.  A detailed justification of this claim is given in \citet{martin.nonadditive, imchar}, but here let me make a couple high-level points.  When we teach the classical frequentist methods that are reliable by definition, we're careful not to express these in terms of probability statements about the unknown parameter.  For example, we emphasize that 
\begin{itemize}
\item the p-value is not the conditional probability, given data, that the true parameter value meets the criteria specified by the null hypothesis, and 
\vspace{-2mm}
\item the confidence level is not the conditional probability, given data, that the true parameter value is contained in the stated interval. 
\end{itemize} 
\citet{fisher1973} himself commented on this:
\begin{quote}
{\em [A p-value] is more primitive, or elemental than, and does not justify, any exact probability statement about the proposition} (ibid, p.~46)
\end{quote}
\begin{quote}
{\em ... It is clear, however, that no exact probability statements can be based on [confidence sets]} (ibid, p.~74).
\end{quote}
Note that Fisher says no {\em exact}---or {\em precise}---{\em probability statements} can be made based on these classical procedures, but he leaves open the prospect that they offer certain {\em inexact} or {\em imprecise probability statements}.  The proposed imprecise-probabilistic framework is simply embracing Fisher's points and trying to squeeze all that we possibly can out of these classical ideas.  Of course, not just any kind of imprecise probability would be compatible with p-values/confidence sets and the properties they're intended to satisfy, but it turns out that the consonant \citep[e.g.,][Ch.~10]{shafer1976} or possibilistic \citep[e.g.,][]{dubois2006, dubois.prade.book} brand of imprecision is both simple and ``right'' for this task; see \citet{imchar} and Section~3.2 of Part~II.  And as shown in Part~II, the examples below, and elsewhere \citep[e.g.,][]{imbasics, imcond, immarg, imbook}, this imprecision {\em need not} come with any sacrifice in efficiency, but some care is needed.  The goal of this paper is to flesh out what I mean by ``care'' when the task is marginal inference. The key point, again, is that a certain kind and degree of imprecision is necessary to guarantee statisticians' desired reliability and, by the false confidence theorem \citep{balch.martin.ferson.2017, martin.nonadditive}, attempts to push the model + data pair beyond its imprecise limits, as the default-prior Bayes and fiducial solutions do, create risks for severe unreliability; further discussion and references on this can be found in \citet{martin.basu}.  

As mentioned above, the focus here is on valid and efficient marginal inference.  \citet{basu1977} writes ``Eliminating nuisance parameters from a model is universally recognized as a major problem of statistics'' and yet it remains unsolved.  That is, reliable elimination of nuisance parameters is a challenging problem that requires significant care; see, e.g., the frequentist impossibility results in \citet{gleser.hwang.1987} and \citet{dufour1997} and the unreliable behavior of Bayesian solutions in, e.g., \citet{fraser2011} and \citet{fraser.etal.2016}.  A benefit of the general, imprecise-probabilistic framework developed in this series is that it offers a reliable-but-naive strategy (Section~\ref{SS:naive}) for eliminating nuisance parameters.  The aforementioned solution is ``reliable'' in the sense that the corresponding marginal inference would be valid (i.e., confidence sets attain the nominal coverage probability), but ``naive'' in the sense that it's not tailored to any specific feature and, therefore, inference would tend to be inefficient for any particular feature. So the goal here is to avoid sacrificing on  validity or efficiency---I want to tailor the IM solution to a particular feature of interest so that the corresponding inference is both valid and efficient.  

An early IM solution to the marginal inference problem was put forward in \citet{immarg}, but this was limited by, for one thing, its reliance on an expression of the statistical model in terms of a (simple, relatively easily manipulated) data-generating equation.  The new framework developed in Part~II is likelihood-driven, rather than data-generating equation-driven, and, therefore, can readily be applied to a wider range of problems.  Moreover, being likelihood-driven makes it possible to take advantage of certain structure, i.e., factorizations, in the likelihood function, which can aid in the elimination of nuisance parameters.  For example, in certain ``ideal'' cases (Section~\ref{SS:factor}), the likelihood factors in such a way that one term depends only on the interest parameter and the other only on the nuisance parameter.  In such cases, as I show in Section~\ref{SS:more} below, it's relatively easy to construct a valid and efficient marginal IM.  The way that I suggest to take advantage of this special structure is via a {\em profiling} step, where the likelihood function is maximized over the nuisance parameter with the interest parameter fixed; when the likelihood function factors in an ``ideal'' way, certain key terms involving the nuisance parameter cancel out in the relative profile likelihood and efficient marginalization can be achieved.  This is illustrated in several important, classical examples.  Most importantly, the same profiling strategy leads to valid and efficient marginal inference in most (but not all---see Section~\ref{SS:caution}) cases that are ``less-than-ideal'' in one way or another.  I show, in Section~\ref{SS:big.examples} that the profiling-based IM construction leads to valid and efficient marginal inference in two challenging and practically relevant examples, namely, the Behrens--Fisher and gamma mean problems.  In fact, to my knowledge, these are the best available solutions that are exactly---not just asymptotically approximately---valid.  

An important, albeit extreme case of marginal inference is prediction, where the model parameter itself is a nuisance parameter and only a future observation is of interest.  Section~\ref{S:prediction} tackles the case of a parametric statistical model and the goal is valid and efficient inference on/prediction of some feature of future observables.  There are at least three different ways prediction can be carried out in this framework, and I consider each of these in turn.  Interestingly enough, the same profiling strategy described above can be used in this context as well, and tends to produce the most efficient predictive inference.  These ideas are illustrated in a number of non-trivial examples, including predicting the largest of $k$ many future gamma observations.  

The focus so far has been on problems that come equipped with a parametric model amenable to a likelihood-based analysis.  But there are important, even classical problems that don't fit this mold, e.g, inference on the mean of an otherwise unspecified distribution.  Towards an IM solution to this problem, but staying relatively close to the theory developed so far, I proceed in Section~\ref{S:semi} by treating the model itself as the parameter, forming a sort of ``empirical likelihood'' as developed by \citet{owen.book} and others, then applying the same profiling strategy in hopes of eliminating all but the quantity of interest.  In this higher-complexity context, computation of the IM's lower and upper probabilities, etc., is more challenging, so I offer some suggestions on how to approximate these (e.g., using bootstrap) and a few illustrations. 

Finally, in Section~\ref{S:beyond}, I take a different perspective on the non-parametric problem mentioned above, one that avoids both thinking about and constructing/using a likelihood.  In parametric models, it's the likelihood function that plays the (very important) role of mathematically linking the observable data and the quantity of interest, but presumably there'd be other more direct ways to make this link in non-parametric cases.  Further investigation along these lines is needed, but I consider two relevant problems, namely, inference on risk minimizers and (non-parametric) prediction of future observations.  One highlight of this investigation is that I'm able to show how the now-widely-used {\em conformal prediction} methodology \citep[e.g.,][]{vovk.shafer.book1} can actually be derived by applying some of the key principles in Part~II to this broader notion of an inferential model.  

The paper concludes with a brief summary and a discussion of some open problems and directions for future investigation.

\section{Background}

\subsection{Recap of Part~II}

Part~II of the series put forward a general IM construction that can accommodate partial prior information about the model parameter $\Theta$, if any, and returns a necessity--possibility measure pair as output to be used for uncertainty quantification.  Here I give a relatively quick recap of this construction and the relevant properties.  

First a bit of notation.  The statistical model is a family of probability distributions for the observable data $Y$, which I'll write as $\{\prob_{Y|\theta}: \theta \in \TT\}$.  Note that the subscript indicates which quantity is random/uncertain, with dependence on, in this case, a parameter $\theta$ being marked by the vertical bar.  I'll assume that, for each $\theta$, $\prob_{Y|\theta}$ admits a density/mass function, denoted by $y \mapsto p_{Y|\theta}(y)$.  Moreover, I'll write upper-case $\Theta$ for the uncertain value of the model parameter that's to be inferred.  Write $\text{probs}(\TT)$ for the set of all probability measures defined on the Borel $\sigma$-algebra of $\TT$.  

The approach developed in Part~II allows for various kinds of prior information about $\Theta$ to be incorporated.  This includes the traditional Bayesian case where a single prior distribution that completely and precisely quantifies the {\em a priori} uncertainty about $\Theta$, as well as cases where the available prior information is incomplete or imprecise to some degree, even vacuous.  Mathematically, this can be described in general by a lower and upper probability pair $(\lprob_\Theta, \uprob_\Theta)$ that quantifies the available prior information about $\Theta$.  The lower and upper probabilities are related via the duality
\begin{equation}
\label{eq:dual}
\uprob_\Theta(A) = 1 - \lprob_\Theta(A^c), \quad A \subseteq \TT. 
\end{equation}
I'll assume throughout that this {\em a priori} assessment is coherent in the sense of, e.g., \citet[][Ch.~2.5]{walley1991}, \citet[][Sec.~2.2.1]{miranda.cooman.chapter}, and \citet[][Def.~4.10]{lower.previsions.book}.  The technical definition of coherence isn't relevant here, but there are a few key consequences that are worth mentioning.  
\begin{itemize}
\item The lower probability $A \mapsto \lprob_\Theta(A)$ is 2-monotone which, in particular, implies that it's also super-additive, i.e., 
\[ \lprob_\Theta(A \cup B) \geq \lprob_\Theta(A) + \lprob_\Theta(B), \quad A \cap B = \varnothing. \]
Since $\lprob_\Theta(A \cup A^c) = 1$, it follows from \eqref{eq:dual} and super-additivity that
\[ \lprob_\Theta(A) \leq \uprob_\Theta(A), \quad A \subseteq \TT. \]
This explains the lower/upper terminology and the under/over-bar notation.
\item De~Finetti's school treats probabilities as fully subjective, and their real-world meaning and interpretation is teased out through a sort of game where you and I are able buy and sell gambles to one another.  In the present imprecise case, this is roughly as follows: \$$\lprob_\Theta(A)$ is the most that I'd be willing to pay you for a gamble that pays me \$$1(\Theta \in A)$ and, similarly, \$$\uprob_\Theta(A)$ is the least that I'd be willing to accept from you in exchange for the gamble that pays you \$$1(\Theta \in A)$.  That is, $\lprob_\Theta$ and $\uprob_\Theta$ bound my buying and selling prices, respectively.  If the imprecise probability that drives my pricing scheme is coherent, then I cannot be made a sure loser, i.e., there is no finite sequences of transactions for which my net winnings is sure to be negative.  This no-sure-loss property is relatively weak, but if I fail to avoid sure-loss, that that's a clear sign my probability assessments are flawed.  
\item The imprecise prior assessment is equivalent to a set of precise probability assessments.  That is, the upper prior probability $\uprob_\Theta$ determines a (closed and convex) set of compatible probabilities, called a {\em credal set}, given by 
\[ \cred(\uprob_\Theta) = \{ \prob_\Theta \in \text{probs}(\TT): \prob_\Theta(\cdot) \leq \uprob_\Theta(\cdot) \}, \]
and that set, in turn, determines the lower and upper probabilities as its corresponding lower and upper envelopes:
\[ \lprob_\Theta(A) = \inf_{\prob_\Theta \in \cred(\uprob_\Theta)} \prob_\Theta(A) \quad \text{and} \quad \uprob_\Theta(A) = \sup_{\prob_\Theta \in \cred(\uprob_\Theta)} \prob_\Theta(A), \quad A \subseteq \TT. \]
\end{itemize} 
These same properties hold for all the coherent imprecise probabilities discussed below, there's nothing mathematically special about the imprecise {\em prior} probabilities. 

The most common situation in the statistics literature is where no prior information is assumed, i.e., all that can be said {\em a priori} is that the ``prior probability of $\Theta \in A$'' is between 0 and 1.  This can't be modeled with ordinary probability, but it's easy to handle with imprecise probability: the corresponding lower and upper prior probabilities would be $\lprob_\Theta(A) = 0$ for all $A \neq \TT$ and $\uprob_\Theta(A) = 1$ for all $A \neq \varnothing$.  It's also easy to see that the credal set $\cred(\uprob_\Theta)$ corresponds to the set of all probability distributions.  This observation offers an interesting take-away message: the classical ``no prior'' case is more accurately described as ``every prior'' in the sense that the lack of prior information available actually means that one can't rule out any prior distributions; see Part~I \citep{martin.partial}.  This so-called vacuous prior case will be my primary focus in this paper, mostly for the sake of comparison with existing solutions in the key examples.  

The partial prior and the statistical model together determine an imprecise joint distribution $(\lprob_{Y,\Theta}, \uprob_{Y,\Theta})$ for the pair $(Y,\Theta)$.  The upper joint distribution $\uprob_{Y,\Theta}$ is 
\[ \uprob_{Y,\Theta}(Y \in B, \, \Theta \in A) = \sup_{\prob_\theta \in \cred(\uprob_\Theta)} \int_A \prob_{Y|\theta}(B) \, \prob_\Theta(d\theta), \quad A \subseteq \TT, \quad B \subseteq \YY. \]
The right-hand side above is a Choquet integral \citep[e.g.,][App.~C]{lower.previsions.book}---which is familiar in certain statistical contexts \citep[e.g.,][]{huber1973.capacity}---and there may be simplified expressions depending on the mathematical form of the partial prior; see Equation \eqref{eq:choquet.theta} below and Section~6.1 in Part~II.  In any case, the upper joint distribution, which depends on {\em exactly} what the data analyst knows or is willing to assume about the application at hand, is what drives the construction of an IM for quantification of uncertainty about $\Theta$ given the observed $Y=y$.  Following some lengthy justification in terms of what I called ``outer consonant approximations,'' I arrived at the following IM construction, given $Y=y$: first, define the (plausibility) contour function 
\begin{equation}
\label{eq:im.contour}
\pi_y(\theta) = \uprob_{Y,\Theta}\{ R_q(Y,\Theta) \leq R_q(y,\theta)\}, \quad \theta \in \TT, 
\end{equation}
where $R_q$ is a sort of normalized joint density 
\[ R_q(y,\theta) = \frac{p_{Y|\theta}(y) \, q_\Theta(\theta)}{\sup_{\vartheta \in \TT} \{p_{Y|\vartheta}(y) \, q_\Theta(\vartheta)\}}, \quad \theta \in \TT, \]
with $q_\Theta(\theta) := \uprob_\Theta(\{\theta\})$, a relevant summary of the partial prior.  If $q_\Theta$ satisfies $\sup_\theta q_\Theta(\theta) = 1$, which is within the user's control, then the right-hand side of \eqref{eq:im.contour} can be evaluated as 
\begin{equation}
\label{eq:choquet.theta}
\pi_y(\theta) = \int_0^1 \sup_{\vartheta: q_\Theta(\vartheta) > s} \prob_{Y|\vartheta}\{ R_q(Y,\vartheta) \leq R_q(y,\theta) \} \, ds, \quad \theta \in \TT, 
\end{equation}
with, for example, the inner $\prob_{Y|\vartheta}$-probability evaluated via Monte Carlo.  Since the contour \eqref{eq:im.contour} clearly satisfies $\sup_\theta \pi_y(\theta) = 1$ for each $y$, this contour can be used to directly define the IM's upper probability via consonance 
\[ \uPi_y(A) = \sup_{\theta \in A} \pi_y(\theta), \quad A \subseteq \TT, \]
and the corresponding lower probability via the general duality \eqref{eq:dual}.  This is a generalization of the suggestion in \citet{plausfn, gim}, and a detailed justification for this is presented in Part~II.  The IM output is a coherent imprecise probability, so those properties described above for $(\lprob_\Theta, \uprob_\Theta)$ also hold for $(\lPi_y, \uPi_y)$.  Moreover, the $\uPi_y$ term in the IM's output satisfies the properties of a {\em possibility measure}, so I will often refer to this as a {\em possibilistic IM} and inference drawn from it {\em possibilistic inference}.  While it might not look it at first glance, this solution is actually quite straightforward: the data and model are combined with available prior information via the rule \eqref{eq:im.contour}.  The key take-away, however, is that the IM output $(\lPi_y, \uPi_y)$---which depends on the data $y$, the posited model, and the available prior information---is special because it's completely determined by the contour function \eqref{eq:im.contour}.  Indeed, like how a Bayesian's posterior density determines everything, the contour function \eqref{eq:im.contour} determines the IM solution; the only difference is that I optimize the contour function whereas the Bayesian integrates the density function.  See Section~3.2 of Part~II for further details concerning this special (consonance) structure.  Connections can also be made, in the vacuous prior case, between the IM output's credal set $\cred(\uPi_y)$ and Fisher's fiducial solution \citep[e.g.,][]{martin.isipta2023}. 

The value in/quality of the IM solution lies in the properties that it satisfies.  One property is related to, but more demanding than, the coherence property described above.  This concerns the act of updating probabilities/prices based on new information, in our case, the data $y$.  The idea is my probabilities/prices should be such that you can't make me a sure-loser by leveraging some inadequacy in how I update my prior assessments.  As I explain in Section~3.3 of Part~I and Section~5.2.2 of Part~II, and won't repeat in details here, the IM solution described above comes equipped with protection against this type of updating sure-loss as well.  Some believe that, by de~Finetti's theory, only Bayesian solutions are coherent, but the above result largely debunks this folklore.  

More directly relevant to the discussion in this paper are the IM solution's statistical properties.  One basic property, called (strong) {\em validity} (Definition~3 in Part~I) has lots of practically relevant consequences.  

\begin{thm}[Part~II]
\label{thm:valid}
The IM with contour as in \eqref{eq:im.contour} is (strongly) valid in the sense that  
\begin{equation}
\label{eq:valid}
\uprob_{Y,\Theta}\{ \pi_Y(\Theta) \leq \alpha \} \leq \alpha, \quad \alpha \in [0,1].
\end{equation}
\end{thm}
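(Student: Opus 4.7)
The plan is to reduce the stated inequality to the classical probability integral transform bound applied inside each element of the credal set of $\uprob_{Y,\Theta}$. To set things up, abbreviate $T = R_q(Y,\Theta)$ and let
\[ G(t) = \uprob_{Y,\Theta}\{T \leq t\}, \quad t \in \RR. \]
Then the contour \eqref{eq:im.contour} is literally $\pi_y(\theta) = G(R_q(y,\theta))$, so in particular $\pi_Y(\Theta) = G(T)$, and the target bound \eqref{eq:valid} becomes the one-dimensional statement $\uprob_{Y,\Theta}\{G(T) \leq \alpha\} \leq \alpha$ for every $\alpha \in [0,1]$.

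Next, I would pass to the credal-set representation. Because the partial prior $(\lprob_\Theta, \uprob_\Theta)$ is coherent and is combined with the statistical model via the Choquet construction described just before \eqref{eq:im.contour}, the joint $\uprob_{Y,\Theta}$ is a coherent upper probability and thus admits the envelope representation $\uprob_{Y,\Theta}(A) = \sup_{\prob \in \cred(\uprob_{Y,\Theta})} \prob(A)$ recalled in the excerpt. So it suffices to verify $\prob\{G(T) \leq \alpha\} \leq \alpha$ for each individual $\prob \in \cred(\uprob_{Y,\Theta})$, since taking the supremum over $\prob$ then returns $\uprob_{Y,\Theta}\{G(T) \leq \alpha\}$ on the left-hand side.

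For that pointwise bound, fix such a $\prob$ and let $F_\prob(t) = \prob\{T \leq t\}$ denote the $\prob$-CDF of $T$. By definition of $G$ as an upper envelope, $F_\prob \leq G$ pointwise on $\RR$, which yields the inclusion $\{G(T) \leq \alpha\} \subseteq \{F_\prob(T) \leq \alpha\}$. Combining this with the classical probability-integral-transform inequality $\prob\{F_\prob(T) \leq \alpha\} \leq \alpha$---which holds for any real-valued random variable $T$ under its own CDF $F_\prob$, with or without atoms---gives
\[ \prob\{G(T) \leq \alpha\} \leq \prob\{F_\prob(T) \leq \alpha\} \leq \alpha, \]
and the proof concludes by taking $\sup_{\prob \in \cred(\uprob_{Y,\Theta})}$ on the left.

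The only non-routine step I anticipate is the very first move: confirming that $\uprob_{Y,\Theta}$, produced by the Choquet combination of a coherent partial prior with the statistical model, is itself a coherent upper probability and thus admits the credal-envelope representation used above. That is the bridge that lets an imprecise-probability claim be reduced to a uniform family of precise-probability claims. Everything after that reduction is the standard ``$F(T)$ is sub-uniform'' argument applied pointwise in $\prob$, augmented by the dominance $F_\prob \leq G$. Measurability of $T$ and of $\{G(T) \leq \alpha\}$ on the joint space is a minor side concern handled by the regularity hypotheses implicit throughout Part~II.
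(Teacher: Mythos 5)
Your argument is correct and is essentially the proof given for this result in Part~II (the present paper only imports the theorem and does not reprove it): there, too, one writes $\pi_Y(\Theta)$ as $G(T)$ with $T=R_q(Y,\Theta)$, notes that $G$ dominates the $\prob$-CDF $F_\prob$ of $T$ for every precise joint $\prob$ in the envelope defining $\uprob_{Y,\Theta}$, applies the sub-uniformity bound $\prob\{F_\prob(T)\leq\alpha\}\leq\alpha$, and takes the supremum. The one ``non-routine'' step you flag is also harmless here, since $\uprob_{Y,\Theta}$ is defined directly as a supremum over the family $\{\prob_{Y|\cdot}\otimes\prob_\Theta:\prob_\Theta\in\cred(\uprob_\Theta)\}$, so the reduction to precise probabilities needs no separate coherence argument.
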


This closely resembles the familiar property satisfied by p-values in the context of statistical significance testing, but generally is different.  In the case of vacuous prior information, the ``prior'' admits $q_\Theta(\theta) \equiv 1$ and validity boils down to 
\[ \sup_{\theta \in \TT} \prob_{Y|\theta}\{ \pi_Y(\theta) \leq \alpha \} \leq \alpha, \quad \alpha \in [0,1]. \]
This looks even closer to the stochastically-no-smaller-than-uniform property satisfied by p-values.  The key point is that validity ensures the IM output is suitably calibrated, so that inferences based on the magnitudes of the IM's lower and upper probabilities are reliable.  As a consequence of this kind of high-level reliability, one can establish (Corollary~1 in Part~II) more mathematically specific results for IM-driven statistical procedures.  In particular, the set estimator 
\[ C_\alpha(Y) = \{\theta \in \TT: \pi_Y(\theta) > \alpha\}, \quad \alpha \in [0,1], \]
which can be interpreted as ``the set of all sufficiently plausible values'' of $\Theta$ satisfies 
\[ \uprob_{Y,\Theta}\{ C_\alpha(Y) \not\ni \Theta \} \leq \alpha. \]
That is, the set estimator $C_\alpha$ is a $100(1-\alpha)$\% confidence set, but in a partial prior-dependent sense through the evaluation via $\uprob_{Y,\Theta}$.  In the vacuous prior case, this reduces to the usual coverage probability guarantees:
\[ \sup_{\theta \in \TT} \prob_{Y|\theta}\{ C_\alpha(Y) \not\ni \theta \} \leq \alpha. \]
Take-away: this IM construction accommodates very general forms of partial prior information in a way that's coherent in various senses, and does so without sacrificing on the reliability properties that are essential to the logic of scientific inference.

\subsection{Interest and nuisance parameters}

It's rare that the quantity of interest exactly corresponds to the unknown parameters of the posited statistical model for the observable data $Y$.  A more realistic situation is one where the quantity of interest is some functional or feature of the full model parameter.  That is, if the model is $\{\prob_\theta: \theta \in \TT\}$, with $\Theta$ the uncertain value, then interest would often be in one or more features $\Phi = f(\Theta)$ of $\Theta$, where $f: \TT \to \FF$ is a known mapping.  One of the most common example of this is where the data are assumed to be normally distributed, where both the mean and variance are uncertain, but the goal is inference on the mean only.  In such cases, it's often possible to decompose the full parameter $\Theta$ as a pair $(\Phi, \Lambda)$, where the quantity of interest $\Phi$ is the {\em interest parameter}, taking values in $\FF$, and $\Lambda$ is the {\em nuisance parameter}, taking values in $\LL$.  The residual feature $\Lambda$ is only relevant for reconstructing the full parameter $\Theta$ from $\Phi$.  Details of the IM construction in this setting will be relevant for the construction of IMs in prediction problems in Section~\ref{S:prediction} and in modern non- and semi-parametric inference problems in Section~\ref{S:semi} below.  


It's important to emphasize that, despite what the notation suggests, $\Phi$ need not be a sub-vector of the vector model parameter $\Theta$.  In general, the notation $\Theta=(\Phi,\Lambda)$ is meant to indicate that the model parameter $\Theta$ determines and is determined by the pair $(\Phi, \Lambda)$.  For example, if $\Theta$ is a vector and $\Phi = \|\Theta\|$ is its length, then $\Lambda = \Theta / \|\Theta\|$ is the unit vector pointing in the direction of $\Theta$.  All I'm assuming is that $\Theta$ and $(\Phi,\Lambda)$ are in one-to-one correspondence.  This is important because it'll often be the case that the quantity of interest $\Phi$ isn't specific to any particular statistical model, e.g., $\Phi$ might be a quantile or the coefficients that determine a linear conditional quantile function.  So, if I impose a statistical model that's parametrized by $\Theta$, then my quantity of interest becomes a general feature of the model parameter $\Theta$, not necessarily a sub-component thereof.  Furthermore, this structure also suggests that genuine partial prior information about $\Phi$ would often be available in applications, whereas prior information about the nuisance parameter $\Lambda$ would be vacuous.  This kind of ``partial prior factorization'' will be useful in what follows.

\subsection{Classical marginalization}
\label{SS:factor}

There are a number of ways to carry out marginalization, depending on the statistical paradigm one is working in.  One of the selling points of the Bayesian paradigm is that marginalization is at least conceptually straightforward---it's just an application of probability theory.  I'll explain in the next subsection that there's a counterpart to this for possibilistic IMs, that's similarly straightforward, but it tends to be inefficient; the purpose of this paper is to explain how to do this more efficiently.  Here I'll present some classical ideas about how, in certain cases, the likelihood function factors in a way that makes elimination of the nuisance parameters fairly convenient.  My presentation will be based largely on the survey presented in \citet{basu1977, basu1978}, which is based on \citet{neyman1935}, \citet{olshevsky1940}, \citet{fraser1956}, \citet{sandved1966}, and \citet{bn.thesis}.  

Certain models and interest-nuisance parameter decompositions allow for a convenient factorization of the likelihood function that suggests a strategy for eliminating the nuisance parameter.  Recall that the likelihood function for the full parameter $\Theta=(\Phi,\Lambda)$, given $Y=y$, is $\theta \mapsto p_{Y|\theta}(y)$, which I'll write as $(\phi,\lambda) \mapsto p_{Y|\phi,\lambda}(y)$ to emphasize the interest-nuisance parameter decomposition.  In what follows, I'll slightly abuse notation by using ``$p$'' to represent all the (marginal and conditional) densities.  

\begin{description}
\item[Ideal factorization.] {\em Complete separation of $\phi$ and $\lambda$.}

The idea here is that the likelihood factors as a function depending on $\phi$ (and data) times a function of $\lambda$ (and data).  \citet[][Ch.~7]{royall.book} refers to this as parameter orthogonality; see, also, \citet{anscombe1964}.  There are a number of different ways in which parameter orthogonality might manifest, and below I describe a few.  Let $(U,V)$ denote a generic partition of the data $Y$, so that $y$ is equivalent to the pair $\{U(y), V(y)\}$, and consider the following factorizations:
\begin{align*}
p_{Y|\phi,\lambda}(y) & = p_{U|\phi}(u) \, p_{V|u,\lambda}(v) \\ 
p_{Y|\phi,\lambda}(y) & = p_{U|\lambda}(u) \, p_{V|u,\phi}(v), \quad (u,v) = \{U(y),V(y)\}. 
\end{align*}
These represent factorizations of the joint distribution of $Y$ in terms of marginal and conditional distributions of the features $U(Y)$ and $V(Y)$.  What differentiates the two is whether the interest parameter $\phi$ goes with the marginal or the conditional, and I'll consider both cases below in turn.  As \citet{basu1977} explains, the first case above is one where $U$ is P-sufficient for $\Phi$---``P'' for ``partial''---and is S-ancillary for $\Lambda$---``$S$'' for ``Sandved.''  The point is that $U=U(Y)$ is exhaustive concerning $\Phi$ since the conditional distribution of $V$, given $U=u$, doesn't depend on $\phi$, which aligns closely with the classical definition of sufficiency.  Similarly, the marginal distribution of $U$ doesn't depend on $\phi$, so it's ancillary in a certain sense.  In this case, if inference on $\Phi$ is the goal, then one can safely ignore the $\lambda$-dependent term and work with the marginal likelihood, $\phi \mapsto p_{U|\phi}(u)$.  In the second case above, $U$ is S-ancillary for $\Phi$ and P-sufficient for $\Lambda$, so one can safely ignore the $\lambda$-dependent term and work with the conditional likelihood, $\phi \mapsto p_{V|u,\phi}(v)$. 
\item[Less-than-ideal factorization.] {\em Incomplete separation of $\phi$ and $\lambda$.}

Here, consider the factorizations 
\begin{align*}
p_{Y|\phi,\lambda}(y) & = p_{U|\phi}(u) \, p_{V|u,\phi,\lambda}(v) \\
p_{Y|\phi,\lambda}(y) & = p_{U|\phi,\lambda}(u) \, p_{V|u,\phi}(v), \quad (u,v) = \{U(y), V(y)\}.
\end{align*}
Note the incomplete separation: unlike above, here there is no factorization into a function of $\phi$ (and data) times a function of $\lambda$ (and data).  There is a partial split, however.  In the first case, $U$ is what Basu would call $\Phi$-oriented whereas, in the second case, Basu would say that $U$ is specific-sufficient for $\Phi$, i.e., that $U$ is sufficient for $\Phi$ if $\Lambda=\lambda$ was taken as known.  Like above, if inference on $\Phi$ is the goal, then one could choose to work with the marginal or conditional likelihood in the two cases, respectively.  But this is not an obvious step like above because, here, ignoring the other factor implies some loss of information about $\Phi$
\end{description}
There is, of course, no guarantee that every problem would fit into one of these two categories.  Fortunately, the relatively simple strategy suggested by the ideal factorization works quite well---in the sense of improving efficiency in marginal inference---even outside the ideal factorization case.  

A notion that will prove to be quite useful in what follows is {\em profiling}.  If $\theta \mapsto p_\theta(y) = p_{Y|\phi,\lambda}(y)$ is the likelihood function for the pair $\theta=(\phi,\lambda)$ based on data $Y=y$, then the {\em profile likelihood function} for the interest parameter $\phi$ is determined by maximizing over the nuisance parameter $\lambda$ for fixed $\phi$; that is, the profile likelihood is 
\[ \phi \mapsto \sup_{\lambda \in \LL} p_{Y|\phi,\lambda}(y), \quad \phi \in \FF. \]
Of course, the profile is not a genuine likelihood in the sense that it doesn't correspond to a density in $y$ that's being treated as a function of $\phi$.  But it does capture the property that a likelihood function is supposed to have, namely, that it provides a meaningful ranking of the $\phi$ values in terms of how well they explain the data $y$; it does so in a very optimistic way, i.e., assigning to $\phi$ the rank corresponding to $(\phi, \hat\lambda_y(\phi))$ with $\hat\lambda_y(\phi)$ the ``best'' companion to $\phi$ for the given $y$, the so-called conditional maximum likelihood estimator.  Below I'll show that the profile likelihood is a powerful tool to help guide efficient marginal inference on $\Phi$ in a wide range of problems.  

Note, however, that despite the wide range of problems in which profiling will lead to efficient marginal inference, there are cases in which profiling can be quite inefficient.  Fortunately, those problematic cases have a feature in common that we can easily spot and then modify our approach accordingly.

\section{Marginal possibilistic IMs}
\label{S:nuisance}

\subsection{Naive solution}
\label{SS:naive}

Since the IM framework returns a data-dependent, imprecise probability $(\lPi_y, \uPi_y)$, one has the option to carry out marginalization simply using the available imprecise probability calculus.  This is exactly how the Bayesian framework proceed: get a posterior distribution for the full parameter $\Theta$, then marginalizing using the ordinary probability calculus/integration.  Of course, the imprecision baked into the IM output changes the technical details (of the probability calculus), but not the intuition.  

In the imprecise probability literature, the notion of taking an uncertainty quantification about some quantity, say $\Theta$, and mapping it to an uncertainty quantification about a different quantity, say $\Phi$, is often referred to as {\em extension}.  It's arguably somewhat of a misnomer to refer to marginalization as ``extension''---since, in this case, $\Phi=f(\Theta)$ is actually contained in $\Theta$---but, nevertheless, the natural/naive approach to marginalization is to apply the available extension techniques.  Since the IM output is a necessity--possibility measure pair, the appropriate extension is based on the so-called {\em extension principle} of \citet{zadeh1975a, zadeh1978}; see, also, \citet[][Sec.~3.2.3]{hose2022thesis}.  In particular, the extension principle applied to the present marginalization task produces a marginal possibilistic IM that's determined by the contour function 
\begin{equation}
\label{eq:naive.contour}
\pi_y^f(\phi) = \sup_{\theta: f(\theta)=\phi} \pi_y(\theta), \quad \phi \in \FF. 
\end{equation}
Then the corresponding upper probability $\uPi_y^f$ for quantifying uncertainty about $\Phi$ is
\[ \uPi_y^f(B) = \sup_{\phi \in B} \pi_y^f(\phi), \quad B \subseteq \FF. \]
The lower probability is defined via \eqref{eq:dual}.  To see that this is completely consistent with the possibilistic IM for $\Theta$, we can elaborate the right-hand side above as 
\[ \uPi_y^f(B) = \sup_{\phi \in B} \pi_y^f(\phi) = \sup_{\theta: f(\theta) \in B} \pi_y(\theta) = \uPi_y\{ f^{-1}(B) \}, \quad B \subseteq \FF. \]
That is, the marginal IM for $\Phi$ is obtained by suitably mapping the original IM for $\Theta$ via the function $f$ almost exactly as in the familiar probability calculus: the key point is 
\[ \phi \in B \iff f(\theta) \in B \iff \theta \in f^{-1}(B). \]
So then the (imprecise) probability of the left-most assertion should equal the (imprecise) probability of the right-most assertion.  The only difference here compared to the more familiar probability calculus is that the appropriate operation is optimization of the possibility contour rather than integration of the probability density.  

The possibility calculus is particularly well-suited for preserving the original IM's statistical properties through the marginalization process.  Next is a result that demonstrates the above-defined marginal IM for $\Phi=f(\Theta)$ remains (strongly) valid.  Consequently, the marginal set estimator 
\[ C_\alpha^f(Y) = \{ \phi: \pi_Y^f(\phi) > \alpha \}, \quad \alpha \in [0,1], \]
is a nominal $100(1-\alpha)$\% confidence set in the sense described above.  

\begin{cor}
The marginal IM for $\Phi=f(\Theta)$ derived from the IM for $\Theta$ via the extension principle is (strongly) valid in the sense that 
\[ \uprob_{Y,\Theta}\bigl\{ \pi_Y^f\bigl( f(\Theta) \bigr) \leq \alpha \bigr\} \leq \alpha, \quad \alpha \in [0,1]. \]
\end{cor}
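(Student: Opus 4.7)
The plan is to reduce the marginal validity claim to the original validity theorem (Theorem~\ref{thm:valid}) by a simple set-containment argument, using the defining property of the extension principle together with monotonicity of the upper probability.

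First I would observe the key pointwise inequality: for any $\theta \in \TT$, the value $\phi = f(\theta)$ lies in the set over which the supremum in \eqref{eq:naive.contour} is taken, so
\[
\pi_y^f\bigl(f(\theta)\bigr) \;=\; \sup_{\theta': f(\theta')=f(\theta)} \pi_y(\theta') \;\geq\; \pi_y(\theta), \qquad \theta \in \TT.
\]
This is the crucial (and essentially only substantive) step, and it holds by construction without any measurability or regularity assumption. It reflects the intuitive fact that ``extending'' a contour via supremum over fibers of $f$ can only make contour values larger, not smaller.

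From this pointwise inequality I immediately get the set inclusion
\[
\bigl\{ (y,\theta) : \pi_y^f\bigl(f(\theta)\bigr) \leq \alpha \bigr\} \;\subseteq\; \bigl\{ (y,\theta) : \pi_y(\theta) \leq \alpha \bigr\}, \qquad \alpha \in [0,1].
\]
Since $\uprob_{Y,\Theta}$ is a coherent upper probability, it is monotone with respect to set inclusion (a standard consequence of coherence, already noted in the recap when discussing 2-monotonicity of $\lprob_\Theta$ and its dual $\uprob_\Theta$). Applying monotonicity to the above inclusion yields
\[
\uprob_{Y,\Theta}\bigl\{ \pi_Y^f\bigl(f(\Theta)\bigr) \leq \alpha \bigr\} \;\leq\; \uprob_{Y,\Theta}\bigl\{ \pi_Y(\Theta) \leq \alpha \bigr\}.
\]

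Finally, Theorem~\ref{thm:valid} bounds the right-hand side by $\alpha$, giving the desired conclusion. There is really no serious obstacle here: once the pointwise inequality $\pi_y^f(f(\theta)) \geq \pi_y(\theta)$ is spotted, everything else is formal. The only thing to be slightly careful about is invoking monotonicity of $\uprob_{Y,\Theta}$ rather than additivity (which would not be available in the imprecise setting); but monotonicity holds for any coherent upper probability, so no extra hypothesis is required beyond what has already been assumed in Part~II.
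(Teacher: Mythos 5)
Your proof is correct and follows essentially the same route as the paper: the pointwise inequality $\pi_y^f(f(\theta)) \geq \pi_y(\theta)$ coming from the supremum in \eqref{eq:naive.contour}, followed by monotonicity of $\uprob_{Y,\Theta}$ and Theorem~\ref{thm:valid}. (Note that the paper's one-line proof writes this inequality with the direction reversed, $\pi_Y^f(f(\Theta)) \leq \pi_Y(\Theta)$, which appears to be a typo; your version with $\geq$ is the one that actually makes the set-inclusion argument work.)
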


\begin{proof}
Follows immediately from Theorem~\ref{thm:valid} and the fact that $\pi_Y^f(f(\Theta)) \leq \pi_Y(\Theta)$, which follows from the definition of $\pi_y^f$ as a supremum in \eqref{eq:naive.contour}. 
\end{proof}

It may help to consider the special case of vacuous prior information about $\Theta$, where the above result can be compared to more familiar results in the (non-Bayesian) statistics literature.  Suppose, I'm in possession of a $100(1-\alpha)$\% confidence set for $\Theta$, say $C_\alpha(Y)$, and I want a corresponding confidence set for $\Phi=f(\Theta)$---how should I proceed?  Naturally, I'd just map $C_\alpha(Y)$ to a subset of $\FF$ via the mapping $f$, i.e., 
\[ C_\alpha^f(Y) = f\{ C_\alpha(Y) \} = \{ \phi: \text{$\phi = f(\theta)$ for some $\theta \in C_\alpha(Y)$}\}. \]
In light of the duality between confidence sets and tests of significance, there exists a p-value function, say, $\theta \mapsto \varpi_y(\theta)$ such that 
\[ C_\alpha(Y) = \{\theta: \varpi_Y(\theta) > \alpha\} \]
and, therefore, 
\begin{align*}
C_\alpha^f(Y) & = \{\phi: \text{$\phi=f(\theta)$ for some $\theta$ with $\varpi_Y(\theta) > \alpha$}\} \\
& = \Bigl\{ \phi: \sup_{\theta: f(\theta)=\phi} \varpi_Y(\theta) > \alpha \Bigr\}. 
\end{align*}
So, the notion of marginalization via optimization is completely natural and done by statisticians without second thought.  The reason is that it perfectly aligns with the preservation of desirable statistical properties, such as coverage probability guarantees.  

If a valid marginal IM for $\Phi = f(\Theta)$ is available for any $f$, then what's left to do?  Isn't the problem of eliminating nuisance parameters settled?  The only concern with the above (naive) IM solution is that it generally will fall short in the sense of efficiency compared to what's possible for any specific $f$.  To better understand this notion of efficiency, let's consider a simple example involving iid data from a normal model where $\Theta$ denotes the uncertain mean and standard deviation.  If $n=10$ and the observed sample mean and standard deviation are 0 and 1, respectively, then the IM's joint contour function for the pair, based on a vacuous prior is displayed in Panel~(a) of Figure~\ref{fig:normal.marg.compare}; this is the same plot shown in Figure~11 of Part~II.  Panel~(b) displays the naive marginal IM's contour function (red) for $\Phi = \text{mean}$ based on the extension principle and it looks as one would expect.  It also displays the more efficient marginal IM's contour (black) as described below.  Note that the two curves are symmetric around the same point, the sample mean, but the latter vanishes much more rapidly.  It's this faster decay that makes the latter marginal IM solution more efficient than the former.  The primary reason for this lack of efficiency is as follows: the solution based on the extension principle must produce a valid marginal IM for any choice of feature mapping $f$, so it necessarily can't be tailored toward efficient marginal inference concerning any specific feature.  

\begin{figure}[t]
\begin{center}
\subfigure[Joint contour for $\Theta=(\text{mean, sd})$]{\scalebox{0.6}{\includegraphics{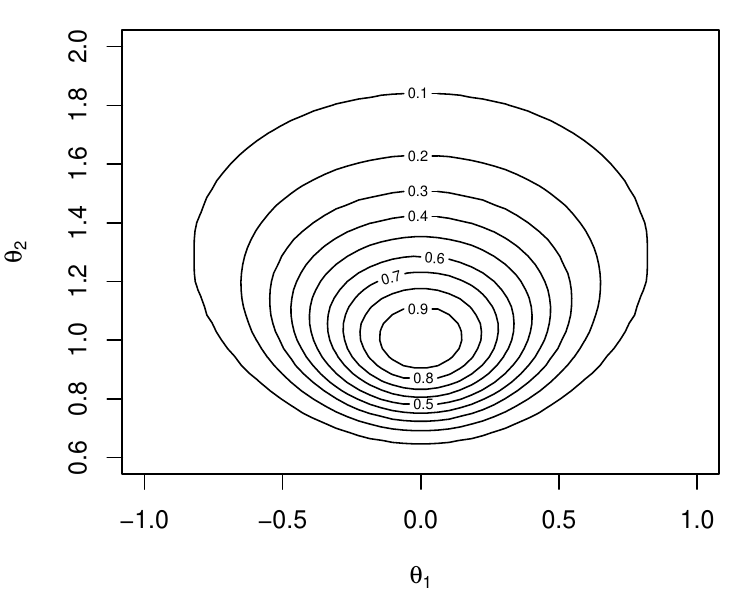}}}
\subfigure[Marginal contours for $\Phi=\text{mean}$]{\scalebox{0.6}{\includegraphics{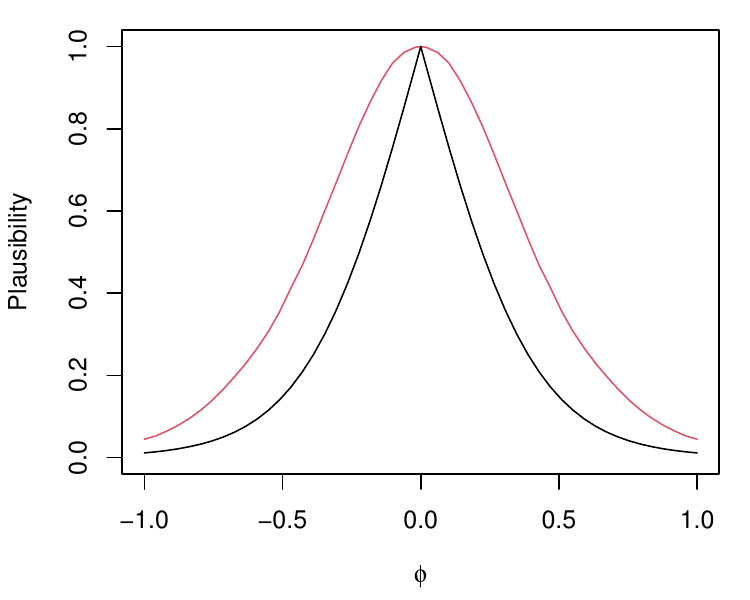}}}
\end{center}
\caption{Panel~(a) shows the joint contour function for $\Theta$, the mean and standard deviation of the normal model; same as Figure~11 in Part~II.  Panel~(b) shows the marginal contour for $\Phi=\text{mean}$ based on the naive extension principle (red) and the more efficient marginalization strategy (black) described in Section~\ref{SS:more}.}
\label{fig:normal.marg.compare}
\end{figure}

To me, the benefit of the naive marginal IM solution is that it's simple, i.e., it doesn't require any model-specific considerations, just solving an optimization problem.  This can be useful in exploratory situations where the inferential target isn't or hasn't yet been determined, or perhaps if the original IM solution has already been carried out and all I have access to is the resulting contour function.  But in light of its inefficiencies, it's necessary to push for more efficient marginal IMs.

\subsection{More efficient solutions}
\label{SS:more}

Before jumping into some new details, it may help to revisit marginalization in the Bayesian context.  While it's true that Bayesian marginalization is at least conceptually straightforward---application of the ordinary probability calculus---this doesn't necessarily ``work'' from a statistical point of view.  The situations that I have in mind here are those where little or no prior information is available about $\Theta$, so the Bayesian is required to proceed with the choice of a default prior distribution.  In such cases, what justifies the default-prior Bayes solution is that the derived procedures, e.g., credible sets, have good statistical properties.  It's known, however, that a suitable default prior for $\Theta$ might lead to a marginal posterior for $\Phi=f(\Theta)$ that has poor statistical properties; a classical example of this is \citet{stein1959}, but see, also, \citet{fraser2011}.  To prevent such cases, the Bayesian is forced to choose a default prior for $\Theta$ that's designed for the specific choice of $f$ so that the corresponding marginal posterior for $\Phi=f(\Theta)$ has the aforementioned desired statistical properties; see, for example, \citet{jeffreys1946}, \citet{tibshirani1989}, \citet{berger.bernardo.1992a}, \citet{berger.liseo.wolpert.1999}, \citet{bergerbernardosun2009}, \citet{liu.etal.reference.2014}, and others.  The point is that {\em reliable} marginalization often doesn't follow simply from the framework's marginalization calculus; instead, it requires careful consideration of the specific inferential target.  What I present below is in this vein, but far less nebulous and convoluted than defining and constructing statistically-suitable default priors. 

As discussed in Part~II, efficiency gains are a consequence of reducing the complexity of the Choquet integral---the upper probability with respect to ``$\uprob_{Y,\Theta}$''---that defines the IM's contour.  My {\em Principle of Minimum Complexity} says that, for the sake of efficiency, drop the complexity as much as possible, and this is typically achieved by reducing the dimension of the variables being integrated over.  Towards achieving this dimension/complexity reduction, one can leverage special structure in the model's likelihood function, in particular, the various forms of factorization discussed in Section~\ref{SS:factor} above.  

To start, I'll ignore any available partial prior information and just focus on the model/data; the partial prior will come back into the picture shortly.  The profile relative likelihood \citep[e.g.,][]{kalb.sprott.1970, murphy.vaart.2000, maclaren.profile} offers a natural data-dependent plausibility order exclusively on the interest parameter space $\FF$; that is, if $R(y,\phi) > R(y,\phi')$, where 
\[ R(y,\phi) = \frac{\sup_{\lambda \in \LL} p_{Y|\phi,\lambda}(y)}{\sup_{\varphi \in \FF, \lambda \in \LL} p_{Y|\varphi,\lambda}(y)}, \quad \phi \in \FF, \]
then the value $\phi$ is understood as being ``more compatible'' with data $y$ than the value $\phi'$.  The use of the relative profile likelihood also aligns with the principles in Part~II that justified the likelihood-based IM construction.  The key point is that, by removing the direct dependence on $\lambda$ in the plausibility ordering, an opportunity is created for the $\Lambda$ dimension in the Choquet integral calculation that defines the IM's contour to collapse, leading to improved efficiency.  Removing direct dependence on the nuisance parameters to create an opportunity for dimension reduction is a recurring theme in this paper. 

This begs the question: how can this opportunity for dimension reduction and/or efficiency gain be realized?  When the likelihood function factors as in Section~\ref{SS:factor}, the effective dimension drops because terms in the relative likelihood cancel out.  First, in the ``Ideal factorization'' case, suppose the data decomposes as $y \mapsto \{U(y), V(y)\}$ and $U$ is P-sufficient for $\Phi$.  Then the relative profile likelihood easily simplifies to 
\[ R(y,\phi) = \frac{p_{U|\phi}(u)}{\sup_{\varphi \in \FF} p_{U|\varphi}(u)}, \quad \phi \in \FF, \quad u = U(y), \]
and note that the right-hand side depends on data $y$ only through the value $u$ of the statistic $U(y)$; I'll abuse notation and write this as $R(u,\phi)$.  Since $U(y)$ is lower-dimensional than $y$ itself, we have effectively achieved a dimension reduction.  Moreover, if there's no interest in $\Lambda$, then there's no utility in fleshing out partial prior information for $\Lambda$---it's safe to focus on {\em a priori} uncertainty quantification about $\Phi$.  The necessary function is $q_\Phi(\phi) = \uprior(\{\phi\} \times \LL)$, and this can be combined directly with the (simplified) relative profile likelihood to get 
\[ R_q(u, \phi) = \frac{R(u,\phi) \, q_\Phi(\phi)}{\sup_{\varphi \in \FF} \{ R(u,\varphi) \, q_\Phi(\varphi)\}}, \quad \phi \in \FF, \quad u = U(y). \]
The critical observation is that, in the calculation of the IM's contour function, the (Choquet) integration over the $V$ and $\Lambda$ dimensions disappears:
\begin{align*}
\pi_y(\phi) & = \prob_{Y,\Theta}\{ R_q(U(Y), \Phi(\Theta)) \leq R_q(U(y), \phi) \} \\
& = \uprob_{U,\Phi}\{ R_q(U, \Phi) \leq R(U(y),\phi) \}, \quad \phi \in \FF. 
\end{align*}
Note that the far left-hand side, $\pi_y(\phi)$, depends on $y$ only through the value $u$ of $U(y)$, so, with my usual abuse of notation, I'll denote this instead $\pi_{U(y)}(\phi)$.  The collapsing in the dimension happens because the relative profile likelihood only depends on the random variable $U(Y)$, and its distribution depends only on the uncertain $\Phi$, not on $\Lambda$.  More formally, since $Y=\{U(Y),V(Y)\}$ and $\Theta=(\Phi,\Lambda)$, 
\begin{align*}
\pi_{U(y)}(\phi) & = \uprob_{Y,\Theta}\{ R_q(U(Y), \Phi(\Theta)) \leq R_q(U(y), \phi) \} \\
& = \uprob_{U,V,\Phi,\Lambda}\{ R_q(U(Y), \Phi) \leq R_q(U(y), \phi) \} \\
& = \sup_{\prior_{\Phi,\Lambda}} \int \prob_{U,V|\varphi,\lambda}\{ R_q(U,\varphi) \leq R_q(u,\phi) \} \, \prior_{\Phi,\Lambda}(d\varphi, d\lambda) \\
& = \sup_{\prior_\Phi} \int \prob_{U|\varphi}\{ R_q(U,\varphi) \leq R_q(u,\phi) \} \, \prior_\Phi(d\varphi) \\
& = \uprob_{U,\Phi}\{ R_q(U,\Phi) \leq R_q(u,\phi) \}.
\end{align*}
The suprema above are over all the joint and marginal priors in the respective credal sets.  I'll discuss below how, at least in some cases, further dimension reduction is possible.  A couple more points deserve note.  First, (strong) validity still holds---the reason is that the same collapsing ``$\uprob_{Y,\Theta} \searrow \uprob_{U,\Phi}$'' occurs when (upper) probabilities concerning $(Y,\Theta) \mapsto \pi_{U(Y)}(\Phi)$, so it's as if the problem originated with the marginal model for $U(Y)$ depending only on the uncertain $\Phi$ and I constructed the IM solution from there.  Second, the above reduction happens automatically without any direct intervention from the data analyst.  That is, even if one doesn't recognize that the $(V,\Lambda)$ dimensions can be collapsed, they get collapsed anyway and, consequently, the results obtained end up the same either way.  So, the efficiency gain that occurs from recognizing this particular opportunity for dimension reduction is computational, not statistical.  

Sticking with the ``Ideal factorization'' case, with a decomposition $y \mapsto \{U(y),V(y)\}$, now suppose that $U$ is S-ancillary for $\Phi$.  Then the relative profile likelihood reduces to 
\[ R(y,\phi) = \frac{p_{V|u,\phi}(v)}{\sup_{\varphi \in \FF} p_{V|u,\varphi}(v)}, \quad \phi \in \FF, \quad (u,v) = \{U(y),V(y)\}, \]
and note that the right-hand side (basically) only depends on the data $y$ through the value $v$ of the statistic $V(y)$.  I say ``basically'' because, since the dependence is through a conditional likelihood, there's an opportunity to take the value $u$ of $U(y)$ as {\em fixed}, thereby making the effective dimension that of $v$.  Towards this, let me write the left-hand side above as $R(v,\phi \mid u)$.  With the partial prior for $\Phi$ described by $q_\Phi$, define 
\[ R_q(v, \phi \mid u) := \frac{R(v,\phi \mid u) \, q_\Phi(\phi)}{\sup_{\varphi \in \FF} \{ R(v,\varphi \mid u) \, q_\Phi(\varphi)\}}, \quad \phi \in \FF. \]
The appearance of the word ``ancillary'' and the notation I've introduced suggests a strategy wherein the observed value $u$ of $U(y)$ is conditioned on: 
\begin{align}
\pi_y(\phi) & = \pi_{v|u}(\phi) \notag \\
& = \uprob_{V,\Phi|u}\{ R_q(V, \Phi \mid u) \leq R_q(v, \phi \mid u) \} \notag \\
& = \sup_{\prior_\Phi} \int \prob_{V|u,\varphi}\{ R_q(V, \varphi \mid u) \leq R_q(v, \phi \mid u) \} \, \prior_\Phi(d\varphi), \quad \phi \in \FF. \label{eq:contour.conditional} 
\end{align}
Similar to the P-sufficient case above, there is a reduction in dimension because the (Choquet) integration over the $U$ and $\Lambda$ spaces collapses.  But note that here in the S-ancillary case, the user would have to intervene---to carry out the ``condition on $U=u$'' step manually---it doesn't happen automatically like in the P-sufficient case above.  There are benefits to be enjoyed as a result of this careful intervention, however.  First, just like in Section~6.1 of Part~II, this conditioning preserved validity.  Second, there is a computational efficiency gain resulting from the dimension reduction.  Finally, since the observed value of $U(Y)$ often can be interpreted as a sort of ``informativeness index,'' by conditioning on this value, the inference is adaptive in the sense that improved efficiency is achieved if it's warranted by the data in hand.  

Beyond the ideal factorization confines, it's far less obvious how to proceed.  In the less-than-ideal factorization cases described above, and even more generally, the profiling strategy can still be carried out.  Consider first the $\Phi$-oriented case, where the relative profile likelihood function is 
\[ R_q(y,\phi) = \frac{p_{U|\phi}(u) \, \sup_{\lambda \in \LL} \{ p_{V|u,\phi,\lambda}(v) \, q_{\Phi,\Lambda}(\phi,\lambda)\}}{\sup_{\varphi \in \FF} \sup_{\lambda \in \LL} \{ p_{U|\varphi}(u) \, p_{V|u,\varphi,\lambda}(v) \, q_{\Phi,\Lambda}(\varphi,\lambda)\}}, \quad \phi \in \FF, \]
where $(u,v)$ is the observed value of $\{U(y), V(y)\}$.  Unfortunately, none of the terms cancel, but what really matters is that the right-hand side doesn't depend directly on the nuisance parameter value.  So the profiling strategy is still viable, but it's clearly not the only option.  In fact, one might have good reason (e.g., Section~\ref{SS:caution} below) to simply ignore the term that involves both $(\phi,\lambda)$, and work with a different kind of plausibility order, based on the relative marginal likelihood:
\[ R_q(y,\phi) = \frac{p_{U|\phi}(u) \, q_\Phi(\phi)}{\sup_{\varphi \in \FF} p_{U|\varphi}(u) \, q_\Phi(\varphi)}, \quad \phi \in \FF. \]
This boils down to ignoring some relevant information about $\Phi$, but in exchange for added simplicity and perhaps greater efficiency.  Indeed, working just with the marginal distribution of the statistic $U$ completely eliminates $V$ and $\Lambda$, significantly simplifying the IM construction.  The same points apply to the specific-sufficient case under the less-than-ideal factorization umbrella.  There, the relative profile likelihood is
\[ R_q(y,\phi) = \frac{p_{V|u,\phi}(v) \, \sup_{\lambda \in \LL} \{ p_{U|\phi,\lambda}(u) \, q_{\Phi,\Lambda}(\phi,\lambda)\}}{\sup_{\varphi \in \FF} \sup_{\lambda \in \LL} \{ p_{V|u,\varphi}(v) \, p_{U|\varphi,\lambda}(u) \, q_{\Phi,\Lambda}(\varphi,\lambda)\}}, \quad \phi \in \FF, \]
where $(u,v)$ is the observed value of $\{U(y), V(y)\}$.  Again, there are no simplifications that can be made here, but the profiling strategy can still be applied.  But the user might---for good reason---opt to ignore the information about $\Phi$ in the marginal distribution of $U$ and use simply the relative conditional likelihood 
\[ R_q(y,\phi) = \frac{p_{V|u,\phi}(v) \, q_\Phi(\phi)}{\sup_{\varphi \in \FF} p_{V|u,\varphi}(u) \, q_\Phi(\varphi)}, \quad \phi \in \FF. \]
As above, such a decision offers substantial simplification compared to profiling, along with potential efficiency gains in certain cases.  

More generally, there will be no obvious factorizations that can be made to the likelihood, neither ideal nor less-than-ideal.  One can, of course, still use the profiling strategy described above, and strong validity still holds.  In my experience, which I'll share in the examples below, profiling is quite reliable and often leads to an optimal/most efficient solution; but there are cases in which better efficiency can be achieved by working with the relative marginal likelihood instead.  So, while my go-to nuisance parameter elimination strategy is profiling, it's important to keep in mind that efficient marginal inference is an extremely challenging problem and one can't expect a single strategy to work best uniformly over all problems.  Therefore, I'm open to finding creative alternative---perhaps not likelihood-based---plausibility orderings on a case-by-case basis; such creativity is probably necessary for efficient inference in non-parametric problems (Section~\ref{S:beyond}).

\subsection{Further efficiency gains}

The above discussion focused on how to collapse the dimensions in the Choquet integration directly related to the nuisance parameter---both in the $\Lambda$ dimension and in that of the statistic carrying information relevant to $\Lambda$.  Depending on the form of the partial prior information, there may be extra opportunities to reduce the dimension further.  This, again, relies on my {\em Principle of Minimum Complexity}, and I discuss this idea at length in Part~II; so I'll only give a brief explanation here. 

In the fully vacuous prior case, validity is equivalent to 
\[ \sup_{\lambda \in \LL} \prob_{Y|\phi,\lambda}\{ \pi_Y(\phi) \leq \alpha \} \leq \alpha, \quad \phi \in \FF, \quad \alpha \in [0,1]. \]
So it's enough to take the contour function as 
\begin{equation}
\label{eq:contour.vacuous}
\pi_y(\phi) = \sup_{\lambda \in \LL} \prob_{Y|\phi,\lambda}\{ R(Y,\phi) \leq R(y,\phi)\}, \quad \phi \in \FF. 
\end{equation}
The point is that the value of $\phi$ is taken as fixed, so there's no (Choquet) integration over $\phi$ needed.  In the ideal factorization cases above, e.g., the P-sufficient case, the plausibility ordering was expressed exclusively in terms of the statistic $U$ whose sampling distribution doesn't depend on the nuisance parameter.  In such a case, the $\Lambda$ dimension collapses too, so the above display reduces to 
\[ \pi_y(\phi) = \prob_{U|\phi}\{ R(U,\phi) \leq R(U(y), \phi) \}, \quad \phi \in \FF, \]
i.e., all that's required is (ordinary) integration over the $U$ space.  Similar extra reduction is possible in the S-ancillary case as well.  

In the examples follow, I'll focus primarily on the vacuous prior case and, therefore, I'll use the contour function in \eqref{eq:contour.vacuous} or, whenever possible, its no-$\lambda$ version.  The opposite extreme case, when there's a complete (precise) prior distribution for $(\Phi,\Lambda)$, there is a similarly extreme dimension reduction that can be achieved.  Roughly, one can collapse the integration over the entire $Y$-space, and the IM contour is obtained by (ordinary) integration with respect to the usual conditional distribution of $(\Phi,\Lambda)$, given $Y=y$.  Applications where complete prior information is available are quite rare, so these details are really only of theoretical interest.  For the general partial prior case, which is far more common in application, I'm not aware of a one-size-fits-all dimension reduction strategy and, if such a strategy exists and, if so, what does it look like are important open questions.  In the examples below where I incorporate partial prior information, I use the basic Choquet integral formula with no extra dimension reduction tricks.

\subsection{First examples}
\label{SS:examples.nuisance}

\begin{ex}[Multinomial]
As a first and relatively simple example, consider a random sample of size $n$ from a population having three distinct categories; let $Y = (Y_1, Y_2, Y_3)$ denote the corresponding vector of counts, with $Y_k$ the number of category-$k$ observations, for $k=1,2,3$, so that $Y_1 + Y_2 + Y_3 = n$.  Let $\prob_{Y|\theta}$ denote a multinomial model where $\theta=(\theta_1, \theta_2, \theta_3)$ is such that $\theta_k \geq 0$ is the category-$k$ probability and $\theta_1 + \theta_2 + \theta_3 = 1$.  Write $\Theta$ for the uncertain value, with components $\Theta_k$, $k=1,2,3$. 

Suppose interest is in $\Phi = \Theta_1$.  Thanks to the probability vector constraints, $\Theta$ is equivalent to $(\Phi, \Lambda)$ where $\Lambda = \Theta_2 / (1 - \Theta_1)$ is the conditional probability of category 2 given categories 2 or 3.  Similarly, $Y$ is equivalent to the pair $(U,V)$, where $U=Y_1$ and $V=Y_2$.  Then it's easy to check that $U$ is P-sufficient for $\Phi$, so this example falls under the ``ideal factorization'' umbrella and the marginal IM construction is simple and leads to efficient inference on $\Phi$.  Indeed, the marginal distribution of $U$, given $\Phi=\phi$, is binomial with parameters $n$ and $\phi$, and so the marginal IM for $\Phi$ is, after this reduction, exactly like that given in Example~1 of Part~II.  
\end{ex}

\begin{ex}[Two binomial counts]
\label{ex:log.odds}
Let $Y=(Y_1,Y_2)$ denote two independent binomial counts, with $(Y_1 \mid \Theta_1=\theta_1) \sim \bin(n_1, \theta_1)$ and $(Y_2 \mid \Theta_2=\theta_2) \sim \bin(n_2,\theta_2)$, where $n=(n_1,n_2)$ is known but $\Theta=(\Theta_1,\Theta_2)$ is unknown.  This setup is common in, e.g., clinical trails, where $Y_1$ and $Y_2$ correspond to the number of events observed in the control and treatment groups, respectively.  A relevant feature of $\Theta$ is the log odds ratio
\[ \Phi = \log\Bigl( \frac{\Theta_2}{1-\Theta_2} \div \frac{\Theta_1}{1-\Theta_1} \Bigr). \]
It's well known that the conditional distribution of $V=Y_2$, given $U=Y_1 + Y_2$ and $\Phi=\phi$, has a mass function 
\[ p_{V|u,\phi}(v) \propto \binom{n_2}{v} \binom{n_1}{u - v} \, e^{\theta v}, \quad v=\max(u-n_1, 0),\ldots,\min(n_2,u), \]
where $\phi$ is the log-odds ratio corresponding to $\theta$.  The key observation is that this conditional distribution only depends on $\phi$, not on any other features of $\theta$.  So this example is one where $U$ is S-ancillary and so the marginal IM solution proceeds by conditioning on the observed value $u=$ of $U=Y_1+Y_2$.  That is, the marginal IM's contour function is $\pi_y(\phi) = \pi_{v|u}(\phi)$ as in \eqref{eq:contour.conditional}.   

For illustration, I consider two mortality data sets presented in Table~1 of \citet{normand1999}, namely, Trials~1 and 6.  Two plausibility contours for $\Phi$ are shown in each of the two panels in Figure~\ref{fig:or}: one for a vacuous prior and one for a partial prior.  The partial prior for $\Phi$ I'm considering here is the so-called {\em Markov prior}, like in Example~2 of Part~II, that represents the entire collection precise prior distributions with $\E|\Phi| \leq 1$; the contour is given by 
\[ q_\Phi(\phi) = 1 \wedge |\phi|^{-1}, \quad \phi \in \RR. \]
The estimated log-odds ratios are 0.83 and 0.99 for Trial~1 and 6, respectively, but both cases are challenging thanks to the number of events being relatively small.  Trial~6 is an overall larger study, so the plausibility contour is much more tightly concentrated than that for Trial~1.  For Trial~1, the data is quite compatible with the partial prior information so, as expected, there's a non-trivial gain in efficiency when comparing the partial-prior IM to the vacuous-prior IM.  For Trial~6, on the other hand, the data and prior are still compatible, but now the data is much more informative, so the difference in efficiency isn't as wide.  The same data was analyzed in \citet{hannig.xie.2012} and \citet{gim} but the results look quite different compared to those presented here in Figure~\ref{fig:or}.  This is because the former reference makes certain adjustments to effectively eliminate the discreteness of the data, while the latter does something very similar to the vacuous-prior IM solution here, just is less efficient due to its failure to make full use of the relative conditional likelihood. 
\end{ex}

\begin{figure}[t]
\begin{center}
\subfigure[Trial~1: $Y=(1,2)$, $n=(43, 39)$]{\scalebox{0.55}{\includegraphics{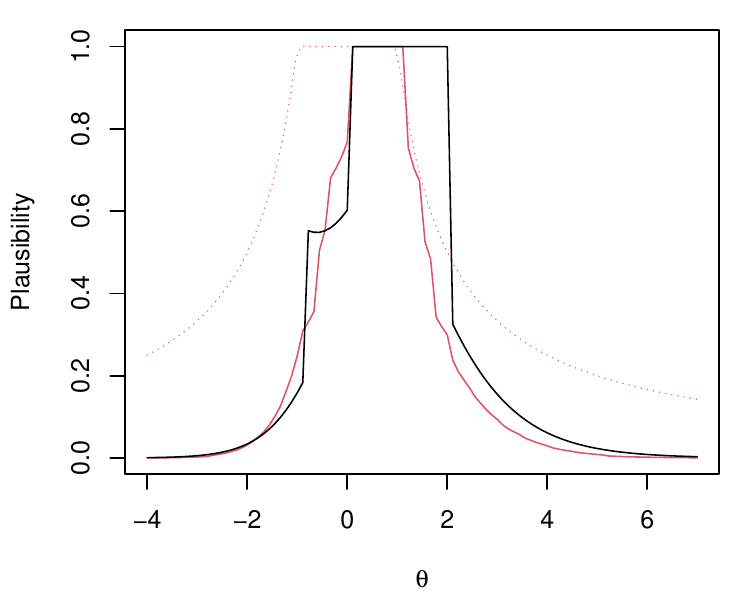}}}
\subfigure[Trial~6: $Y=(4,11)$, $n=(146, 154)$]{\scalebox{0.55}{\includegraphics{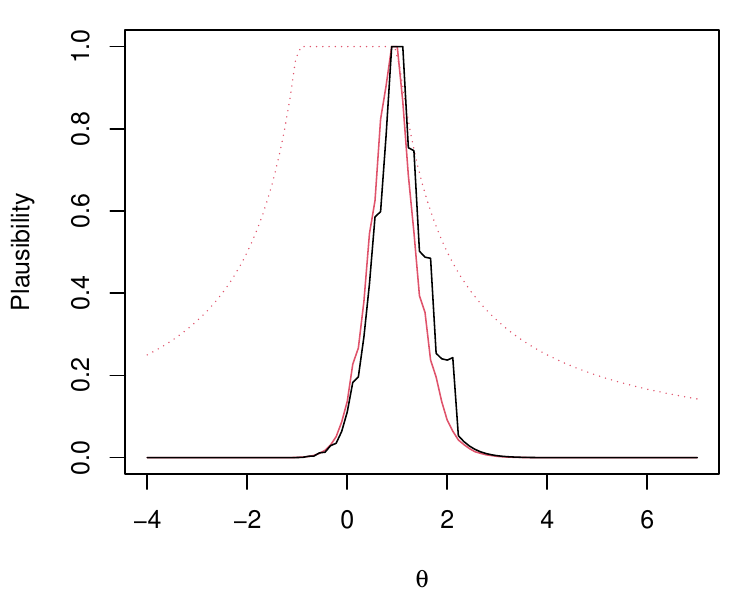}}}
\end{center}
\caption{Plausibility contours for the log odds ratio in two mortality data sets (Trial~1 and Trial~6) presented in Table~1 of \citet{normand1999}: vacuous prior (black), partial prior (red), and prior contour (red dotted).}
\label{fig:or}
\end{figure}

The reader might be surprised to learn that the often-simple normal model, considered next, is a less-than-ideal factorization case.  Fortunately, despite being less-than-ideal, there are still some nice features that make this example rather straightforward.  

\begin{ex}[Normal]
\label{ex:normal.1}
Let $Y=(Y_1,\ldots,Y_n)$ be an iid sample from a normal distribution with uncertain mean $\Theta_1$ and standard deviation $\Theta_2$.  For brevity in this illustration, I'll suppose the prior information about $\Theta=(\Theta_1,\Theta_2)$ is vacuous.  To start, suppose the mean $\Phi=\Theta_1$ is the interest parameter and $\Lambda=\Theta_2$ is the nuisance parameter.  As hinted at above, this example doesn't fall even under the ``less-than-ideal factorization'' umbrella.  In particular, there is no $\Phi$-oriented statistic---e.g., the distribution of the sample mean depends on both $\Theta_1$ and $\Theta_2$---and, therefore, no P-sufficient statistic.  But there's nothing stopping me from proceeding with the profiling strategy described above.  The relative profile likelihood is easy to get in this case, and it's given by 
\[ R(y,\phi) = \Bigl( \frac{\hat\lambda_y^2}{\hat\lambda_y^2(\phi)} \Bigr)^{n/2}, \]
where $\hat\lambda_y^2 = n^{-1} \sum_{i=1}^n (y_i - \bar y)^2$ is the maximum likelihood estimator of $\Lambda^2$ and $\hat\lambda_y^2(\phi) = n^{-1} \sum_{i=1}^n (y_i - \phi)^2$.  Some simple algebra reveals
\begin{align*}
R(Y,\phi) \leq R(y,\phi) & \iff \frac{\hat\lambda_Y^2}{\hat\lambda_Y^2(\phi)} \leq \frac{\hat\lambda_y^2}{\hat\lambda_y^2(\phi)} \\
& \iff \frac{\hat\lambda_Y^2(\phi)}{\hat\lambda_Y^2} \geq \frac{\hat\lambda_y^2(\phi)}{\hat\lambda_y^2} \\
& \iff \frac{n(\bar Y - \phi)^2}{\hat\lambda_Y^2} \geq \frac{n(\bar y - \phi)^2}{\hat\lambda_y^2}. 
\end{align*}
Under $\prob_{Y|\phi,\lambda}$, the random variable on the left-hand side is a pivot---distribution independent of both $\phi$ and $\lambda$---and, in particular, is distributed as ${\sf F}(1,n-1)$ or, equivalently, as the square of $T \sim {\sf t}(n-1)$.  Then the marginal IM contour is given by 
\begin{align*}
\pi_y(\phi) & = \prob\Bigl\{ T^2 \geq \frac{n(\bar y - \phi)^2}{\hat\lambda_y^2} \Bigr\} \\
& = 1 - \Bigl| 2 F_{n-1}\Bigl( \frac{n(\bar y - \phi)^2}{\hat\lambda_y^2} \Bigr) - 1 \Bigr|, \quad \phi \in \RR. 
\end{align*}
A plot of this curve was shown in Figure~\ref{fig:normal.marg.compare} to demonstrate the efficiency gains that are possible when the IM construction is tailored to the interest parameter.  This function is just the usual p-value for the Student-t test, and the marginal IM plausibility intervals for $\Phi$ are exactly the familiar textbook Student-t confidence intervals.  

Next, suppose that the standard deviation $\Phi = \Theta_2$ is of interest and the mean $\Lambda=\Theta_1$ is the nuisance parameter.  In this case, there exists $\Phi$-oriented statistics---including the maximum likelihood estimator---but \citet[][p.~280]{basu1977} argues that none is ``maximally'' $\Phi$-oriented.  This suggests two possible strategies to construct a marginal IM: one is to work with the marginal likelihood based on the maximum likelihood estimator, which is tied directly to a chi-square distribution, and another is to stick with the general profiling strategy and work out the details.  It turns out, however, that the two solutions are almost the same in this case.  Write $\hat\phi_y^2 = n^{-1} \sum_{i=1}^n (y_i - \bar y)^2$ for the maximum likelihood estimator of the variance $\Phi^2$.  Then the relative profile likelihood is easily shown to be 
\[ R^\text{pr}(y,\phi) \propto \Bigl( \frac{n\hat\phi_y^2}{\phi^2} \Bigr)^{n/2} \exp\Bigl(-\frac{n}{2} \frac{\hat\phi_y^2}{\phi^2} \Bigr), \quad \phi > 0. \]
Clearly, $R(Y,\phi)$ depends on $(Y,\phi)$ only through $n\hat\phi_Y^2/\phi^2$, which is a pivot (chi-square) when $Y$ is normal with variance $\phi$, so computation of the marginal IM contour is straightforward.  Similarly, the relative marginal likelihood is easily shown to be
\[ R^\text{ma}(y,\phi) \propto \Bigl( \frac{n\hat\phi_y^2}{\phi^2} \Bigr)^{(n-1)/2} \exp\Bigl(-\frac{n}{2} \frac{\hat\phi_y^2}{\phi^2} \Bigr), \quad \phi > 0. \]
The only difference between the two relative likelihoods is the power on the polynomial term, which is a negligible difference even for moderate $n$.  In terms of the corresponding marginal IM contours, this difference mainly only affects the location of its peak: for the profile likelihood the peak is at the maximum likelihood estimator and, for the marginal likelihood, the peak is at the sample variance.  Figure~\ref{fig:normal.marg.var}(a) shows the same joint contour for the normal mean and standard deviation---based on $n=10$---presented in Figure~\ref{fig:normal.marg.compare}(a) above; then Panel~(b) shows three different marginal IM contours for the standard deviation $\Phi$: one based on the naive marginalization via the extension principle and the two relative likelihood-based strategies above.  Note the efficiency gain in the two direct marginal IM constructions compared to that based on naive marginalization post-construction.  My preferred solution is that based on the profiling strategy, since the maximum likelihood estimator ought to be the most plausible.  

Of course, the above analysis carries over to many of the commonly used normal models, like in linear regression and analysis of variance, and the suitably constructed marginal IMs will reproduce the standard textbook results.  For example, the marginal IM for a subset of the regression coefficients in an ordinary linear model would correspond to the p-value function based on Hotelling's $T^2$ statistic and the F-distribution.  
\end{ex}

\begin{figure}[t]
\begin{center}
\subfigure[Joint contour for $\Theta=(\text{mean, sd})$]{\scalebox{0.6}{\includegraphics{twopar_normal}}}
\subfigure[Marginal contours for $\Phi=\text{sd}$]{\scalebox{0.6}{\includegraphics{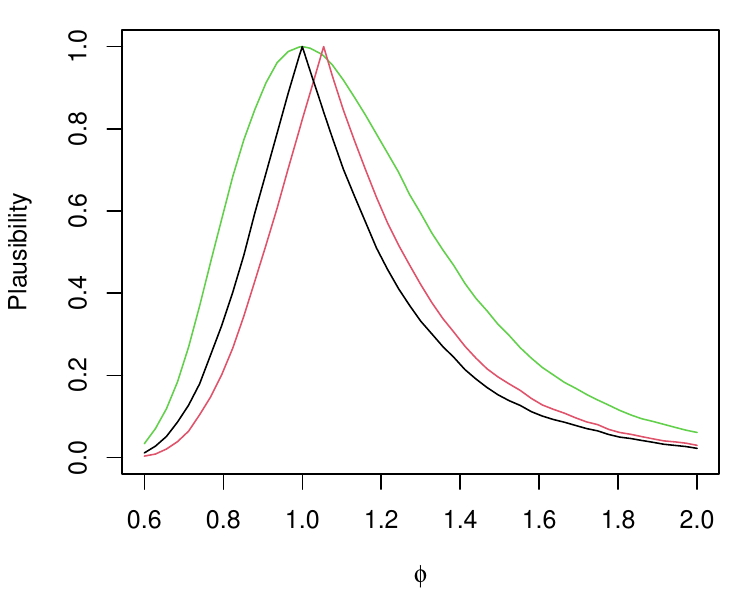}}}
\end{center}
\caption{Panel~(a) shows the joint contour function for $\Theta$, the mean and standard deviation of the normal model; same as in Figure~\ref{fig:normal.marg.compare}(a).  Panel~(b) shows the marginal contour for $\Phi=\text{sd}$ based on the naive extension principle (green) and the more efficient marginalization profile (black) and marginal (red) strategies.}
\label{fig:normal.marg.var}
\end{figure}

\begin{ex}[Fieller--Creasy]
\label{ex:fc}
The so-called Fieller--Creasy problem \citep{fieller1954, creasy1954}, in its simplest form, starts with an independent pair of observables, $(Y_i \mid \Theta_i=\theta_i) \sim \nm(\theta_i, 1)$, for $i=1,2$.  It's not important that the variances are equal (to 1), e.g., the $Y_i$'s could be averages based on different sample sizes, say; all that's effectively being assumed here is that the variances are known. This is a seemingly trivial problem: inference about an unknown normal mean vector with known variances is one of the few statistical problems that's actually {\em solved}.  What's unique about this example, and what makes it surprisingly challenging, is that interest is in the ratio $\Phi = \Theta_1 / \Theta_2$.  This example's fame---or infamy---originated from the two distinct fiducial solutions (one by Fieller and one by Creasy) which both appeared justified based on Fisher's reasoning; in the end, Fisher sided with Fieller's approach which, by the way, produces confidence intervals that attain the exact nominal coverage.  This example is also one of the simplest among those in the class of problems for which there exists no set estimator that has finite length (almost surely) and positive coverage probability \citep{gleser.hwang.1987}.  This implies that the usual strategies for constructing set estimators---such as ``estimate $\pm$ standard error'' or using quantiles of a marginal Bayesian or fiducial posterior distribution---that produce almost surely finite-length intervals, simply aren't going to work.  For these and perhaps other reasons, the late Sir D.~R.~Cox listed this as one of his ``challenge problems'' \citet{fraser.reid.lin.2018}.  More generally, even in relatively simple model, solutions can be quite challenging when the quantity of interest involves a ratio of the model parameters; see, e.g., the gamma mean problem in Example~\ref{ex:gamma.mean} below.

As this is just a marginal inference problem within a fairly standard model, it's worth to see how the proposed IM framework can handle this.  If we write $\Lambda = \Theta_2$, then we find that $\Theta_1 = \Phi \Lambda$ and we have a complete reparametrization.  The log-likelihood function, in this new parametrization, is quite simple:
\[ \ell_y(\phi, \lambda) = -\tfrac12(y_1 - \phi\lambda)^2 - \tfrac12 (y_2 - \lambda)^2. \]
This isn't an ``ideal factorization'' case, hence no P-sufficiency or S-ancillarity to guide us; we do find that $Y_2$ is $\Lambda$-oriented, but that doesn't help for inference on $\Phi$.  So I'll proceed here with the general recommendation to rely on the relative profile likelihood; for now, let's assume that the prior information available is vacuous, It's straightforward to identify the global maximum likelihood estimators of $\Phi$ and $\Lambda$, and the corresponding global maximum value of the likelihood function is a constant in both data and parameters, so can be ignored.  Concerning the profile likelihood, with $\phi$ fixed, it's not difficult to show that the maximum is attained at 
\[ \hat\lambda_y(\phi) = \frac{y_1 + \phi y_2}{1 + \phi^2}. \]
Then the log relative profile likelihood (ignoring additive constants) is 
\[ \log R(y,\phi) = -\frac12 \frac{(y_1 - \phi y_2)^2}{1 + \phi^2}, \quad \phi \in \RR. \]
The key observation is that $\log R(Y,\phi)$ is a pivot when $\Phi=\phi$, so the marginal IM contour for $\Phi$ is easy to get.  In fact, if $\Phi=\phi$, then $-2\log R(Y,\phi) \sim \chisq(1)$, so the contour has basically a closed-form expression:
\[ \pi_y(\phi) = 1 - G_1\{ -2\log R(y,\phi) \}, \quad \phi \in \RR, \]
where $G_d$ is the $\chisq(d)$ distribution function.  As in the general case, this determines a full marginal IM that can be used for reliable uncertainty quantification for $\Phi$.  It's also exactly the marginal IM solution produced in \citet{immarg} which, by the way, returns marginal IM plausibility regions that agree with Fieller's exact confidence intervals.  As this is special kind of problem with some unusual features, it's worth considering a quick illustration.  I'll focus here on a relatively ``weird'' case and, since this is easy to compute, leave experiments with other data sets to the reader.  Following \citet{schweder.hjort.2013}, suppose that the observed data is $y=(1.33, 0.33)$.  Since $y_2$ is relatively close to 0, we ought to be worried about instability affecting our inferences about the ratio $\Phi$.  Indeed, Figure~\ref{fig:fc} shows a plot of the marginal IM's contour function for $\Phi$ in this case, and the immediate observation is that, unlike in all the other examples in this paper, {\em the tails of the contour don't vanish as $\phi \to \pm \infty$}.  This implies that the marginal IM's plausibility region for $\Phi$ is unbounded, which is a necessary condition if it's to have non-zero coverage probability.  For other cases, where $y_2$ isn't too close to 0, the marginal IM contour looks similar to, e.g., the black line in Figure~\ref{fig:normal.marg.var}(b).  

That there's a class of examples for which the standard frequentist and Bayesian solutions fail is, to me at least, quite striking.  The reader might be wondering why the statistical community has swept this issue under the rug as opposed to confronting it head on.  The reason, I think, is that in this case at least it's easy to pinpoint the cause of the problematic behavior and explain it away.  Indeed, all the trouble identified by Gleser--Hwang is caused by the fact that there's nothing that prevents $\Lambda$ from being close to or equal to 0, and it's this ``divide by (near) 0'' that creates a singularity that leads to the non-existence of finite-length confidence sets.  So, all one has to do is say ``I don't think $\Lambda$ is close to 0'' and they've given themselves license to ignore the aforementioned issues.  My claim, however, is that if one genuinely doesn't believe $\Lambda$ is close to 0, then that (partial prior) information should be incorporated into the analysis from the beginning, both for the sake of transparency and for the opportunity to improve efficiency.  The problem, of course, is that the mainstream schools of statistical thought have no way to accommodate partial prior information of this sort.  Here, however, it's at least conceptually straightforward to incorporate partial prior information if available.  For illustration, suppose the partial prior can be described by a possibility contour 
\[ q_\Phi(\phi) = 1 \wedge 5|\phi|^{-1}, \quad \phi \in \RR. \]
This is the same style of prior I used in the log-odds ratio illustration of Example~\ref{ex:log.odds}, just now it encode the prior belief that ``$\E|\Phi| \leq 5$,'' or, more colloquially, ``I don't expect $\Phi$ to be too large.'' Note that this prior is vacuous about $\Lambda$, only mildly informative about $\Phi$.  Following the basic Choquet integral formula \eqref{eq:choquet.theta}, my partial prior marginal IM contour for $\Phi$ takes the form 
\[ \pi_y(\phi) = \int_0^1 \Bigl[ \sup_\lambda \sup_{\varphi: q_\Phi(\varphi) > s} \prob_{Y|\varphi, \lambda}\{ R_q(Y,\varphi) \leq R_q(y,\phi) \} \Bigr] \, ds, \]
where $R_q$ is the $q_\Phi$-regularized relative profile likelihood and the inside $\prob_{Y|\varphi, \lambda}$ probability is evaluated via Monte Carlo over a grid of $(\varphi, \lambda)$ values.  Intuitively, since the prior discounts large values of $\Phi$, we can expect that the final marginal IM contour has thinner---potentially vanishing---tails.  The red line in Figure~\ref{fig:fc} shows this contour and, indeed, the tails are vanishing and more efficient marginal inference obtains if prior information of the form ``I don't expect $\Phi$ to be large'' is incorporated into the analysis at the start.  Note, also, that increasing the efficiency in this way doesn't make the IM solution susceptible to the risk identified in Gleser--Hwang.  The reason is that the definition of {\em validity} also takes the partial prior information into account: the prior can be leveraged to improve efficiency because validity itself is relative to the prior.  Unlike Bayes, since no prior is required for the IM solution, I don't run a risk of making unjustified assumptions---I'm free to make those assumptions, and enjoy the efficiency gains, only when I can justify them.
\end{ex}

\begin{figure}[t]
\begin{center}
\scalebox{0.65}{\includegraphics{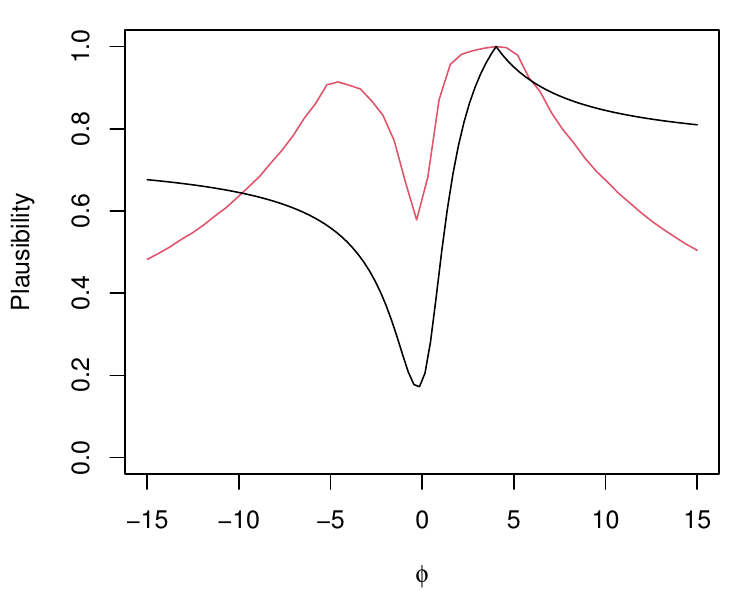}}
\end{center}
\caption{Marginal IM contours for the mean ratio $\Phi$ in the Fieller--Creasy illustration described in Example~\ref{ex:fc}, based on data $y=(1.33, 0.33)$.  Black line is for the vacuous prior case, red line is for the partial prior case.}
\label{fig:fc}
\end{figure}



\subsection{Two challenging practical examples}
\label{SS:big.examples}

The examples above are mostly illustrative in nature, shedding light on how the marginal IM construction works.  Here I want to consider two practically important and non-trivial examples---namely, the Behrens--Fisher and gamma mean problems---which I see as {\em challenge examples} for any framework designed for efficient marginal inference. 

\begin{ex}[Behrens--Fisher]
\label{ex:bf}
\citet[][Sec.~3]{fisher1935a} refers to the following inference problem.  Let $Y$ consist of mutually independent pairs $(Y_{ki})$, where 
\[ Y_{k1},\ldots,Y_{kn_k} \iid \nm(\mu_k, \sigma_k^2), \quad k=1,2, \]
with $\theta=(\mu_1, \mu_2, \sigma_1^2, \sigma_2^2)$ unknown: the goal is inference on the difference $\phi = \mu_2 - \mu_1$.  Fisher's solution makes reference to a relevant result by Walter Behrens, and hence the problem became known as the {\em Behrens--Fisher problem}.  This example has attracted a lot of attention over the years---and still does---because unlike many other problems involving inference on parameters of a normal model, this one doesn't admit a pivot that provides an exact solution.  The issue, of course, is the two variances corresponding to the two normal populations are not assumed to have any relationship; if they were equal or proportional, then the difficulties inherent in the Behrens--Fisher problem would disappear.  For more details on the history and the various proposed solutions to the Behrens--Fisher problem, see, e.g., \citet{kimcohen1998}.  Here I'll construct a strongly valid and efficient marginal IM for the unknown $\Phi$.  

In this case, the relative profile likelihood $R(y,\phi)$ doesn't have a closed-form expression, but it's not difficult to evaluate numerically.  However, it turns out that the distribution of $R(Y,\phi)$, as a function of the random variable $Y$ with $\phi$ as the true mean difference, only depends on the variance ratio, $\lambda=\sigma_1^2/\sigma_2^2$.  So, for a given data set $y$, it's easy to evaluate (via Monte Carlo) the function 
\begin{equation}
\label{eq:bf.lambda}
\phi \mapsto \prob_{Y|\phi,\lambda}\{ R(Y,\phi) \leq R(y,\phi) \}, \quad \text{for any $\lambda$}. 
\end{equation}
This suggest a marginal IM for $\Phi$ with contour function that equals the pointwise maximum of \eqref{eq:bf.lambda} over $\lambda$, i.e., 
\begin{equation}
\label{eq:bf.contour}
\pi_y(\phi) = \sup_{\lambda > 0} \prob_{Y|\phi,\lambda}\{ R(Y,\phi) \leq R(y,\phi) \}, \quad \phi \in \RR, 
\end{equation}
where, numerically, the supremum is replaced by a max over a grid of $\lambda$ values; in my experiments (see below), I've seen very little variation across the different values of $\lambda$ on this grid.  This marginal IM is strongly valid, which implies that the marginal $100(1-\alpha)$\% plausibility interval for $\Phi$ has guaranteed coverage probability at least $1-\alpha$ uniformly across samples sizes and nuisance parameter values.  Most of the other methods available in the literature can only offer asymptotic coverage guarantees.  

As an illustration, consider the often-used data in \citet[][p.~83]{lehmann1975} on travel times to work via two different routes.  This is the same example used by \citet{kimcohen1998} and others.  The relevant summary statistics---sample sizes, sample means, and sample standard deviations---are as follows:
\begin{align*}
n_1 & = 5 & \text{mean}(y_1) & = 7.580 & \text{sd}(y_1) & = 2.237 \\
n_2 & = 11 & \text{mean}(y_2) & = 6.136 & \text{sd}(y_2) & = 0.073.
\end{align*}
Note the relatively wide discrepancy between the two standard deviations; this would make it difficult to justify treating this using a simpler model formulation where the two normal variances are assumed to be the same.  Figure~\ref{fig:bf}(a) plots a couple different things: the gray lines correspond to the functions \eqref{eq:bf.lambda} for 75 different values of $\lambda$ over the range 0.001 to 100, and the black line is the marginal IM contour \eqref{eq:bf.contour}, the pointwise maximum of the gray curves, with the vertical lines marking the endpoints of the marginal IM's 95\% plausibility interval for $\Phi$.  Note that there's very little variation in the gray curves indexed by different values of the nuisance parameter.  For comparison, the 95\% confidence intervals for $\Phi$ produced by several different methods are presented in Table~\ref{tab:bf}. The solution by \citet{hsu1938} and \citet{scheffe1970} in the top row has coverage guarantees but tends to be conservative; the marginal IM solution presented in \citet{immarg} gives interval estimates that agree with the Hsu--Scheff\'e intervals.  The methods in the 2nd through 4th rows of the table offer only approximate coverage probability guarantees.  The last row is the marginal IM solution presented here, and note that my interval is the shortest of all those presented.  

It's worth to explore the performance of the proposed marginal IM solution compared to other methods across data sets.  For brevity, I'll reproduce one part of the extensive simulation study carried out in \citet{fraser.wong.sun.2009}.  I've chosen one of the most difficult, unbalanced cases with sample sizes $n_1=2$ and $n_2=20$.  In these simulations, the true values are $\mu_1=2$, $\mu_2=0$, $\sigma_1^2=1$, and $\sigma_2^2=2$; note that the true mean difference is $\phi=-2$.  I ran 10000 simulations and Figure~\ref{fig:bf}(b) plots the (estimated) distribution function of the random variable $\pi_Y(\phi)$, a function of the simulated data.  It's clear that this distribution function closely follows the diagonal line, indicating that $\pi_Y(\phi)$ is exactly (or at least approximately) uniformly distributed.  Therefore, even in this difficult unbalanced setting, the coverage probability of the marginal IM solution matches the nominal level exactly (up to Monte Carlo error), no sign of conservatism.  For comparison, the coverage probability for the methods compared in \citet{fraser.wong.sun.2009} are presented in Table~\ref{tab:bf2}.  With the exception of Jeffreys's default-prior Bayes solution (which agrees with Fisher's fiducial solution) and the 3rd order accurate solution, all the existing methods fall drastically short of the 90\% coverage target.  The profile-based marginal IM solution proposed here, on the other hand, hits the target on the nose.  
\end{ex}

\begin{table}[t]
\begin{center}
\begin{tabular}{ccc}
\hline
Method & Lower limit & Upper limit \\
\hline
\citet{hsu1938}, \citet{scheffe1970} & $-3.314$ & 0.427 \\
\citet{fisher1935a}, \citet{jeffreys1940} & $-3.308$ & 0.421 \\
\citet{welch1938} & $-3.293$ & 0.406 \\
\citet{welch1947}, \citet{aspin1948} & $-3.273$ & 0.386 \\
{\em Profile marginal IM} & $-3.106$ & 0.227 \\
\hline 
\end{tabular}
\end{center}
\caption{Lower and upper limits of various 95\% confidence intervals for the difference of means $\Phi$ based on the Lehmann's data.  The values in this table (except for the last row) are taken from \citet[][Table~2]{kimcohen1998}.}
\label{tab:bf}
\end{table}

\begin{figure}[t]
\begin{center}
\subfigure[Marginal IM contour for $\Phi$]{\scalebox{0.6}{\includegraphics{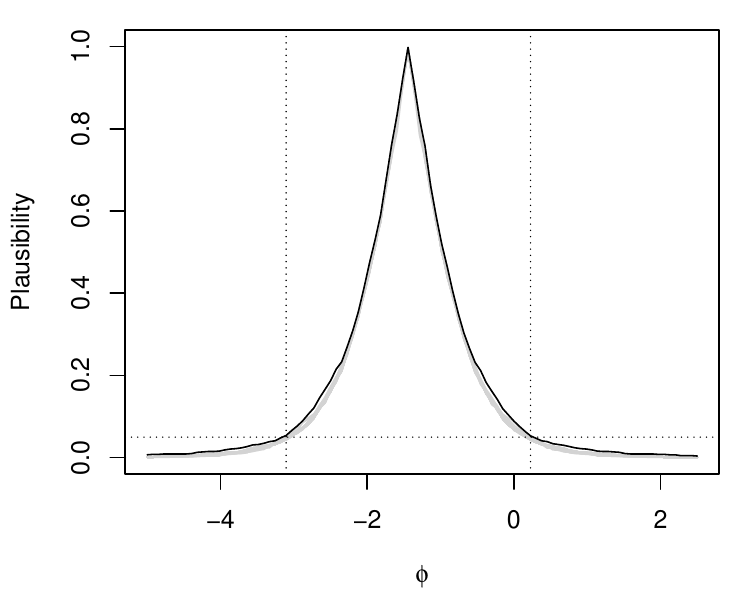}}}
\subfigure[Estimated CDF of $\pi_Y(\phi)$]{\scalebox{0.6}{\includegraphics{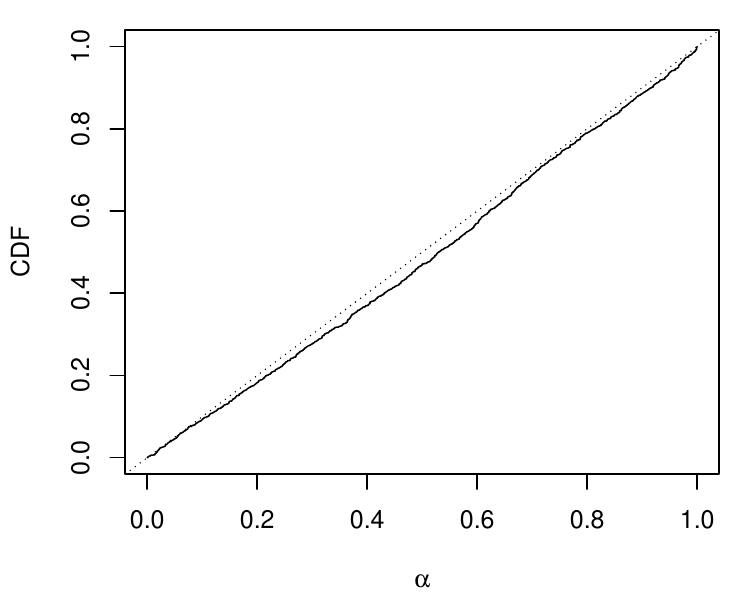}}}
\end{center}
\caption{Panel~(a) shows several nuisance parameter-dependent tentative contours (gray) and the actual marginal IM contour (black) for the mean difference $\Phi$ in the Behrens--Fisher problem based on Lehmann's travel time data.  Panel~(b) shows the estimated distribution function of the random variable $\pi_Y(\phi)$ as a function of simulated data as described in the text, where $\phi$ is the true value of the mean difference.}
\label{fig:bf}
\end{figure}

\begin{table}[t]
\begin{center}
{\small
\begin{tabular}{cccccc}
\hline
\citet{jeffreys1940} & \citet{ghosh.kim.2001} & \citet{welch1947} & 1st order & 3rd order & IM \\
\hline
0.9296 & 0.7873 & 0.8362 & 0.7399 & 0.8617 & 0.9082 \\
\hline 
\end{tabular}
}
\end{center}
\caption{Coverage probability of various 90\% confidence intervals for $\Phi$ in a very unbalanced version of the Behrens--Fisher problem, with $n_1=2$ and $n_2=20$; all but the last entry is taken from \citet[][Table~1a]{fraser.wong.sun.2009}.  Here ``1st order'' and ``3rd order'' correspond to the likelihood ratio-based approximations of Fraser et al.}
\label{tab:bf2}
\end{table}

\begin{ex}[Gamma mean]
\label{ex:gamma.mean}
Consider the two-parameter gamma model with unknown shape parameter $\Theta_1 > 0$ and unknown scale parameter $\Theta_2 > 0$.  Example~10 in Part~II presents a joint IM for simultaneous inference on the pair $\Theta = (\Theta_1, \Theta_2)$.  From this, of course, one can obtain (naive) marginal inference on any feature $\Phi=f(\Theta)$ of $\Theta$.  The focus here in the present example is on the construction of an efficient marginal IM specifically for inference on the mean $\Phi = \Theta_1\Theta_2$, the product.  This is a challenging marginal inference problem that has attracted the attention of a number of researchers, including \citet{grice.bain.1980}, \citet{shiue.bain.1990}, \citet{wong1993}, \citet{fraser.reid.wong.1997}, and \citet{bhaumik.etal.gamma}.  Here I construct a strongly valid marginal IM for $\Phi$ using the general machinery described above.  

Let $Y=(Y_1,\ldots,Y_n)$ denote an iid sample from a gamma distribution with the following parametrization: the shape parameter is $\Lambda$ and the scale parameter is $\Phi/\Lambda$, so that the mean of the distribution is $\Phi$.  Unfortunately, there is no closed-form expression for the relative profile likelihood $R(y,\phi)$, but it can be readily evaluated numerically.  The more significant challenge is that the distribution of the relative profile likelihood depends on both the interest and nuisance parameters, so some effort is required to evaluate the corresponding Choquet integral.  Indeed, in the case of vacuous prior information, the marginal IM contour for $\Phi$ is given by 
\[ \pi_y(\phi) = \sup_{\lambda > 0} \prob_{Y|\phi,\lambda}\{ R(Y,\phi) \leq R(y,\phi) \}, \quad \phi > 0, \]
and the computational obstacle is evaluating the supremum over $\lambda$.  Since the relative profile likelihood is, by Wilks's theorem, an approximate pivot in this example, one would expect that the right-hand side's dependence on $\lambda$ is relatively mild, so this can be accurately approximated by maximizing over a relatively coarse grid of $\lambda$ values.  

For illustration, consider the real data presented in Example~3 of \citet{fraser.reid.wong.1997}, which consists of the survival time (in weeks) for $n=20$ rats exposed to a certain amount of radiation.  
Figure~\ref{fig:gamma.mean} displays a plot of the mapping 
\begin{equation}
\label{eq:gamma.pl.tmp}
\phi \mapsto \prob_{Y|\phi,\lambda}\{ R(Y,\phi) \leq R(y,\phi)\}, 
\end{equation}
for a range of different values of the shape parameter $\lambda$, along with the corresponding marginal IM's contour based on optimizing over $\lambda$.  There are a few key observations worth making here.  First, since gamma is an exponential family model, the maximum likelihood estimator of $\Phi$ is the sample mean, $\bar y \approx 113.5$, which is where the plausibility contours peak.  Second, note that there is effectively no change in the curves \eqref{eq:gamma.pl.tmp} as $\lambda$ varies in the grid $\{0.1, 0.5, 1, 5, 10, 50, 100\}$, which suggests that simple approximations of the marginal IM contour, e.g., by using \eqref{eq:gamma.pl.tmp} with $\lambda$ fixed at its maximum likelihood estimator, ought to be reasonably accurate.  Third, the vertical bars correspond to the 95\% confidence intervals based on three other methods as presented in \citet{fraser.reid.wong.1997}: the red line is based on the first-order normal approximation of the sampling distribution of the maximum likelihood estimator; the green line is based on the more sophisticated method in \citet{shiue.bain.1990}; and the blue line is based on the third-order approximation derived in \citet{fraser.reid.wong.1997}.  The marginal IM's interval is quite different from those bounded by the red and green lines, which isn't too surprising given that the latter are based on relatively crude approximations.  That the marginal IM's interval closely matches that bounded by the blue lines also isn't surprising because the IM solution is precisely that which the third-order method is aiming to approximate.  Beyond the confidence interval comparisons, the marginal IM for $\Phi$ is provably valid and efficient, so it can reliably answer any relevant question concerning the mean $\Phi$.  

To dig deeper into the efficiency of the proposed IM solution, we reproduce the simulation study in \citet[][Ex.~2]{fraser.reid.wong.1997}.  In particular, I generate 10000 samples of size $n=10$ from a gamma distribution with shape 2 and mean $\phi=1$.  Panel~(b) of Figure~\ref{fig:gamma.mean} shows the estimated distribution function of the random variable $\pi_Y(\phi)$, and it's indistinguishable from uniform, as the general theory indicates.  Table~\ref{tab:gamma.mean} shows p-value percentages for a number of different methods available in the literature: including the textbook solution based on the first-order approximate normality of the maximum likelihood estimator, the methods proposed in \citet{shiue.bain.1990} and \citet{wong1993}, and the two third-order signed likelihood root approximations presented in \citet{fraser.reid.wong.1997}.  With the exception of the textbook first-order solution, which is terrible, all the methods perform similarly.  The marginal IM solution is a bit more conservative than some of the others---see the ``$< 0.5\%$'' to ``$< 2.5\%$'' bins---which is perhaps to be expected given that it has the exact validity constraint and the sample size is small.
\end{ex}

\begin{figure}[t]
\begin{center}
\subfigure[Marginal IM contour for $\Phi$]{\scalebox{0.6}{\includegraphics{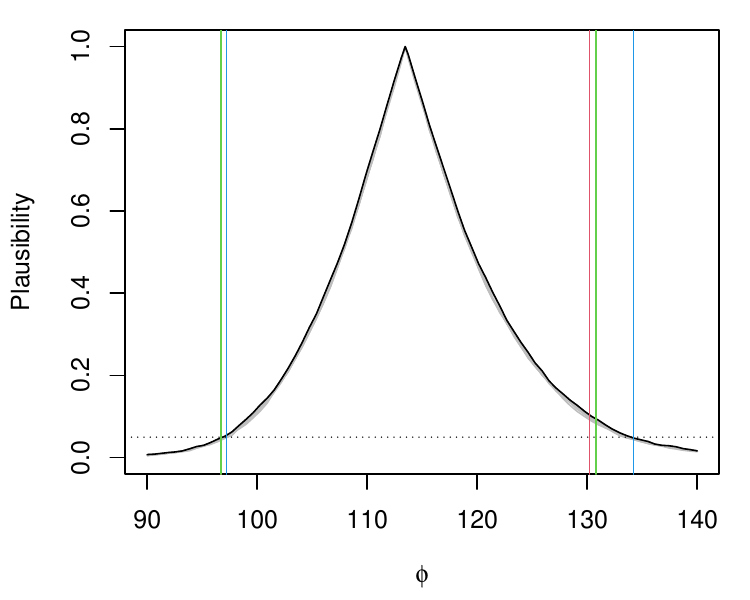}}}
\subfigure[Estimated CDF of $\pi_Y(\phi)$]{\scalebox{0.6}{\includegraphics{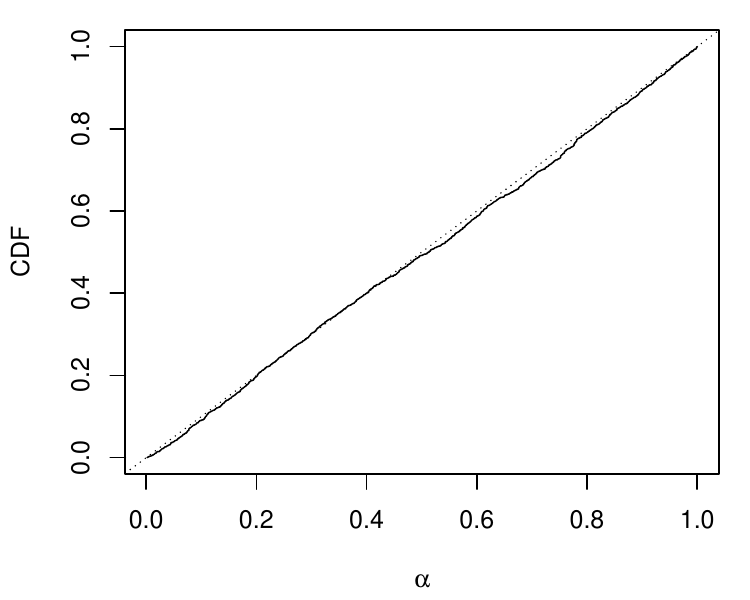}}}
\end{center}
\caption{Panel~(a): Marginal IM contour for the gamma mean as in Example~\ref{ex:gamma.mean}. The gray curves correspond to the function \eqref{eq:gamma.pl.tmp} for particular values of the gamma shape parameter $\lambda$; the black line represents the marginal IM contour for $\Phi$ based on a point-wise maximum of the gray curves. The colored vertical lines mark confidence intervals based on various other methods, as explained in the text. Panel~(b): estimated distribution function of the random variable $\pi_Y(\phi)$ where $\phi$ is the true value of the mean.}
\label{fig:gamma.mean}
\end{figure}

\begin{table}[t]
\begin{center}
{\small
\begin{tabular}{ccccccc}
\hline 
Method & $< 0.5\%$ & $< 1.25\%$ & $< 2.5\%$ & $>97.5\%$ & $> 98.75\%$ & $>99.5\%$ \\
\hline 
1st order & 5.00 & 5.73 & 8.53 & 1.52 & 0.74 & 0.35 \\
\citet{shiue.bain.1990} & 0.28 & 0.70 & 1.19 & 1.20 & 0.67 & 0.30 \\
\citet{wong1993} & 0.29 & 0.69 & 2.18 & 2.05 & 0.95 & 0.41 \\
3rd order (a) & 0.37 & 0.90 & 2.30 & 2.41 & 1.13 & 0.47 \\
3rd order (b) & 0.37 & 0.91 & 2.30 & 2.41 & 1.13 & 0.47 \\
{\em Profile marginal IM} & 0.22 & 0.65 & 1.57 & 2.71 & 1.31 & 0.59 \\
\hline 
\end{tabular}
}
\end{center}
\caption{Results for the gamma mean simulation in Example~\ref{ex:gamma.mean}.  The table shows the percentage of $p$-values in the stated bins (across 10000 simulations) for the various methods which relates to the coverage probability of the corresponding confidence intervals for the mean $\Phi$.  All the values except for the last row are taken from Table~2 in \citet{fraser.reid.wong.1997}. The standard errors across the board are all less than 0.16\%.}
\label{tab:gamma.mean}
\end{table}


\subsection{Words of caution}
\label{SS:caution}

It was demonstrated above that in both ideal and less-than-ideal factorization cases, the profiling strategy effectively eliminated nuisance parameters and led to valid and efficient marginal IMs for the interest parameter.  As hinted at previously, however, there are examples outside the ideal-factorization cases in which profiling will fail to give an efficient marginal IM solution, so blind application of profiling is dangerous.  Here I'll look at two such examples.  Fortunately, what makes the profiling strategy fail can be spotted in the problem setup, and other dimension reduction strategies can be used.  

\begin{ex}[Mean vector length]
\label{ex:length}
Consider the classical problem in which $Y$ is a $n$-dimensional normal random vector with unknown mean vector $\Theta$ and known covariance matrix, say, the identity matrix $I$.  Inference on the mean itself is the same regardless of the dimension, but suppose that the quantity of interest is $\Phi = f(\Theta) = \|\Theta\|$, the Euclidean length of the mean vector.  As mentioned above, the mean vector itself can be reconstructed by introducing the dual nuisance parameter $\Lambda = \Theta/\|\Theta\|$ that represents the unit vector in the direction of $\Theta$.  This turns out to be a non-trivial problem, as first pointed out by \citet{stein1956, stein1959}; it was also recently listed in \citet{fraser.reid.lin.2018} as one of the ``challenge problems'' put forward by the late Sir D.~R.~Cox.  

The likelihood function is relatively straightforward in this case:
\[ p_{Y|\theta}(y) = \exp\{ -\tfrac12 \|y - \theta\|^2 \}, \quad \theta \in \TT = \RR^n. \]
In terms of $(\phi, \lambda)$, the log-likelihood can be re-expressed as 
\[ \log p_{Y|\phi,\lambda}(y) = -\tfrac12 \|y - \phi\lambda\|^2 = -\tfrac12 \|y\|^2 + (\phi \|y\|) \lambda^\top (y / \|y\|) - \tfrac12 \phi^2. \]
Note that the likelihood depends on $y$ only through the pair $\{U(y), V(y)\}$, where $U(y) = \|y\|$ and $V(y) = y / \|y\|$.  It's also relatively well-known \citep[e.g.,][Sections~3.5.4 and 9.3.2]{mardia.jupp.book} that the joint distribution can be factored as 
\[ p_{Y|\theta}(y) = p_{U|\phi}(u) \, p_{V|u,\phi,\lambda}(v), \]
where $p_{U|\phi}$ is a non-central chi-square density and $p_{V|u,\phi,\lambda}$ is a von Mises--Fisher distribution.  The specific details of the von Mises--Fisher distribution aren't relevant here; the key point is that the conditional distribution of $V$, given $U=u$, depends on both $\phi$ and $\lambda$.  Therefore, this isn't one of the ``ideal factorization'' cases discussed above.  It is, however, a ``less-than-ideal factorization,'' so there's a choice to be made: do we follow the general profile likelihood strategy, which avoids throwing away relevant information in $V$, or go for the simplicity of the marginal likelihood based on $U$?  Both IM solutions are valid, so efficiency considerations will tip the scale in one direction or the other.  

The profile-based IM solution will have a contour that's maximized at the global maximum likelihood estimator $\hat\phi_y$ of $\Phi$.  It's well-known that that the maximum likelihood estimator of $\Theta$ is the data vector $y$ so, by invariance of maximum likelihood estimators, it immediately follows that $\hat\phi_y = u = \|y\|$, the length of the data vector $y$.  Indeed, the profile likelihood function is 
\[ \sup_{\lambda} p_{Y|\phi,\lambda}(y) = \exp\{-\tfrac12( \|y\| - \phi )^2\}, \quad \phi > 0, \]
so above claim is easily confirmed without applying the invariance principle. Although it's natural to estimate the mean vector length by the data vector length, this is actually a pretty lousy estimator: it has non-negligible upward bias.  Intuitively, the reason why the maximum likelihood estimator of $\Phi$ has such poor properties is this case is that there are too many nuisance parameters, in fact, $O(n)$ many.  If the peak of the IM contour for $\Phi$ is attained at a severely biased estimator, then it must be rather wide in order to achieve validity; see Figure~\ref{fig:length}.  This suggests a deficiency in the profile likelihood-based IM solution, so let's consider the marginal likelihood-based solution.  

Neither the profile nor the marginal likelihood-based IM contour functions have a closed-form expression, but they can readily evaluated numerically via Monte Carlo.  To get a feeling of why the marginal likelihood-based solution might be superior, note that the corresponding contour will be maximized the maximum marginal likelihood estimator, i.e., $\arg\max_\phi p_{U|\phi}(u)$.  There's no closed-form expression for this either, but a decent approximation is the method of moments estimator: $(U^2 - n)^{1/2}$.  Note the automatic correction for the upward bias of $U$ mentioned above.  Since the peak of the marginal likelihood-based IM contour for $\Phi$ tends to be closer to the true value, more efficiency is expected compared to the profile likelihood-based solution.  

To see the above claims in action, I simulated a Gaussian random vector of dimension $n=5$ with mean vector such that the true value of $\Phi$ is 3.  Note that the marginal likelihood-based IM contour has peak closer to the true value of $\Phi$ than the profile likelihood-based IM contour, and it's also narrower; note that the 95\% plausibility interval based on the former solution is significantly narrower than that based on the latter. 
\end{ex}

\begin{figure}[t]
\begin{center}
\scalebox{0.65}{\includegraphics{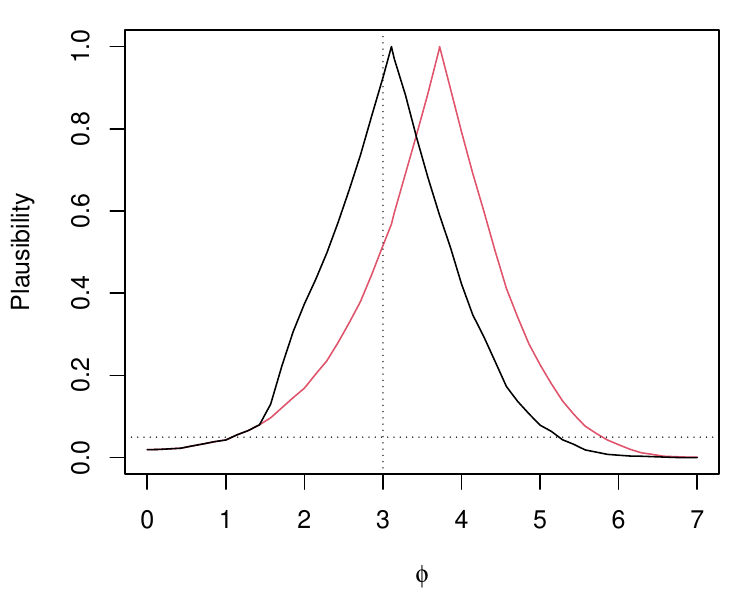}}
\end{center}
\caption{Marginal IM contours for the mean vector length $\Phi$ in Example~\ref{ex:length} based on the marginal likelihood (black) and the profile likelihood (red).}
\label{fig:length}
\end{figure}

\begin{ex}[Neyman--Scott]
\label{ex:ns}
In the early days of statistical theory, it was generally believed that maximum likelihood estimators were consistent. This famous example put forward in \citet{neyman.scott.1948} busted this myth by showing that the maximum likelihood estimator could be inconsistent.  The problem setup is as follows.  Suppose that data $Y$ consists of independent pairs $Y_i = (Y_{i1}, Y_{i2})$, where 
\[ Y_{i1}, Y_{i2} \iid \nm(\mu_i, \sigma^2), \quad i=1,\ldots,n. \]
The quantity of interest $\Phi$ is the uncertain value of the variance $\sigma^2$, and the nuisance parameter $\Lambda$ is the unknown vector of means $(\mu_1, \ldots, \mu_n)$.  It's a standard exercise in a stat theory course to show that the maximum likelihood estimator is 
\[ \hat\phi_Y = \frac1n \sum_{i=1}^n (Y_{i1} - Y_{i2})^2, \]
and that its in-probability limit is $\frac12 \Phi \neq \Phi$.  Of course, this asymptotic bias is easy to correct for, but that's not the point: this is a sign of some inadequacy in the naive likelihood-based solution.  Therefore, we should know here to proceed with caution in our IM construction.  In particular, like in Example~\ref{ex:length} above, we can expect that a profile likelihood-based solution, which returns a contour function maximized at the flawed $\hat\phi_Y$ above, so a marginal likelihood-based approach might be preferred.  

As demonstrated clearly in \citet[][Sec.~2.4]{boos.stefanski.2013}, and perhaps elsewhere, the change-of-variables $y \mapsto \{U(y), V(y)\}$, with 
\[ U_i = 2^{-1/2} (Y_{i1}-Y_{i2}) \quad \text{and} \quad V_i = 2^{-1/2}(Y_{i1} + Y_{i2}), \quad i=1,\ldots,n, \]
has the following properties:
\begin{itemize}
\item the marginal distribution of $U=(U_1,\ldots,U_n)$ doesn't depend on $(\mu_1,\ldots,\mu_n)$;
\vspace{-2mm}
\item $(U_1,\ldots,U_n)$, given $\Phi=\phi$, are iid $\nm(0,\phi)$;
\vspace{-2mm}
\item and $U$ and $V$ are independent. 
\end{itemize}
Since the marginal distribution of $V$ depends on both $\Phi$ and $\Lambda$, we're in the $\Phi$-oriented case under the ``less-than-ideal factorization'' above.  Instead of profiling, I suggest taking the marginal likelihood for $\Phi$ based on $U$ alone.  The maximum marginal likelihood estimator is easily shown to be $2\hat\phi_Y$, which is consistent.  From here the story is very similar to that in Example~\ref{ex:length} above, so I'll skip the details.  But the point, again, is that, when there are a growing number of nuisance parameters and the ordinary maximum likelihood estimator is inconsistent, a valid and efficient marginal IM should for $\Phi$ be constructed based on the marginal likelihood.  It's counter-intuitive, at least to me, that efficient marginal inference in these cases requires that some information about $\Phi$ in $V$ be intentionally ignored.  
\end{ex}

\section{Predictive possibilistic IMs}
\label{S:prediction}

\subsection{Setup}

Prediction of future responses is a fundamentally important problem in statistics and machine learning.  Here I'll focus primarily on the simple but practical case of prediction with respect to an assumed parametric model.  Specifically, let $(Y, Z \mid \Theta=\theta) \sim \prob_{Y,Z|\theta}$ be the statistical for the pair $(Y,Z) \in \YY \times \ZZ$ consisting of an observable $Y$ and a to-be-predicted $Z$, depending on an unknown $\Theta$.  Note that $Y$ and $Z$ might not have the same form, e.g., $Y$ and $Z$ might have different marginal distributions due to dependence on distinct covariate values (hidden in the notation), $Z$ might be a function, such as a maximum, of several independent $Y$-like realization, or, more generally, $Y$ might be the sufficient statistic based on a sample of $Z$-like random variables.  The particular form of the relationship between $Y$ and $Z$ isn't important; all that matters is the dependence on a common parameter $\Theta$.  A valid IM construction in this case was presented in \citet{impred}, but the approach here is more general in several respects.  Partial prior information could be available about $\Theta$ in the form of an upper probability distribution with contour $q(\theta) = q_\Theta(\theta)$.  This problem roughly fits into the setup described in the previous section, where $\Phi=Z$ is of primary interest and $\Lambda=\Theta$ is of only secondary importance.  From this perspective, it's clear that prediction is just an extreme version of the marginal inference problem where the model parameter itself is a nuisance.  

Recall that marginalization can be carried out in at least two ways: one based on applying the basic extension principle to the joint IM's output and the other taking the priorities of the analysis into special consideration in the IM construction.  For the prediction problem here it turns out that there are even more ways one could imagine proceeding.  In particular, I can see three options for constructing a {\em predictive IM} for inference on/prediction of $Z$:
\begin{enumerate}
\item First construction an IM for $\Theta$, given $Y=y$, as described above, then combine this in an appropriate way with a possibilistic representation of the model that relates $Z$ to $\Theta$ and then marginalize to $Z$ via the extension principle.  This is similar to Bayes's rule combines first by multiplying the model density and the posterior density, and then integrates to get the predictive density. 
\vspace{-2mm}
\item Construct a full IM for the pair $(Z, \Theta)$, given $Y=y$, as described above, and then marginalize to $Z$ via the extension principle. 
\vspace{-2mm}
\item Use the fact that $\Theta$ is a nuisance parameter in order to reduce dimension before constructing a marginal IM for $Z$.  
\end{enumerate} 
These three options are in decreasing order of simplicity and increasing order of efficiency.  That is, Option~1 is relatively simple but less efficient, while Option~3 is less simple but more efficient.  Option~2 isn't particularly appealing because, usually, either simplicity or efficiency is the priority but this one achieves neither.  So I'll only briefly describe Option~2 and focus my attention on Options~1 and 3; see Section~\ref{SS:pred.three} below. 

Before describing the details of these options, it'll help to be clear about what the objectives are.  Like in the previous sections, my goal here is to construct a pair $(\lPi_y, \uPi_y)$ of lower and upper probabilities on $\ZZ$, with a consonance structure so that they're fully determined by a contour function $\pi_y(z)$.  Analogous to \eqref{eq:valid}, I'll require that this predictive IM have a {\em strong prediction validity} property, i.e., 
\begin{equation}
\label{eq:strong.pred.validity}
\uprob_{Y,Z,\Theta}\{ \pi_Y(Z) \leq \alpha \} \leq \alpha, \quad \alpha \in [0,1]. 
\end{equation}
Since the event ``$\pi_Y(Z) \leq \alpha$'' doesn't directly depend on $\Theta$, the $\uprob_{Y,Z,\Theta}$-probability on the left-hand side above boils down to a supremum over all marginal distributions for $(Y,Z)$ induced by precise priors $\prior$ for $\Theta$ in the prior credal set $\credal$.  That is, the left-hand side of the above display can be rewritten as 
\[ \sup_{\prior \in \credal} \int_\TT \prob_{Y,Z|\theta}\{ \pi_Y(Z) \leq \alpha \} \, \prior(d\theta). \]
In the vacuous-prior case, for example, the property in \eqref{eq:strong.pred.validity} reduces to 
\[ \sup_\theta \prob_{Y,Z|\theta}\{ \pi_Y(Z) \leq \alpha \} \leq \alpha, \quad \alpha \in [0,1], \]
which is probably a more familiar looking condition to the reader. 

The setup above and the details below focus on the very special---albeit practically relevant---case where prediction is with respect to a specified parametric model.  It many applications, however, the goal is to carry out prediction without assuming a particular distributional form for the observables.  It turns out that this more general prediction problem can also be handled within the proposed framework, but I'll postpone discussion of this till Section~\ref{S:beyond} below.

\subsection{Three valid IM constructions}
\label{SS:pred.three}

\subsubsection*{Option 2}

I'll start with a brief explanation of Option~2.  This isn't an ideal solution, in my opinion, because it's neither the simplest nor the most efficient.  But the construction here will help with the development of Option~3 below, so it's worth briefly mentioning this case.

Recall that Option~2 suggests first constructing an IM for the pair $(Z,\Theta)$ and then marginalizing over $\Theta$ using the extension principle, leaving a marginal IM for $Z$.  Let $p_{Y,Z|\theta}(y,z)$ denote the joint density/mass function of $(Y,Z)$, given $\Theta=\theta$. For the first step, following the general framework, the ``relative likelihood'' function is
\[ R_q(y, z, \theta) = \frac{p_{Y,Z|\theta}(y,z) \, q_\Theta(\theta)}{\sup_{x,\vartheta} \{p_{Y,Z|\vartheta}(y,x) \, q_\Theta(\vartheta)\}}, \quad (y,z,\theta) \in \YY \times \ZZ \times \TT. \]
In general, there's no further simplification that can be made to this relative likelihood function.  So to proceed with the IM construction for $(Z, \Theta)$, I'd simply plug this formula into the general recipe described above, which leads to the an IM contour function 
\[ \pi_y(z, \theta) = \uprob_{Y,Z,\Theta}\{ R_q(Y,Z, \Theta) \leq R_q(y, z, \theta)\}, \quad (z,\theta) \in \ZZ \times \TT. \]
This joint IM for $(Z, \Theta)$ is strongly valid by the general theory in Part~II; moreover, the marginal predictive IM for $Z$, with contour 
\[ \pi_y(z) = \sup_{\theta \in \TT} \pi_y(z, \theta), \quad z \in \ZZ, \]
would also satisfy the strong prediction validity as described in the previous subsection as a consequence of the validity-preserving property of the extension principle.  I'll have more to say about the prediction validity property for Options~1 and 3 below.  

In certain special cases, however, some simplifications can be made and dimension can be reduced for the sake of efficiency.  These are the vacuous- and complete-prior cases.
\begin{itemize}
\item Recall that, with a vacuous prior, i.e., $q_\Theta(\theta) \equiv 1$, there's an opportunity to reduce dimension by {\em fixing $\theta$} in the relative likelihood.  In this case, the IM contour is 
\[ \pi_y(z, \theta) = \prob_{Y,Z|\theta}\{ R(Y,Z, \theta) \leq R(y, z, \theta)\}, \quad (z,\theta) \in \ZZ \times \TT, \]
and efficiency is gained by not integrating/optimizing over $\theta$. 
\vspace{-2mm}
\item With a complete prior, where $q$ now denotes the prior probability density/mass function, the dimension reduction is achieved by {\em fixing $y$}.  That is, the numerator in the expression for $\eta$ can be rewritten as 
\[ p_{Y,Z|\theta}(y,z) \, q_\Theta(\theta) \propto p_{Z|y,\theta}(z) \, q_{\Theta|y}(\theta), \]
where ``$\propto$'' means as a function of $(z, \theta)$, and $q_{\Theta|y}(\theta)$ is the Bayesian posterior density/mass function for $\Theta$, given $Y=y$; the proportionality ``constant'' is the marginal density of $Y$, with respect to prior $q_\theta$, which only depends on $y$.  That proportionality constant cancels in the ratio that defines $\eta$, after some further simplifications, the IM for $(Z, \Theta)$ has contour function 
\[ \pi_y(z, \theta) = \prob_{Z, \Theta|y}\{ p_{Z|y,\Theta}(Z) \, q_{\Theta|y}(\Theta) \leq p_{Z|y,\theta}(z) \, q_{\Theta|y}(\theta)\}, \]
where the probability is with respect to the conditional distribution of $(Z, \Theta)$, given $Y=y$, which has density/mass function $p_{Z|y,\theta}(z) \, q_{\Theta|y}(\theta)$.  
\end{itemize}

\subsubsection*{Option 3}

Next I'll describe Option~3 because it's more in line with the development in Section~\ref{S:nuisance} above than is Option~1.  Recall that Option~3 is based on the understanding that only prediction of $Z$ is relevant and, therefore, $\Theta$ is a nuisance parameter.  Then the marginal inference strategy described in the previous section can be applied, which suggests the relative profile likelihood 
\[ R_q(y,z) = \frac{\sup_\theta \{p_{Y,Z|\theta}(y,z) \, q_\Theta(\theta)\}}{\sup_{x,\theta} \{p_{Y,Z|\theta}(y,x) \, q_\Theta(\theta)\}}, \quad (y,z) \in \YY \times \ZZ. \] 
Of course, if the prior information for $\Theta$ were vacuous, then $q_\Theta(\theta) \equiv 1$ and that term can be dropped from the above expression; the case of a complete prior will be considered below.  In any case, the construction of a {\em predictive} IM for $Z$, given $Y=y$, now proceeds by defining the contour function 
\[ \pi_y(z) = \uprob_{Y, Z, \Theta}\{ R_q(Y,Z) \leq R_q(y, z)\}, \quad z \in \ZZ. \]
When the partial prior information is encoded in a possibility measure, the above Choquet integral can be simplified as before, i.e., 
\[ \pi_y(z) = \int_0^1 \Bigl[ \sup_{\theta: q_\Theta(\theta) > s} \prob_{Y,Z|\theta}\{ R_q(Y,Z) \leq R_q(y,z)\} \Bigr] \, ds, \quad z \in \ZZ. \]
In the vacuous prior case, with $q(\theta) \equiv 1$, this simplifies even further:
\[ \pi_y(z) = \sup_\theta \prob_{Y,Z|\theta}\{ R(Y,Z) \leq R(y,z)\}, \quad z \in \ZZ. \]
Depending on the structure of the problem, it may happen that $R(Y,Z)$ is a pivot (see Example~\ref{ex:normal.mean.pred}), in which case the IM computation becomes relatively straightforward.  For example, in the vacuous-prior case, if $R(Y,Z)$ is a pivot, then the supremum in the above display can be dropped and computation of the IM contour function is easy.  

That prediction validity \eqref{eq:strong.pred.validity} holds for the IM constructed above is easy to verify.  Indeed, 
\begin{align*}
\uprob_{Y,Z,\Theta}\{\pi_Y(Z) \leq \alpha\} & = \sup_{\prior} \prob_{Y,Z|\prior}\Bigl[ \sup_{\prior'} \prob_{Y',Z'|\prior'}\{R_q(Y',Z') \leq R_q(Y,Z)\} \leq \alpha \Bigr] \\
& \leq \sup_{\prior} \prob_{Y,Z|\prior}\bigl[ \prob_{Y',Z'|\prior}\{R_q(Y',Z') \leq R_q(Y,Z) \} \leq \alpha \bigr].
\end{align*}
The $\prior$-specific probability on the right-hand side is upper-bounded by $\alpha$ by standard arguments, for each $\prior$.  Then the supremum must also be bounded by $\alpha$, proving the strong prediction validity claim. 

Next, consider the case of a precise prior for $\Theta$, where $q_\Theta$ is the prior density/mass function.  As before, dimension can be reduced---and efficiency gained---by {\em fixing $y$}.  Towards this, note that the numerator of relative profile likelihood at the start of this subsection can be rewritten as 
\begin{align*}
\sup_\theta \{ p_{Y,Z|\theta}(y,z) \, q_\Theta(\theta) \} & = \sup_\theta \{ p_{Z|y}(z) \, q_{\Theta|y,z}(\theta) \, p_Y(y) \} \\
& = p_{Z|y}(z) \, p_Y(y) \, \sup_\theta q_{\Theta|y,z}(\theta), 
\end{align*}
where $p_{Z|y}(z)$ is the posterior predictive distribution of $Z$, given $Y=y$, $p_Y(y)$ is the marginal distribution of $Y$ under the Bayes model with prior $q_\Theta$, and $q_{\Theta|y,z}(\theta)$ is the posterior distribution of $\Theta$, given $(Y,Z) = (y,z)$.  The marginal density term, $p_Y(y)$, cancels in the ratio that defines $R$, which leaves just 
\[ R_q(y,z) = \frac{\sup_\theta q_{\Theta|y,z}(\theta)}{\sup_{x,\theta} \{p_{Z|y}(x) \, q_{\Theta|y,x}(\theta)\}} \, p_{Z|y}(z). \]
In the above display, the denominator of the leading term only depends on $y$ and, therefore, can be treated as a constant.  The numerator technically depends on $z$, but that dependence is rather weak.  In fact, in some cases (Example~\ref{ex:normal.mean.pred}), that term doesn't actually depend on $\tilde y$, so it too can be treated as a constant.  More generally, it'll be typically be the case that $y$ is more informative than $z$---recall that $y$ here represents the observed data set (or sufficient statistic) so it would have far more influence on the posterior than the single realization $z$.  Putting all this together, dropping proportionality constants where possible, the relative profile likelihood can be re-expressed as
\[ R_q(y, z) = \bigl\{ \sup_\theta q_{\Theta|y,z}(\theta) \bigr\} \, p_{Z|y}(z) \propto p_{Z|y}(z). \]
The ``$\propto$'' above is generally not exact---the term in curly brackets might depend mildly on $z$---but there's no clear reason not to just ignore that term and work with the predictive density itself on the right-hand side; more on this choice below.  Then the complete-prior predictive IM for $Z$, given $Y=y$, has contour function 
\[ \pi_y(z) = \prob_{Z|y}\{ p_{Z|y}(Z) \leq p_{Z|y}(z)\}, \quad z \in \ZZ, \]
where $\prob_{Z|y}$ denotes the posterior predictive distribution of $Z$, given $Y=y$, with respect to the Bayes model with (precise) prior $q$.  This is the probability-to-possibility transform of the Bayesian predictive distribution and, therefore, enjoys certain optimality properties.  For example, the $100(1-\alpha)$\% prediction regions derived from $\pi_y(z)$ above are optimal in the sense that they have the smallest Lebesgue measure of all sets having posterior predictive probability at least $1-\alpha$.

\subsubsection*{Option 1}

Option~1 is very different from the previous two options described above.  This is ideally suited for a case where, for example, the user is handed a strongly valid IM for $\Theta$, given $Y=y$, and the task is to update this to a predictive IM for $Z$, given $Y=y$.  This is different from the previous cases in that, with Options~2--3, the user knew that prediction of $Z$ was the primary goal and could focus directly on that task.  Here, it's as if the user first wanted inference on $\Theta$ and then later was asked to use that same IM for predictive inference on $Z$.  To keep the details relatively simple, here I'll assume that $Y$ and $Z$ are conditionally independent, given $\Theta$.  Lots of applications satisfy the independence assumption, so this is not a severe restriction; and results similar to those below ought to be possible even in certain cases where conditional independence fails.  

To set the scene, let $\pi_y(\theta)$ be the contour function of a strongly valid IM for $\Theta$, given $Y=y$.  Moreover, let $p_\theta(z)$ denotes the density of $Z$, given $\Theta=\theta$, and consider a possibilistic representation, say, $f_\theta(z)$ thereof:
\[ f_\theta(z) = \prob_{Z|\theta}\{ p_{Z|\theta}(Z) \leq p_{Z|\theta}(z)\}, \quad z \in \ZZ. \]
Then the strategy is to suitably combine $f_\theta(z)$ and $\pi_y(\theta)$ into a joint possibility distribution for $(Z,\Theta)$, given $Y=y$, and then marginalize out $\Theta$.  Various combination strategies are discussed in, e.g., \citet{destercke.etal.2009}, \citet{troffaes.etal.2013}, and \citet[][Sec.~3.5.4]{hose2022thesis} but, in the present context, they'd all take the form 
\begin{equation}
\label{eq:K}
\pi_y(z) = \sup_\theta \mathcal{K}\{ f_\theta(z), \pi_y(\theta) \}, 
\end{equation}
for a suitable function $\mathcal{K}: [0,1]^2 \to [0,1]$.  Intuitively, $f_\theta(z)$ encodes a conditional distribution of $Z$, given $\Theta=\theta$, and $\pi_y(\theta)$ encodes a conditional distribution of $\Theta$, given $Y=y$, so the role that $\mathcal{K}$ plays is like a possibilistic analogue of the probabilistic multiplication operation that converts this pair into a sort of joint distribution for $(Z,\Theta)$, given $Y=y$.  Then the supremum over $\theta$ on the outside corresponds to marginalization over $\Theta$, via the extension principle, to get a possibilistic conditional distribution for $Z$, given $Y=y$, which will play the role of a predictive IM.  

For statistical and historical reasons, I prefer the combination rule that's motivated by Fisher's classical strategy for combining p-values in significance testing contexts \citep[][Sec.~21.1]{fisher1973b}.  This boils down to the choice of $\mathcal{K}$ as 
\[ \mathcal{K}(u,v) = uv\{1 - \log(uv)\}, \quad (u,v) \in [0,1]^2. \]
The connection between this formula and Fisher's p-value combination rule is as follows.  In the context of significance testing, let $U$ and $V$ denote independent p-values, so that $U, V \iid \unif(0,1)$ under the null hypothesis.  Then $-2(\log U + \log V)$ has a chi-square distribution with 4 degrees of freedom, so the p-value for the combined test, based on the product of p-values rule, is given by 
\[ \prob(UV \leq uv) = \prob\{\underbrace{-2(\log U + \log V)}_{\text{$\sim \chisq(4)$}} \geq -2(\log u + \log v)\} \]
where $(u,v) \in [0,1]^2$ here denote the observed p-values.  Fisher would've stopped here and suggested compared the observed $-2(\log u + \log v)$ to the critical value of a $\chisq(4)$ distribution.  Jost,\footnote{\url{http://www.loujost.com/StatisticsandPhysics/SignificanceLevels/CombiningPValues.htm}, accessed October 31st, 2022} however, has shown that the chi-square probability in the above display can be evaluated in closed-form, and the expression is $\mathcal{K}(u,v)$.  

Suppose that the IM for $\Theta$, given $Y=y$, is strongly valid with respect to the vacuous prior.  In that case, both $f_\theta(Z)$ and $\pi_Y(\theta)$ are independent and stochastically no smaller than $\unif(0,1)$ under $\prob_{Y,Z|\theta}$.  Then strong prediction validity \eqref{eq:strong.pred.validity}, relative to the vacuous prior for $\Theta$, follows immediately from the Fisher/p-value connection described in the previous paragraph.  To see this, first note:
\begin{align*}
\sup_\theta \prob_{Y,Z|\theta}\{ \pi_Y(Z) \leq \alpha \} & = \sup_\theta \prob_{Y,Z|\theta}\bigl\{ \sup_\vartheta \mathcal{K}(f_\vartheta(Z), \pi_Y(\vartheta)) \leq \alpha \} \\
& \leq \sup_\theta \prob_{Y,Z|\theta}\bigl\{ \mathcal{K}(f_\theta(Z), \pi_Y(\theta)) \leq \alpha \}.
\end{align*}
Now, by the above properties of $f_\theta(Z)$ and $\pi_Y(\theta)$, it follows that $\mathcal{K}\{f_\theta(Z), \pi_Y(\theta)\}$ is stochastically no smaller than $\unif(0,1)$, uniformly in $\theta$.  Therefore, the right-hand side of the above display is upper-bounded by $\alpha$, hence strong prediction validity.

\subsection{Summary}

To summarize, I presented three different options above for construction of a predictive IM for $Z$, given $Y=y$, under fairly general models, i.e., no independence or identically distributed assumptions, only that $(Y,Z)$ are related to the same model parameter $\Theta$; extension beyond the parametric model case considered here will be discussed briefly in Section~\ref{S:semi}.  Of the three option, my recommendation is Option~3 as its one and only goal is strongly valid and efficient prediction.  It achieves this by following the Principle of Minimal Complexity---reducing the dimension as much as possible before carrying out the Choquet integration.  The other two constructions don't fully commit to the prediction task, they hold on to the option of making inference on $\Theta$ too, which is an added constraint that limits the data analysts' ability to reduce dimension.  Therefore, as I mentioned above, there will generally be an efficiency loss compared to Option~3 and the other two options.  This can be clearly seen in the results of Example~\ref{ex:normal.mean.pred} below.  

Option~1 is unique since it's designed specifically for cases where an IM for inference about $\Theta$, given $Y=y$, has been constructed, and then prediction is required as an afterthought.  In this case, the assumption is that the data analyst doesn't have access to the data that went into the construction of an IM for $\Theta$, he only has the IM output.  This constraint limits his ability to construct an efficient predictive IM in that he's unable to carry out the preliminary dimension reduction steps that lead to efficiency.  Moreover, it's unclear at this point whether validity can be achieved through a combination strategy like that presented above except under special conditions, e.g., conditional independence and vacuous prior assumptions.  Nevertheless, the use of various strategies to combine different valid IMs into a single valid IM is technically interesting and practically useful.  For example, in meta-analysis, one can imagine there being IM output produced and published independently by different research groups.  Then the goal might be to combine these various IMs for inference about the common parameter, or to predict the results of a new follow-up study.  So, I think this IM combination idea warrants further investigation.

\subsection{Examples}

\begin{ex}[Normal]
\label{ex:normal.mean.pred}
For a simple illustration, suppose that $(Y \mid \Theta=\theta) \sim \nm(\theta, n^{-1} \sigma^2)$ and $(Z \mid \Theta=\theta) \sim \nm(\theta, \sigma^2)$.  In this case, $\sigma > 0$ is taken to be fixed; analogous results can be obtained when $\sigma$ is unknown but the details are more involved.  So then $Y$ is just the minimal sufficient statistic for the normal mean model based on $n$ iid samples.  I'll also consider the vacuous prior case for the sake of comparison.  

The objective of this example is to illustrate and compare the different options for constructing a strongly valid predictive IM for $Z$.  I'll flesh out each of these constructions below in turn.  With a slight abuse of notation, I'll write $p_{Y|\theta}(y)$ and $p_{Z|\theta}(z)$ for the densities of $(Y \mid \Theta=\theta)$ and $(Z \mid \Theta=\theta)$. 
\begin{itemize}
\item[2.] For this option, I first construct a joint IM for $(Z,\Theta)$ and then marginalize out $\Theta$.  In this case, the relative likelihood function takes the form 
\[ R(y,z,\theta) = \frac{p_{Y|\theta}(y) \, p_{Z|\theta}(z)}{\sup_{x,\vartheta} p_{Y|\vartheta}(y) \, p_{Z|\vartheta}(x)} = \exp\Bigl\{ - \frac{n(y-\theta)^2 + (z-\theta)^2}{2\sigma^2} \Bigr\}. \]
Since $n(Y-\Theta)^2 + (Z - \Theta)^2$ is a pivot, it's easy to get the joint IM:
\begin{align*}
\pi_y(z,\theta) & = \prob_{Y,Z|\theta}\{ R(Y,Z,\theta) \leq R(y,z,\theta)\} \\
& = 1 - {\tt pchisq}\Bigl(\frac{n(y-\theta)^2 + (z - \theta)^2}{\sigma^2}, \, {\tt df} = 2\Bigr).
\end{align*}
Applying the extension principle to marginalize over $\Theta$ leads to
\[ \pi_y(z) = \sup_\theta \pi_y(z,\theta) = 1 - {\tt pchisq}\Bigl(\frac{(z - y)^2}{\sigma^2(1 + n^{-1})}, \, {\tt df} = 2\Bigr). \]
\item[3.] For this option, the strategy is to marginalizing before calculating the IM contour.  This amounts using a profile relative likelihood, which in this case is given by 
\[ R(y,z) = \frac{\sup_\theta p_{Y|\theta}(y) \, p_{Z|\theta}(z)}{\sup_{x,\theta} p_{Y|\theta}(y) \, p_{Z|\theta}(x)} = \exp\Bigl\{-\frac{(z - y)^2}{2\sigma^2(1 + n^{-1})} \Bigr\}. \]
Since $Z - Y$ is a pivot, I can easily get the contour 
\begin{align*}
\pi_y(z) & = \sup_\theta \prob_{Y,Z|\theta}\{ R(Y,Z) \leq R(y,z)\} \\
& = 1 - {\tt pchisq}\Bigl(\frac{(z - y)^2}{\sigma^2(1 + n^{-1})}, \, {\tt df} = 1\Bigr).
\end{align*}
\item[1.] Option~1 starts with possibilistic representations of ``$(Z \mid \Theta)$'' and ``$(\Theta \mid Y)$,'' and combines these into a sort of ``joint IM'' for $(Z,\Theta)$, given $Y$, and then marginalizing via the extension principle.  Here $Y$ and $Z$ are conditionally independent, so the combination strategy described above is appropriate.  The two possibilistic representations I'll take as the starting point are
\begin{align*}
f_\theta(z) & = 1 - {\tt pchisq}\Bigl( \frac{(z-\theta)^2}{\sigma^2}, {\tt df} = 1 \Bigr) \\
\pi_y(\theta) & = 1 - {\tt pchisq}\Bigl( \frac{n(\theta - y)^2}{\sigma^2}, {\tt df} = 1 \Bigr).
\end{align*}
Unfortunately, the combination and marginalization can't be done in closed-form, but it's not too difficult to carry out these steps numerically.  
\end{itemize} 
The expressions are sort of messy, so it'll help to be able to visualize the results of the three different constructions.  Figure~\ref{fig:normal.pred.comp} shows the three IM plausibility contour functions for $Z$, given $Y=y$, based on $y=0$, $n=5$, and $\sigma=1$.  In this case, we see that Options~1 and 2 are very similar, with Option~2 appearing to be slightly more efficient than Option~1, but the solution based on Option~3 is by far the most efficient.  That Option~3 is more efficient than Option~2 can actually be seen from the formulas above: the former has ``${\tt df}=1$'' while the latter has ``${\tt df} = 2$'', which explains Option~3's sharp peak compared to Option~2's rounded peak.  Note that the $100(1-\alpha)$\% prediction plausibility interval derived from the Option~3 solution agrees exactly with the standard prediction interval presented in textbooks, which is optimal in all the usual senses.  

Incidentally, a standard result in the Bayesian literature is that, with a default/flat prior for $\Theta$, the posterior predictive distribution is $(Z \mid Y=y) \sim \nm(y, \sigma^2(1 + n^{-1}))$.  This is the maximal inner probabilistic approximation of the predictive possibility measure derived in Option~3 above.  In other words, the probability-to-possibility transform of the default Bayes predictive distribution agrees exactly with the Option~3 solution.  
\end{ex}

\begin{figure}[t]
\begin{center}
\scalebox{0.65}{\includegraphics{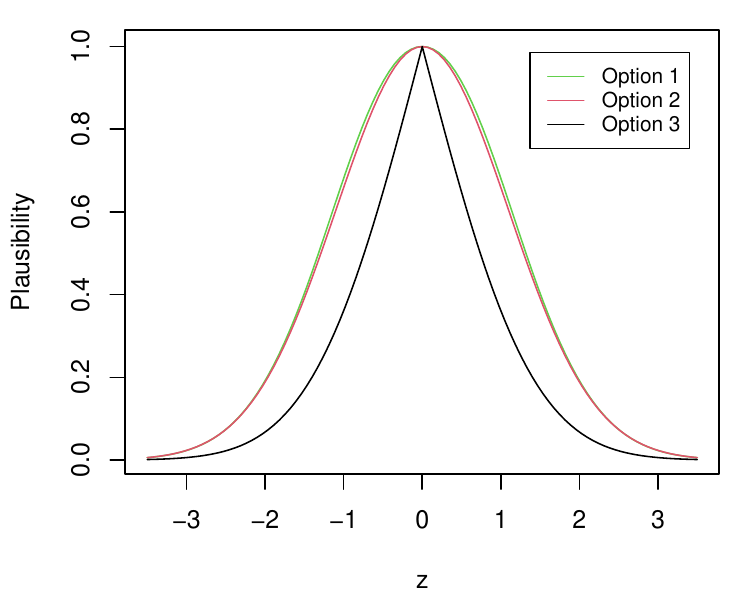}}
\end{center}
\caption{Plots of the predictive IM plausibility contour functions for the three different construction options described in the main text, based on $y=0$, $n=5$, and $\sigma=1$.}
\label{fig:normal.pred.comp}
\end{figure}

\begin{ex}[Multinomial]
\label{ex:mult.pred}
Inference and prediction in the multinomial model is a fundamentally important problem; see, also, Example~\ref{ex:mult.inference} below.  Let $(Y \mid \Theta=\theta)$ have a multinomial distribution, $\mult_K(n,\theta)$, where $K$ is the cardinality of the support, i.e., the number of categories, $n$ is the sample size, and $\theta=(\theta_1,\ldots,\theta_K)$ is a probability vector taking values the probability simplex in $\RR^K$.  A realization of $Y$ is just a frequency table $(Y_1,\ldots,Y_K)$ counting the instances of the $K$ categories in the sample, with the sum of those table entries equal to $n$.  The mass function of $Y$ is given by
\[ p_{Y|\theta}(y) \propto \prod_{k=1}^K \theta_k^{y_k}. \]
Let $(Z \mid \Theta=\theta) \sim \mult_K(1, \theta)$ denote a single independent realization from the same multinomial model.  Technically, $Z$ is a $K$-vector of 0's with a single entry equal to 1, but here I'll treat $Z$ as the position of the 1.  The goal is to construct a predictive IM for $Z \in \{1,2,\ldots,K\}$, given $Y=y$.  

I'll focus here on just the Option~3 construction, again with the vacuous prior for illustration.  The relative profile likelihood function in the context is 
\[ R(y, z) = \frac{\sup_\theta \theta_z^{y_z + 1} \prod_{k \neq z} \theta_k^{y_k}}{\max_\zeta \sup_\theta \theta_\zeta^{y_\zeta + 1} \prod_{k \neq \zeta} \theta_k^{y_k}}. \]
This expression looks a lot messier than it really is, since there are closed-form expressions for the optimization problems in both the numerator and the denominator, though these aren't worth displaying here.\footnote{Note that the optimization problem in the denominator amounts to a sort of entropy minimization.  Entropy is maximized by a uniform distribution, so minimization amounts to taking $\zeta$ to be the category with the largest probability, i.e., a rich-get-richer rule.}  Then the predictive IM has contour 
\[ \pi_y(z) = \sup_\theta \prob_{Y,Z|\theta}\{ R(Y,Z) \leq R(y,z)\}, \quad z \in \{1,2,\ldots,K\}. \]
For illustration, consider the application in \citet{goodman1965} and \citet{denoeux2006} with $n=220$ psychiatric patients and $K=4$ categories corresponding to four diagnoses: neurotic, depressed, schizophrenic, or having a personality disorder.  The observed counts are $y=(91, 49, 37, 43)$, so there's a clear tendency towards the first category, corresponding to neurosis.  Figure~\ref{fig:mult.pred} shows a plot of the IM's predictive plausibility contour.  As expected, the plausibility contour mode is the first category, but the other three categories have non-negligible plausibility too.  In fact, all reasonable $100(1-\alpha)$\% prediction sets for $Z$ contain all four categories. The same conclusion is drawn from other approaches, e.g., \citet{denoeux2006} develops a belief function fo prediction and, for small $\alpha$, the smallest set that his method assigns at least $1-\alpha$ belief to is all four categories.   
\end{ex}

\begin{figure}[t]
\begin{center}
\scalebox{0.65}{\includegraphics{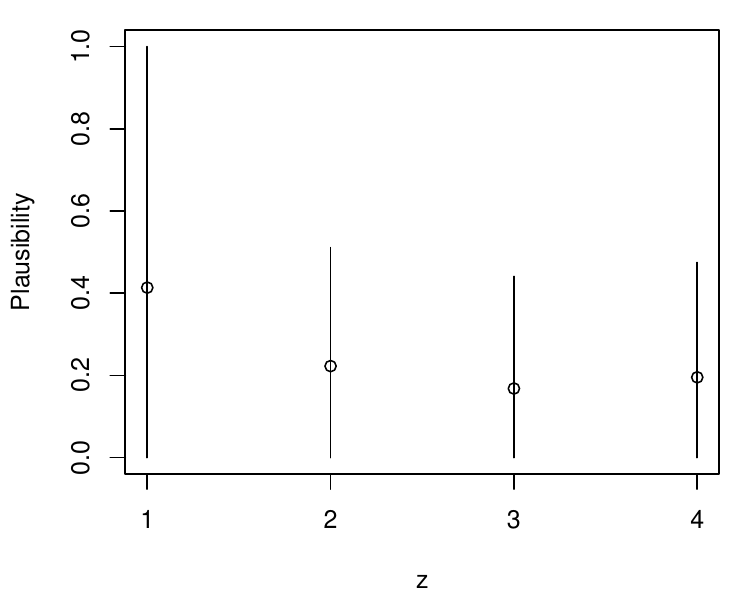}}
\end{center}
\caption{Plot of the predictive IM plausibility contour, $\pi_y(z)$, for the real-data illustration in Example~\ref{ex:mult.pred} involving $K=4$ categories. Circles correspond to the maximum likelihood estimator of $\theta=(\theta_1,\ldots,\theta_4)$.}
\label{fig:mult.pred}
\end{figure}

\begin{ex}[Gamma]
\label{ex:gamma.pred}
Consider a gamma model like in Example~\ref{ex:gamma.mean}, where $Y=(Y_1,\ldots,Y_n)$ denotes an iid sample of size $n$ from a gamma distribution with an unknown $\Theta=(\Theta_1,\Theta_2)$ consisting of the model's shape and scale parameter, respectively.  The focus here, however, is on prediction of a future observable, say, $Z$.  One case, of course, is where $Z=Y_{n+1}$ is a subsequent observation from the same gamma model, but that's not the only possibility.  Another practically relevant case, not uncommon in reliability applications \citep[e.g.,][]{hamada.etal.2004, wang.hannig.iyer.2012}, is that where $Z=\max\{Y_{n+1},\ldots,Y_{n+k}\}$ is the maximum of $k$-many future observations, with $k$ a given non-negative integer.  Note that $Z$ is independent of $Y$.  For this example here, I'll focus on ``Option~1'' where the IM plausibility contour $\pi_y(\theta)$ as presented in Example~10 of Part~II is combined with information about the $\theta$- and $k$-dependent distribution of $Z$ through the relationship \eqref{eq:K}, with the Fisher-based function ${\cal K}$ recommended above.  This is applicable since $Y$ and $Z$ are independent and I'm assuming, as before, that the prior information about $\Theta$ is vacuous.  

It's easy to check that the probability density function for $Z$, given $\Theta=\theta$ and a fixed $k$, is given by 
\[ p_\theta^{(k)}(z) = k \, p_\theta(z) \, \{1 - P_\theta(z)\}^{k-1}, \quad z > 0, \]
where $p_\theta$ and $P_\theta$ are the density and distribution functions of $\gam(\theta_1,\theta_2)$, with $\theta=(\theta_1, \theta_2)$.  Then the possibilistic representation of the distribution of $Z$ is 
\[ f_\theta(z) = \prob_{Z|\theta}\{ p_\theta^{(k)}(Z) \leq p_\theta^{(k)}(z) \}, \quad z > 0, \]
which, for given $\theta$ and $k$, can easily be evaluated based on Monte Carlo.  Plugging this and the joint contour $\pi_y(\theta)$ as presented in Example~10 of Part~II into the formula \eqref{eq:K} gives a (strongly valid) predictive IM with contour $\pi_y(z)$ for $Z$, given $Y=y$, with $k=3$.  For the same data as in Example~\ref{ex:gamma.mean}, Figure~\ref{fig:gamma2.pred} shows both the joint contour and this predictive plausibility contour. The sample mean is about 113 and the sample standard deviation is about 36, so one can't rule out the possibility that $Z$ is considerably larger than the maximum in the sample, which is 165.  So the fact that the IM's predictive contour has a long tail and the corresponding 90\% prediction interval stretches has upper bound near 244 is not unexpected.  
\end{ex}

\begin{figure}[t]
\begin{center}
\subfigure[Joint contour for $\Theta=(\text{shape, scale})$]{\scalebox{0.6}{\includegraphics{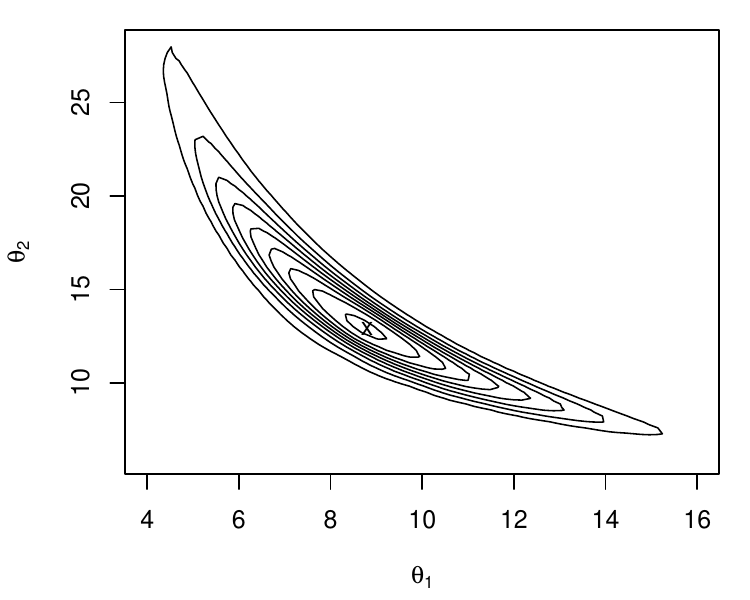}}}
\subfigure[Predictive contour for $Z$]{\scalebox{0.6}{\includegraphics{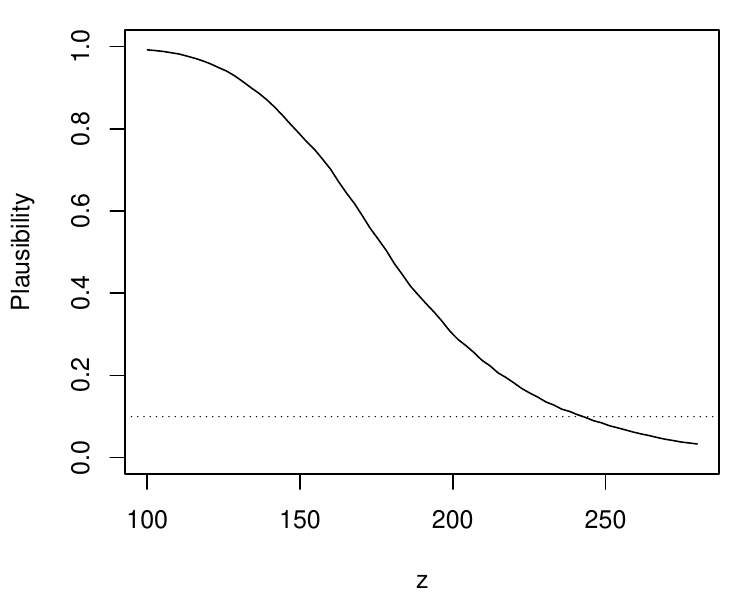}}}
\end{center}
\caption{Panel~(a) shows the joint IM contour for $\Theta$ based on the same rat survival time data analyzed in Example~\ref{ex:gamma.mean}.  Panel~(b) shows the IM's predictive contour for $Z$, the maximum of $k=3$ many future gamma observations.}
\label{fig:gamma2.pred}
\end{figure}

\section{Non-parametric possibilistic IMs}
\label{S:semi}

\subsection{Setup}

To set the notation and terminology, suppose that the distribution $\prob_{Y|\theta}$ of the data $Y$ is indexed by $\theta \in \TT$, where $\theta$ is an infinite-dimensional index such as the density.  The point is that I'm just using $\theta$ to label the distribution $\prob_{Y|\theta}$, so it's not a parametric model in the usual sense.  To make this point clear, I'll directly write $\theta$ for the density/mass function of $\prob_{Y|\theta}$ in some of the expressions below.  The {\em non-parametric} problem assumes that the distribution of $Y$ is unknown, which means there's an uncertain variable $\Theta \in \TT$ about which inference is to be drawn.  Just like in the previous sections of this paper, it'll often be the case that only some (finite-dimensional) feature $\Phi$ of $\Theta$ is of interest; that is, $\Phi = f(\Theta)$ for some functional $f: \TT \to \FF$.  


In addition, there might also be (partial) prior information about $\Theta$ or directly about $\Phi$ to be included in the model formulation.  If $\Theta$ denotes the density function, then prior information might come in the form of lending more credibility to densities that are smoother.  
An advantage, however, of the partial prior IM framework is that one need not say anything about the density $\Theta$ directly, available prior information about the relevant feature $\Phi$ can be incorporated directly.  Indeed, all one needs is a contour $q_\Phi(\phi)$ and they're ready to construct a valid, partial prior-dependent IM for $\Phi$.

\subsection{Valid IM construction}
\label{SS:np.construction}

None of the theory presented in the previous sections of the paper required that the unknowns were finite-dimensional, so there's nothing new here.  For the non-parametric case, if partial prior information for $\Theta$ is available and encoded in the contour $q_\Theta$, then I'll first construct the plausibility order 
\[ R_q(y, \theta) = \frac{\theta(y) \, q_\Theta(\theta)}{\sup_{\vartheta \in \TT} \{ \vartheta(y) \, q_\Theta(\vartheta)\}}, \quad (y,\theta) \in \YY \times \TT, \]
and then define the IM for $\Theta$---or, equivalently, for the distribution of $Y$ itself---as 
\[ \pi_y(\theta) = \uprob_{Y,\Theta}\{ R_q(Y,\Theta) \leq R_q(y,\theta)\}, \quad \theta \in \TT, \]
where $\uprob_{Y,\Theta}$ is the upper joint distribution of $(Y,\Theta)$ based on the non-parametric model and the partial prior information.  Obviously, since $\Theta$ is such a complex object and the partial prior is completely general, there's no opportunity for dimension reduction and efficiency gain---we simply get what we get.  But strong validity of this IM for $\Theta$ holds by the general theory in Part~II.  In case prior information about $\Theta$ is vacuous, then there's an opportunity to reduce dimension as before, 
\begin{equation}
\label{eq:pi.np.vac}
\pi_y(\theta) = \prob_{Y|\theta}\{ R(Y,\theta) \leq R(y, \theta)\}, \quad \theta \in \TT, 
\end{equation}
where, now, $R(y,\theta) = \theta(y) / \sup_{\vartheta \in \TT} \vartheta(y)$ is the non-parametric relative likelihood.  A complete prior can be handled in this non-parametric case too, as described in Part~II, i.e., rather than fixing the value of $\theta$ as in the above display, I fix the value of $y$.  The point is that, in the vacuous- or complete-prior cases, there's some opportunity for dimension reduction: the vacuous prior collapses the Choquet integral in the $\Theta$ dimension whereas the complete prior does so in the $Y$ dimension.  

Next, suppose the goal is inference on a finite- and probably relatively low-dimensional feature $\Phi = f(\Theta)$, e.g., a moment or quantile.  That the quantity of interest is low-dimensional creates an opportunity for efficiency gain compared to starting with the IM for $\Theta$ and marginalizing to $\Phi$ from there via the extension principle.  Following the general framework presented in Section~\ref{S:nuisance}, the construction of an IM for $\Phi$, based on partial prior information encoded in the contour $q_\Phi$, starts with a likelihood-driven plausibility ordering 
\[ R_q(y,\phi) = \frac{\{\sup_{\theta: f(\theta)=\phi} \theta(y)\} \, q_\Phi(\phi)}{\sup_\theta \{ \theta(y) \, q_\Phi(f(\theta))\}}, \quad (y,\phi) \in \YY \times \FF. \]
In what follows, it's important to remember that $\theta(y)$ represents $\prob_{Y|\theta}(\{y\})$, the distribution determined by $\theta$ evaluated at the data $y$.  So the optimization over $\theta$ in the above display is closely related to the so-called {\em empirical likelihood} framework described by, e.g., \citet{owen1988, owen1990, owen.book}, \citet{qin.lawless.1994}, \citet{tang.leng.2010}, and others.  Then the IM construction proceeds exactly as before, producing a contour 
\[ \pi_y(\phi) = \uprob_{Y,\Theta}\{ R_q(Y,f(\Theta)) \leq R_q(y,\phi)\}, \quad \phi \in \FF. \]
Strong validity holds by the general theory, so the critical question is how to carry out the necessary computations.  For this, let me consider the special case of vacuous prior information, so that the plausibility ordering takes the form 
\begin{equation}
\label{eq:rpel}
R(y,\phi) = \frac{\sup_{\theta: f(\theta)=\phi} \theta(y)}{\sup_\theta \theta(y)}, \quad (y,\phi) \in \YY \times \FF. 
\end{equation}
This is exactly the empirical likelihood ratio statistic discussed and analyzed extensively in \citet{owen.book}.  In that case, the IM contour function is 
\begin{equation}
\label{eq:pi.rpel}
\pi_y(\phi) = \sup_{\theta: f(\theta)=\phi} \prob_{Y|\theta}\{ R(Y,\phi) \leq R(y,\phi)\}, \quad \phi \in \FF, 
\end{equation}
which is simply the p-value function for the empirical likelihood ratio test of $H_0: \Phi=\phi$ based on the statistic \eqref{eq:rpel}.  Under regularity conditions \citep[e.g.,][Theorem~3.6]{owen.book}, when $Y=(Y_1,\ldots,Y_n)$ is an iid sample and $n \to \infty$, a version of Wilks's theorem applies to $-2$ times the log empirical likelihood ratio, that is, it has a limiting chi-square distribution that's independent of $\theta$.  Therefore, the strongly valid, vacuous-prior IM for $\Phi$ can be well approximated by a chi-square tail probability fairly generally.  Other kinds of approximations are possible as well; see Section~\ref{SS:approx}.  My point is that the very same general framework developed above and illustrated in simple, low-dimensional parametric inference problems can be readily applied to non-parametric problems.  The catch, however, is that computation is obviously a much more serious challenge in these high-complexity settings. I imagine that there will be cases in which it's much more efficient---computationally and/or statistically---to construct an approximate pivot by some other means than via the relative empirical likelihood.  The downside to a non-likelihood-based construction is that it's no longer clear how partial prior information can be incorporated in a principled way.  More on this in Section~\ref{S:beyond}. 

I should also mention briefly about how the ideas developed above would apply to the case of prediction without parametric model assumptions, as I hinted at in Section~\ref{S:prediction}.  Suppose, for simplicity, that $Y=(Y_1,\ldots,Y_n)$ and $Z=Y_{n+1}$ consist of iid samples from a distribution $\prob_{Y|\Theta}$, where $\Theta$ is uncertain, and the goal is prediction of $Z$.  Following the prediction strategy above, and assuming prior information about $\Theta$ is vacuous, I'd take the plausibility ordering to be 
\[ R(y,z) = \frac{\sup_{\theta \in \TT} \{\theta(y) \, \theta(z)\}}{\sup_{x \in \YY,\theta \in \TT} \{\theta(y) \, \theta(x)\}}, \quad (y,z) \in \YY^n \times \YY. \]
From here, it's straightforward to write down an expression for the predictive IM for $Z$, given $Y=y$, that's free of parametric model assumptions.  The problem, however, is that it's not clear how the predictive IM can be computed.  If it turned out that $R(Y,Z)$ were a pivot, with a known distribution independent of $\Theta$, then the predictive IM computations would be immediate.  Since it's not clear if/when this pivotal structure holds, one might consider an alternative strategy that sacrifices some of the efficiency-related benefits of working exclusively with likelihoods for the benefit of a pivotal structure that aids in computation; see Section~\ref{S:beyond} for more discussion on this.

\subsection{Approximations}
\label{SS:approx}

In the previous subsection I already suggested the option to approximate the IM's plausibility contour for $\Phi$, in the vacuous prior case, using the Wilks-like limiting distribution of the empirical likelihood ratio statistic.  One can also make certain adjustments (one-to-one transformations) to the plausibility ordering, e.g., Bartlett correction \citep[e.g.,][Sec.~3.3]{owen.book} that would improve the accuracy of of this limiting approximation.  But asymptotic approximations aren't the only approximation games in town and here I want to mention two such alternatives.  This discussion will be relatively high-level, focusing on the main ideas rather than details.  Also, for ease of connecting this new framework to the existing literature, I'll focus here exclusively on the vacuous-prior case; extending these ideas to handle the case with partial prior information is a huge open question that will have to be resolved elsewhere. 

The first of these alternatives is a relatively obvious one: just use the {\em bootstrap} \citep[e.g.,][]{efron1979, efrontibshirani1993, owen1988}.  There are, however, some subtleties in getting this to work properly, as I explain next.  To start, for simplicity, let $Y=(Y_1,\ldots,Y_n)$ consist of $n$ many iid observations from $\prob_{Y|\Psi}$ depending on the unknown (infinite-dimensional) $\Theta$.  Suppose the goal is inference on $\Phi=f(\Theta)$ and let $\hat\phi_Y$ denote the maximum likelihood estimator of $\Phi$, i.e., 
\[ \hat\phi_Y = f(\hat\theta_Y), \]
where $\hat\theta_Y$ is the (non-parametric) maximum likelihood estimator of $\Theta$.  For a bootstrap sample size $B$, let $y^b$ denote a random sample of size $n$, with replacement, drawn from the observed data values $y=(y_1,\ldots,y_n)$.  Then a bootstrap approximation of the IM contour $\pi_y(\phi)$ for $\Phi$, given $Y=y$, in \eqref{eq:pi.rpel} is 
\begin{equation}
\label{eq:pi.boot}
\pi_y(\phi) \approx \pi_y^\text{boot}(\phi) := \frac1B \sum_{b=1}^B 1\{ R(y^b, \hat\phi_y) \leq R(y,\phi)\}, \quad \phi \in \FF, 
\end{equation}
where $R$ is as given in \eqref{eq:rpel}.  Two remarks deserve to be made here.  
\begin{itemize}
\item First, note that $\hat\phi_y$ remains fixed as $b=1,\ldots,B$.  The reason is that the $y^b$'s represent samples from the ``population'' $y$ and the ``true $\Phi$'' corresponding to the ``population $y$'' is $\hat\phi_y$.  As the bootstrap story goes, if the $y$ sample is sufficiently informative, say, as $n \to \infty$, then the ``population $y$'' approximates $\prob_{Y|\Theta}$ and, therefore, the $R(y^b,\hat\phi_y)$'s are approximately representative samples of $R(Y,f(\Theta))$, so the approximation in \eqref{eq:pi.boot} should be relatively accurate. 
\vspace{-2mm}
\item Second, note that there's no ``$\sup_{\theta: f(\theta)=\phi}$'' in \eqref{eq:pi.boot} like there is in \eqref{eq:pi.rpel}.  The reason is that, technically, strong validity only requires that $\pi_Y(\Phi)$ be stochastically bounded at the ``true $\Phi$'' or, in this case, at $f(\Theta)$.  This control at the ``true value'' can't be achieved exactly with our less-than-fully-informative finite samples, so the supremum is a conservative adjustment to make up for this shortcoming.  In the Utopian bootstrap world, where $n \to \infty$ and samples are fully informative, the true values are recovered and there's no need for a supremum.  It's no different than the situation described above where the relative profile empirical likelihood is an asymptotic pivot and, therefore, the supremum over $\theta$ such that $f(\theta)=\phi$ drops out.  The point is that there's no finite-sample strong validity guarantees for the bootstrap approximation in \eqref{eq:pi.boot}, only asymptotically approximation strong validity as $n \to \infty$; see, e.g., \citet[][Theorem~1]{cella.martin.imrisk}.
\end{itemize} 

Another interesting but very different kind of approximation is that based on the so-called {\em universal inference} framework developed in \citet{wasserman.universal}.  The developments here are based on (profile) relative likelihood functions, but the previous make use of a split (profile) relative likelihood function, which I explain below.  Suppose, for simplicity, that $Y$ consists of $n$ many iid observations like above.  Then split this collection into two chunks, denoted by $Y^{(1)}$ and $Y^{(2)}$, where, for concreteness, $Y^{(1)}$ corresponds to the first $\lceil n/2 \rceil$ many observations and $Y^{(2)}$ the rest.  Let $\hat\theta_{Y^{(2)}}$ denote the maximum likelihood estimator of $\Theta$ based on the second chunk of data, i.e., 
\[ \hat\theta_{Y^{(2)}} = \arg\max_{\theta \in \TT} \theta(Y^{(2)}). \]
Now, for the quantity of interest $\Phi = f(\Theta)$, define the split-dependent plausibility ordering given by 
\[ R_\text{split}(y, \phi) = \frac{\sup_{\theta: f(\theta)=\phi} \theta(y^{(1)})}{\hat\theta_{y^{(2)}}(y^{(1)})}, \quad (y,\phi) \in \YY \times \FF. \]
This is just a ratio of $y^{(1)}$-data likelihoods, the numerator profiled over those $\psi$ satisfying $f(\theta)=\phi$ and the denominator evaluated at $\hat\theta_{y^{(2)}}$.  If both chunks of data are ``similarly informative,'' then one would expect $\hat\theta_{y^{(1)}} \approx \hat\theta_{y^{(2)}}$, in which case $R_\text{split}$ is just the relative profile likelihood based on $y^{(1)}$.  But note that $R_\text{split}(y,\phi)$ is not bounded above by 1; when this not-bounded-by-1 feature might be an issue, see below, it's easy enough to just truncate it at 1.  Applying the general IM construction above to this split-based plausibility order gives
\[ \pi_y(\phi) = \sup_{\theta: f(\theta)=\phi} \prob_{Y|\theta}\{ R_\text{split}(Y,\phi) \leq R_\text{split}(y,\phi)\}. \]
This is no more straightforward to compute than the original, no-split contour, but it's easy to approximate.  Indeed, by Markov's inequality, 
\begin{align*}
\pi_y(\phi) & = \sup_{\theta: f(\theta)=\phi} \prob_{Y|\theta}\{ R_\text{split}(Y,\phi) \leq R_\text{split}(y,\phi)\} \\
& = \sup_{\theta: f(\theta)=\phi} \prob_{Y|\theta}\{ R_\text{split}(Y,\phi)^{-1} \geq R_\text{split}(y,\phi)^{-1}\} \\
& \leq 1 \wedge \Bigl[ R_\text{split}(y,\phi) \, \sup_{\theta: f(\theta)=\phi} \E_{Y|\theta}\{ R_\text{split}(Y,\phi)^{-1}\} \Bigr] \\
& \leq 1 \wedge R_\text{split}(y,\phi),
\end{align*}
where I've made use of the key result in Equation~(6) of \citet{wasserman.universal}, a simple consequence of the law of iterated expectation, which states that the $\prob_{Y|\theta}$-expected value of $R_\text{split}(Y,\phi)^{-1}$ is no more than 1, uniformly in $\theta$ with $f(\theta)=\phi$.  Then the approximation I propose is to take 
\[ \pi_y^\text{split}(\phi) = 1 \wedge R_\text{split}(y,\phi), \]
which is relatively easy to compute---it only requires evaluating the split relative profile likelihood function, no probability calculations necessary.  Moreover, it follows from Theorem~3 in \citet{wasserman.universal} that the contour in the above display defines a strongly valid IM for $\Phi$, given $Y=y$.  This might appear too good to be true, but there's a price for the apparent simplicity: the data-splitting strategy generally results in a loss of efficiency.  This could be an acceptable trade-off in complex non-parametric problems where there might not be any other options for constructing a valid IM.


\subsection{Examples}

\begin{ex}[Multinomial]
\label{ex:mult.inference}
Reconsider the multinomial model from Example~\ref{ex:mult.pred}, with $\Theta=(\Theta_1,\ldots,\Theta_K)$ the $K$-dimensional vector in the probability simplex $\TT$, so that $\Theta_k$ denotes the probability of class $k$, for $k=1,\ldots,K$.  Since every discrete distribution on $\{1,\ldots,K\}$ can be described by such a $\Theta$ vector, I refer to this as the ``discrete non-parametric'' model.  It's for this reason that the multinomial model, while relatively simple, is of fundamental importance.  Many of the more general non-parametric developments, such as Bayesian non-parametrics via the Dirichlet process \citep[e.g.,][]{ferguson1973}, can be seen as extensions of the multinomial model.  

More specifically, let $(Y \mid \Theta=\theta) \sim \mult_K(n, \theta)$, so that $Y=(Y_1,\ldots,Y_K)$ where $Y_k$ is the sample frequency count for category $k$, with $k=1,\ldots,K$.  As before, this determines a likelihood function which, using the notation of this section, is given by $\theta(y) \propto \prod_{k=1}^K \theta_k^{y_k}$, for $\theta \in \TT$.  Assuming vacuous prior information for $\Theta$, just for simplicity, the plausibility ordering is determined by the relative likelihood alone, 
\[ R(y, \theta) = \prod_{k=1}^K \Bigl( \frac{n \theta_k}{y_k} \Bigr)^{y_k}, \]
where I've plugged in the likelihood function maximizer, which is available in closed form.  From here it's straightforward to evaluate the contour function of the (strongly valid) IM for $\Theta$, via Monte Carlo, using the formula \eqref{eq:pi.np.vac}.  

This is the same multinomial problem considered in a recent discussion paper \citep{gong.jasa.mult, gong.jasa.rejoinder} published in the {\em Journal of the American Statistical Association}.  They're approaching the problem from the perspective of Dempster--Shafer inference, so a comparison with the proposed solution here makes sense.  Jacob et al.~compared their solution to another with an IM-like flavor \citep{lawrence.etal.mult}.  Jacob, et al.~consider two real-data examples, both involving scientifically relevant questions concerning the multinomial parameter $\Theta$, and I'll reanalyze one of them here.  

In the $K=4$ case, \citet[][p.~368]{rao.linear.book} considered a so-called linkage model wherein the parameter $\Theta=(\Theta_1,\ldots,\Theta_4)$ has the following low-dimensional structure:
\[ \vartheta(\omega) = \Bigl( \frac12 + \frac{\omega}{4}, \frac{1-\omega}{4}, \frac{1-\omega}{4}, \frac{\omega}{4} \Bigr), \quad \omega \in (0,1). \]
In Rao's example, the categories represent four different phenotypes in animals, so the above model represents a simple(r) relationship between the proportions of these phenotypes in the animal population in question.  The data analyzed in \citet{gong.jasa.mult} and in \citet{lawrence.etal.mult} has observed cell counts $y=(25, 3, 4, 7)$.  My focus here is on the question of whether the above linkage model is {\em plausible} given the observed data.  So I have in mind the following assertion/hypothesis about the uncertain $\Theta$:
\[ A = \{\text{$\Theta = \vartheta(\omega)$ for some $\omega \in (0,1)$}\}, \]
and the goal is to calculate $\uPi_y(A)$, the IM's upper probability assigned to the above assertion; if this quantity is small, then I might be willing to reject the claim that the low-dimensional structure imposed by the linkage model is present in this data example.  By the IM's consonance structure, this amounts to solving a simple, one-dimensional optimization problem
\[ \uPi_y(A) = \sup_{\omega \in (0,1)} \pi_y\{ \vartheta(\omega) \}. \]
Figure~\ref{fig:mult.link} shows a plot\footnote{Note that this is different from marginalization via the extension principle. In marginalization, every value of the full parameter determines a value of the reduced parameter. Here, however, not all $\Theta \in \TT$ correspond to a $\vartheta(\omega)$ for some $\omega$.  This explains why the curve in Figure~\ref{fig:mult.link} doesn't reach the value 1 as we'd expect in the case of marginalization, e.g., Figure~\ref{fig:or}.} of $\omega \mapsto \pi_y\{\vartheta(\omega)\}$ and, clearly, the supremum is attained at $\omega \approx 0.61$ and the plausibility there is $\approx 0.97$. Here, obviously the plausibility is large, so the data shows effectively no signs of disagreement with Rao's linkage model that assumes $\Theta$ is of the form $\vartheta(\omega)$ for some $\omega \in (0,1)$.  This conclusion isn't surprising, given that Rao---who's a pretty smart guy---already took the linkage model as given for his analysis of these data.  From here, one can take the reduced model and carry out the IM construction to make inference on the uncertain linkage parameter $\Omega \in (0,1)$ directly.
\end{ex}

\begin{figure}[t]
\begin{center}
\scalebox{0.65}{\includegraphics{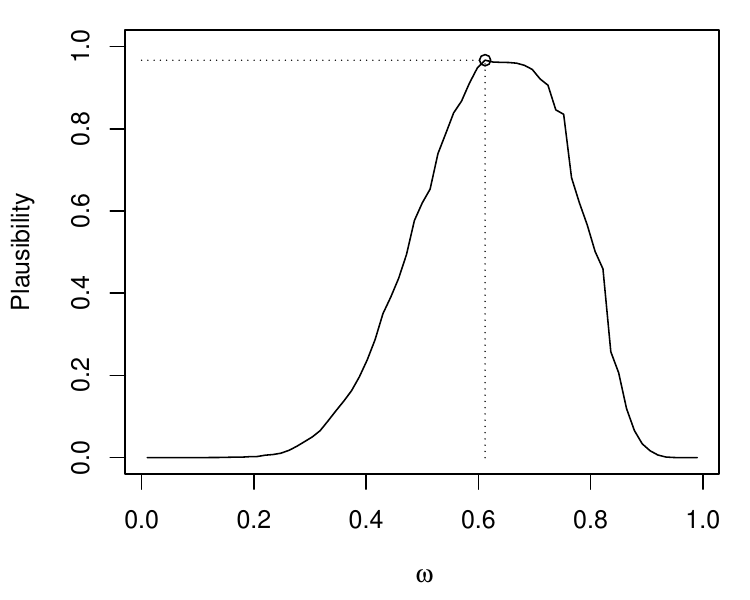}}
\end{center}
\caption{Plot of the function $\omega \mapsto \pi_y\{ \vartheta(\omega) \}$ in the multinomial/genetic linkage model in Example~\ref{ex:mult.inference}. The maximum value is $\uPi_y(A)$, the plausibility that the unknown $\Theta$ satisfies the linkage model constraint indexed by $\omega$.}
\label{fig:mult.link}
\end{figure}

\begin{ex}[Non-parametric quantile]
\label{ex:quantile}
Suppose the goal is inference on, say, the $r^\text{th}$ quantile $\Phi=\Phi_r$ of a completely unknown distribution, $\prob_{Y|\Theta}$, i.e., where $\Phi$ is such that 
\[ \prob_{Y|\Theta}(Y \leq \Phi) = r, \quad r \in (0,1). \]
Let $Y=(Y_1,\ldots,Y_n)$ be an iid sample of size $n$ from this unknown distribution.  Following the general framework as described above, I want $R(y,\phi)$ to be the empirical likelihood ratio statistic in \eqref{eq:rpel} which, in this case \citep[e.g.,][Theorem~5]{wasserman1990b}, is 
\[ R(y,\phi) = \Bigl\{ \frac{r}{u(y,\phi)} \Bigl\}^{u(y,\phi)} \Bigl\{ \frac{1-r}{n-u(y,\phi)} \Bigr\}^{n-u(y,\phi)}, \]
where
\[ u(y,\phi) = \begin{cases} |\{i: y_i \leq \phi\}| & \text{if $\phi < \hat\phi_y$} \\ nr & \text{if $\phi=\hat\phi_y$} \\ |\{i: y_i < \phi\}| & \text{if $\phi > \hat\phi_y$}, \end{cases} \]
and $\hat\phi_y$ the $r^\text{th}$ quantile of the sample $y$.  This is assuming, for computational simplicity, that the prior information about $\Phi$ is vacuous.  In this case, it's straightforward to get a bootstrap approximation, $\pi_y^\text{boot}(\phi)$, of the marginal IM's plausibility contour for $\Phi$ as in \eqref{eq:pi.boot}.  I simulated $n=25$ observations from a $\gam(3,1)$ distribution and that bootstrap-based approximate plausibility contour is plotted in Figure~\ref{fig:np.quantile}(a).  The stair-step pattern is a result of $R$ only depending on certain sample counts rather than the numerical values.  The true quantile in this case is $\approx 3.6$, which is right near the peak of the plausibility contour, as desired.  

Recall that this is only an approximate IM, so there's no guarantee that strong validity holds exactly.  It's not difficult, however, to get empirical confirmation that validity does hold, at least approximately.  Figure~\ref{fig:np.quantile}(b) shows a plot of the distribution function 
\begin{equation}
\label{eq:np.quantile.cdf}
\alpha \mapsto \prob_{Y|\theta}\{ \pi_Y^\text{boot}(f(\theta)) \leq \alpha \}, \quad \alpha \in [0,1], 
\end{equation}
and the fact that this curve falls below the diagonal line is an indication that strong validity holds.  This is only for one distribution, namely, $\gam(3,1)$ but the other experiments I conducted lead to the same conclusion, namely, that strong validity holds for the bootstrap-based approximation, even for relatively small $n$. 
\end{ex}

\begin{figure}[t]
\begin{center}
\subfigure[Plausibility contour]{\scalebox{0.6}{\includegraphics{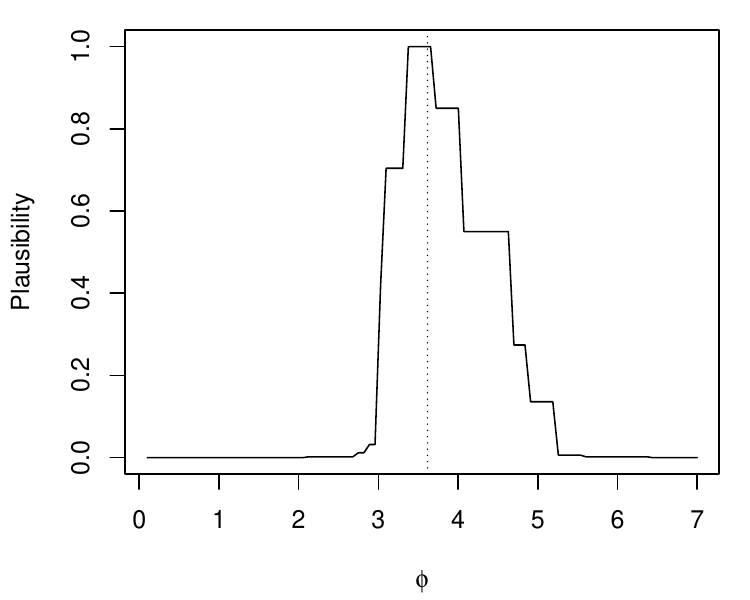}}}
\subfigure[Distribution function \eqref{eq:np.quantile.cdf}]{\scalebox{0.6}{\includegraphics{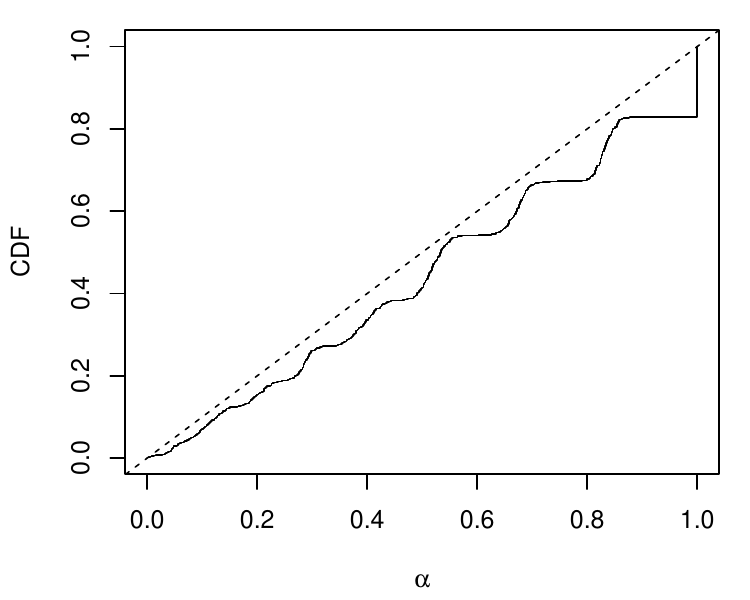}}}
\end{center}
\caption{Plots pertaining to Example~\ref{ex:quantile}.  Panel~(a) is the bootstrap approximation \eqref{eq:pi.boot} of the marginal IM's plausibility contour for $\Phi$ based on a sample of size $n=25$ from a $\gam(3,1)$ distribution; vertical line is the true quantile, $\approx 3.6$.  Panel~(b) shows a Monte Carlo approximation of the distribution function \eqref{eq:np.quantile.cdf}, for the same gamma model, and that the curve falls below the diagonal line confirms strong validity.}
\label{fig:np.quantile}
\end{figure}

\begin{ex}[Non-parametric mean]
\label{ex:mean}
Arguably the most fundamental problems in statistics is inference on the mean of a population based on random sampling.  Here I let $\Phi$ denote that unknown mean but I assume nothing more about the underlying distribution, $\Theta$, other than that its tails are such that it admits a finite mean.  For this case, assuming vacuous prior information about $\Phi$, just for simplicity, I can follow the IM construction in \eqref{eq:pi.rpel}, with $R(y,\phi)$ being the empirical likelihood ratio statistic for the mean, which is fleshed out in detail in, e.g., \citet[][Ch.~2.9]{owen.book}.  For the computation of $R(y,\phi)$, I used the function {\tt el.test} in the R package {\tt emplik} \citep{R:emplik}.  Using the bootstrap approximation suggested above, I found the plausibility contour $\pi_y^\text{boot}(\phi)$ for based on a real data set consisting of $n=29$ observations on the density of the Earth relative to water taken by Cavendish back in 1798 \citep[][Table~8]{stigler1977}.  The peak of the contour is, of course, at the sample mean $\bar y = 5.48$, and the circle marks the plausibility contour $\pi_y^\text{boot}(\phi^\star)$ at the ``true value'' of $\Phi$, which is $\phi^\star = 5.517$.  The horizontal line at $\alpha=0.05$ determines the upper-$\alpha$ level set, so it's clear that the true value is contained in the $100(1-\alpha)$\% plausibility region based on the analysis here. 
\end{ex}

\begin{figure}[t]
\begin{center}
\scalebox{0.65}{\includegraphics{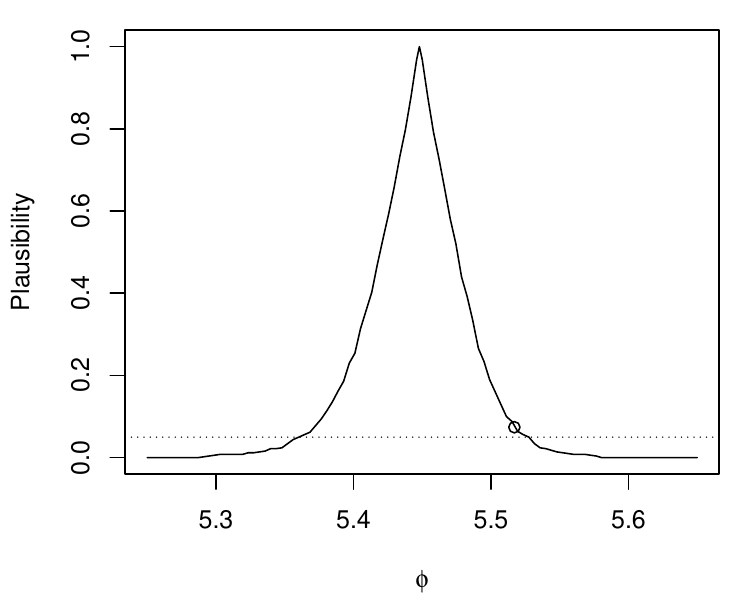}}
\end{center}
\caption{Plot of the IM plausibility contour for the mean $\Phi$, based on the bootstrap approximation \eqref{eq:pi.boot}, in the nonparametric case considered in Example~\ref{ex:mean}.  The data used here in this illustration are those measurements of the density of the Earth relative to water, taken by Cavendish back in 1798 \citep[][Table~8]{stigler1977}. The circle marks the plausibility evaluated at the ``true value'' of $\Phi$, which is 5.517.}
\label{fig:np.mean}
\end{figure}

\section{Possibilistic IMs without a likelihood}
\label{S:beyond}

\subsection{Setup}

In all of the cases discussed previously in the paper and almost all of the previous literature on IMs, the focus was on cases involving parametric models that connect the observable data to the unknown quantities of interest, e.g., model parameters and/or future observations.  But there is a wide class of classical and modern problems that don't fit this mold.  Perhaps the simplest of these problems is inference on an unknown quantile.  More specifically, suppose that $Y=(Y_1,\ldots,Y_n)$ is an iid sample from a distribution $\prob$ supported on $\RR$ and, for a given $q \in (0,1)$, the quantity of interest is the $q$-quantile of $\prob$, defined by the equation $\prob(Y_1 \leq \phi) = q$.  It's straightforward to estimate $\theta$, via the corresponding sample quantile, and, if desired, (approximate) confidence intervals are available.  But what about ``probabilistic inference'' in the sense that I'm concerned with here?  Do I first have to infer the whole (infinite-dimensional) $\prob$ and then marginalize to the scalar $\phi$?  \citet{wasserman.quote} describes this gap poignantly:
\begin{quote}
{\em The idea that statistical problems do not have to be solved as one coherent whole is anathema to Bayesians but is liberating for frequentists. To estimate a quantile, an honest Bayesian needs to put a prior on the space of all distributions and then find the marginal posterior. The frequentist need not care about the rest of the distribution and can focus on much simpler tasks.} 
\end{quote}
My claim is that it's the frequentists' implicit imprecision that's liberating---they leave unspecified (via vacuous models) those things that aren't relevant to their analysis.  My modest goal here is to suggest a framework that would allow probabilistic inference without the anathema, without the Bayesians' requirement to have a (precise) model for everything.  The full force of this will be developed in a subsequent, follow-up paper. 

Issues similar to those in the quantile problem arise is more modern problems.  Machine learning applications often start with a loss function $(y,\phi) \mapsto \ell_\phi(y)$ mapping data and decision rule pairs $(y,\phi) \in \YY \times \FF$ to a loss incurred by applying rule $\phi$ when data is $y$.  Then the data analyst's task boils down to estimation of and inference on the rule that minimizes the risk, or expected loss, i.e., $\Phi = \arg\min_{\phi \in \FF} \prob \ell_\phi$.  Alternatively, it might be that $\Phi$ is defined as the solution to a so-called estimating equation \citep[e.g.,][]{godambe1960, huber1981}, i.e., $\prob z_\Phi = 0$ for a given (possibly vector-valued) function $(y,\phi) \mapsto z_\phi(y)$. In any case, since a parametric statistical model isn't required to define the inferential target, the risk minimizer, estimating equation solution, etc., Manski's law \citep[][p.~1]{manski.book} dictates that the data analyst make as few model assumptions as possible.  Lots of problems, including the quantile example above, fit in this inference-on-risk-minimizers setting, so it's important to address this gap between frequentist, Bayesian, and other probabilistic inference frameworks.  Our work on {\em generalized posterior distributions} \citep[e.g.,][and the references therein]{syring.martin.mcid, syring.martin.scaling, gibbs.general, martin.syring.chapter2022, wu.martin.mcid} shows how to construct ``posterior distributions'' without a likelihood function, thus providing a generalization of Bayesian inference that's particularly suited for cases where the quantity of interest is a risk minimizer.  

The key point at the heart of Wasserman's remark and of what I'm suggesting here is that, while it's possible to connect the observable data $Y$ to quantity of interest $\Phi$ by thinking in terms of a non-parametric model (e.g., with an empirical likelihood as described in Section~\ref{S:semi}), this might not be the most statistically, computationally, or conceptually efficient solution.  As an alternative, one might consider defining a plausibility order for $\Phi$ in terms of a generic mapping $(y,\phi) \mapsto \rho(y,\phi)$ that makes no reference to a (empirical, marginal, or profile) likelihood.  Of course, this lacks the principles developed in Part~II for the case when a model/likelihood is available, but efficient marginal inference in non-parametric problems will likely require bending the rules a bit.  

Since it's currently not clear how one can incorporate partial prior information about $\Phi$ into these no-likelihood applications---that's an important open problem---I'll assume in what follows that the prior information is vacuous.  In that case, all we have to use is the mapping $\rho$, but the principles detailed in Part~II and applied above offer some guidance.  If $\Phi = f(\prob)$ is a relevant feature, some functional applied to the uncertain distribution $\prob$ for data $Y$, then I suggest constructing a (marginal) IM for $\Phi$ with contour 
\begin{equation}
\label{eq:nolik.contour}
\pi_y(\phi) = \sup_{\prob: f(\prob)=\phi} \prob\{ \rho(Y,\phi) \leq \rho(y,\phi)\}, \quad \phi \in \FF. 
\end{equation}
All the desirable properties of the IMs constructed in this manner carry over to this likelihood-free case.  In particular, the plausibility regions $\{\phi: \pi_y(\phi) > \alpha\}$ are exact $100(1-\alpha)$\% confidence regions.  If one could choose $\rho$ such that $\rho(Y,\phi)$ is a pivot under $\prob$ with $f(\prob) = \phi$, then this would be easy to implement via Monte Carlo.  While this can be done in certain applications \citep{cella.isipta23}, there are currently no broadly general strategies available for constructing pivots.  Below I'll highlight, in two practically relevant scenarios, where alternative strategies can be applied to make the above computation manageable, and with little or no sacrifice in validity.  These surveys are meant to just to give an idea of what's possible, further investigations are needed.

\subsection{Inference on risk minimizers}

As described above, for a given loss function $(y,\phi) \mapsto \ell_\phi(y)$, suppose that $\Phi$ is defined as the minimizer of the corresponding risk (expected loss) function, i.e., 
\[ \Phi = \arg\min_{\phi \in \FF} r(\phi), \quad \text{where} \quad r(\phi) = \prob \ell_\phi. \]
This is an unknown/uncertain quantity because $\prob$ itself is unknown/uncertain.  Since the goal is direct inference on $\Phi$, I don't want to introduce an indirectly-relevant ``model parameter'' $\Theta$ so that I can form a likelihood as in the previous sections.  Fortunately, the definition of $\Phi$ as a risk minimizer is enough structure to suggest a plausibility ordering $\rho$ and a corresponding marginal IM for $\Phi$. 

Let $Y^n=(Y_1,\ldots,Y_n)$ consist of an iid sample from $\prob$; note that these could be independent--dependent variable pairs with joint distribution $\prob$, but I'll not make this explicit in the notation.  For the observed $y^n$, the corresponding empirical risk is 
\[ r_{y^n}(\phi) = \frac1n \sum_{i=1}^n \ell_\phi(y_i), \]
and a natural estimate of $\Phi$ is obtained by minimizing the empirical risk:
\[ \hat\phi_{y^n} = \arg\min_\phi r_{y^n}(\phi). \]
Analogous to the relative likelihood plausibility ordering, I propose the following:
\[ \rho(y^n, \phi) = \exp[ -\{ r_{y^n}(\phi) - r_{y^n}(\hat\phi_{y^n}) \} ] \in [0,1]. \]
(The exponential form isn't necessary, that's just to make it resemble the relative likelihood.)  From here, one can define a marginal IM for $\Phi$ with contour as in \eqref{eq:nolik.contour}.  This is exactly the IM solution presented in \citet{cella.martin.imrisk}, and they proposed a bootstrap approximation analogous to that in Section~\ref{SS:approx} above to carry out the necessary computations.  With a bootstrap approximation, there's virtually no hope of having an exact validity result, but they proved an asymptotic validity theorem and demonstrated the IM's strong finite-sample performance in simulations.

\subsection{Prediction} 

Let's revisit the prediction problem discussed in a few places above.  Assume that $Y^n=(Y_1,\ldots,Y_n)$ consists of iid observations from a common distribution $\prob$, and that the goal is to predict the next observation $Y_{n+1}$.  In fact, I can be even more general and assume that the $Y$-process is exchangeable and that $\prob$ is the full joint distribution for the process.  This is an extreme case of marginal inference, where the entirety of the highly-complex $\prob$ is a nuisance parameter to be eliminated.  It may not be realistic/attractive to introduce a density function and a corresponding likelihood as suggested in Section~\ref{SS:np.construction} above, so here I'll avoid the use of likelihood.  A popular prediction method in the literature these days is {\em conformal prediction} \citep[e.g.,][]{vovk.shafer.book1, shafer.vovk.2008}.  A close connection between conformal prediction and IMs has already been demonstrated in \citet{imconformal, imconformal.supervised}, and what I present below offers some new perspectives.  

Let $\rho: \YY^n \times \YY$ be a mapping with two inputs: one is a data set and the other is a candidate value for the next observation.  Without loss of generality, I'll assume that $\rho(y^n, y_{n+1})$ is a measure of ``conformity'' of the candidate value $y_{n+1}$ with the data set $y^n$; that is, larger values correspond to $y_{n+1}$ that's consistent with the values in $y^n$.  For example, if $\hat y_{y^n}$ is a point prediction of the next observation, then the plausibility order $\rho$ can be defined as 
\[ \rho(y^n, y_{n+1}) = -d(\hat y_{y^n}, y_{n+1}), \]
where $d \geq 0$ is any suitable measure of distance between two points in $\YY$.  The only other constraint is that $\rho$ be symmetric in its first argument, i.e., the data set $y^n$ can be shuffled arbitrarily without affecting the value of $\rho$.  Having specified this plausibility order, the predictive IM contour for $Y_{n+1}$ is 
\[ \pi_{y^n}(y_{n+1}) = \sup_\prob \prob\{ \rho(Y^n, Y_{n+1}) \leq \rho(y^n, y_{n+1}) \}, \quad y_{n+1} \in \YY, \]
where the supremum is over all exchangeable joint distributions for the full $Y$-process.  Note that the probability calculation above is with respect to the joint distribution of $(Y^n, Y_{n+1})$ under $\prob$.  Even though this looks a little different than the setup above, all the same strong (prediction) validity properties hold, in particular, 
\begin{equation}
\label{eq:conformal.validity}
\sup_\prob \prob\{ \pi_{Y^n}(Y_{n+1}) \leq \alpha \} \leq \alpha, \quad \text{all $\alpha \in [0,1]$, all $n$}. 
\end{equation}
The problem, of course, is that the supremum makes evaluation of the IM contour unattainable.  A key point, however, is that the supremum also makes the resulting IM unnecessarily conservative.  By applying the {\em Principle of Minimal Complexity} from Part~II, it's possible to reduce the dimension of the aforementioned Choquet integral, which makes the computation simpler and the IM more efficient.  As explained in Part~II and applied in a few places above, the implementation of the {\em Principle} often boils down to conditioning on things that can be meaningfully conditioned on.  In this case, the structure of the problem makes it possible to condition on the {\em set of values} $\{y_1,\ldots,y_n,y_{n+1}\}$ while leaving their arrangement unspecified.  This set is, of course, a minimal sufficient statistic, so conditioning on this feature will eliminate the dependence on the unknown $\prob$, so the supremum drops out completely.  That is, the new predictive IM contour---based on conditioning and the aforementioned {\em Principle}---is given by
\begin{align*}
\pi_{y^n}(y_{n+1}) & = \sup_\prob \prob\bigl[ \rho(Y^n, Y_{n+1}) \leq \rho(y^n, y_{n+1}) \mid \{y_1,\ldots,y_n,y_{n+1}\} \bigr] \\
& = \frac{1}{(n+1)!} \sum_\sigma 1\{ \rho(y^{\sigma(1:n)}, y_{\sigma(n+1)}) \leq \rho(y^n, y_{n+1})\},
\end{align*} 
where the sum is over all $(n+1)!$ many permutations, $\sigma$, of the integers $1,\ldots,n,n+1$.  Finally, since $\rho$ is symmetric in its first argument, the right-hand side above can be further simplfied:
\[ \pi_{y^n}(y_{n+1}) = \frac{1}{n+1} \sum_{i=1}^{n+1} 1\{ \rho(y_{-i}^{n+1}, y_i) \leq \rho(y^n, y_{n+1})\}, \quad y_{n+1} \in \YY. \]
The reader will surely recognize the right-hand side above as the so-called ``transducer'' or ``p-value'' output produced by the inductive conformal prediction algorithm.  In particular, a result establishing what is equivalent to the prediction validity property in \eqref{eq:conformal.validity} can be found in Corollary~2.9 of \citet{vovk.shafer.book1}.  The derivation above, which makes use of conditioning and sufficiency of the empirical distribution more closely resembles that in \citet{faulkenberry1973} and, more recently, \citet{hoff2023}.  That one can arrive at the very powerful conformal prediction methodology through a (generalized---in the sense of allowing generic orderings $\rho$) IM-driven line of reasoning is quite remarkable.

\section{Conclusion}
\label{S:discuss}

This paper, Part~III of the series, considered the problem of efficient marginal inference on an interest parameter through a suitable elimination of the underlying nuisance parameters.  Depending on the problem at hand, this can be relatively straightforward or quite difficult (at least computationally).  For inference on parameters in a posited statistical model, if there's an ``ideal factorization,'' then valid and efficient marginal inference is almost immediate, through a general relative profile likelihood-based formulation.  Outside the ``ideal'' class of problems, the same proposal still works and is shown to very strong solutions in some challenging applications, namely, the gamma mean and Behrens--Fisher problem.  In fact, based on the results presented in Example~\ref{ex:bf} above, my conjecture is that the proposed IM solution is the best available among in the sense of being exactly valid and also empirically efficient.  There are certain cases where the profile relative likelihood strategy is inefficient, in particular, when there's a large number of nuisance parameters; but this risk can be anticipated, and other marginalization strategies can be applied, as I showed in Examples~\ref{ex:length}--\ref{ex:ns}.  

Prediction problems can be viewed as extreme cases of marginal inference, where all of the model parameters are nuisance and to be eliminated.  Here the same relative profile likelihood-based construction is possible, leading to what I called a predictive IM that is provably valid and, among other things, can be used to construct prediction regions for features of future observables.  For instance, in Example~\ref{ex:gamma.pred}, I showed how to construct an valid predictive IM for the maximum of the next $k$ realizations in a sequence of gamma observables.  This same problem has been investigated in, e.g., \citet{hamada.etal.2004},  \citet{wang.hannig.iyer.2012}, and \citet{impred}, but none of these proposals come equipped with exact prediction coverage guarantees.  

The first part of this paper, and most of the previous literature on IMs, focused on the case of a finite-dimensional parametric model for the observable data.  Section~\ref{S:semi} lays the groundwork for a new, (empirical) likelihood-based approach for marginal inference on certain features of the non-parametric model, indexed by an infinite-dimensional unknown.  In this case, as expected, computation is a more serious challenge, and there I put forward some first thoughts on efficient approximations via, say, bootstrap.  A few examples of this were presented, in particular, non-parametric inference on a mean and on a quantile.  There are other ``non-parametric'' problems in which it may be preferable to proceed without thinking in terms of an infinite-dimensional unknown, e.g., in machine learning problems where the quantity of interest is defined as a risk minimizer.  Section~\ref{S:beyond} briefly describes how an approach similar to what was developed in the first part of the paper can be applied even in this seemingly-very-different context.  In fact, for prediction, I showed how the powerful and now widely-used conformal prediction algorithm can be derived from (a slightly broader perspective on) this general IM framework.  

I'll conclude this discussion with a brief mention of some directions for future investigation.  These and/or other things will be addressed in subsequent parts of this series.  
\begin{itemize}
\item An important problem that's closely related to the elimination of nuisance parameters is {\em model assessment} and, in turn, the task of {\em model selection}.  The point is that, if the model and, as usual, the model parameters are both unknown, then there's really an uncertain pair $(\Gamma, \Theta_\Gamma)$, where $\Gamma \in \mathbb{G}$ is the uncertain model index and $\Theta_\Gamma \in \TT_\Gamma$ is the uncertain, model-specific parameter.  When it comes to model assessment, the entirety of $\Theta_\Gamma$ is a nuisance parameter and the goal is marginal inference on $\Gamma$.  From this perspective, it's only natural to consider the same relative profile likelihood-based IM construction presented here.  The result would be a strongly valid IM on the model space $\mathbb{G}$, offering provably reliable possibilistic uncertainty quantification about the model, something no other frameworks are able to offer.  This can also accommodate partial prior information, e.g., to encourage simplicity/sparsity/parsimony/etc. 
\item In the context considered in Section~\ref{S:beyond}, when there's no likelihood directly in consideration, it's no longer clear how to incorporate partial prior information.  For sure, it's not so simple as normalizing the likelihood times prior to get a relative likelihood function.  One of the challenges is in defining the upper joint distribution ``$\uprob_{Y,\Phi}$'' that would be used to carry out the Choquet integration.  While there might still be a work-around, I can also see that this difficulty is to be expected and perhaps insurmountable: the Choquet integral requires specification of an imprecise probability, which in turn requires a {\em probabilistic} link between data and parameters, hence a sort of model, likelihood, etc.  
\item Finally, computation of the marginal IM clearly is feasible in the examples presented here in this paper.  There's also lots of other similar examples where the same (naive) Monte Carlo-driven strategies can be put to work.  For problems that involve a lot of nuisance parameters, however, this might be quite expensive.  One option would be to give up some information/efficiency about the interest parameter in exchange for computational benefits, e.g., to work with a marginal- instead of profile-based relative likelihood in a ``not-so-ideal factorization'' case.  Another option is to develop some new and less-naive strategies for Monte Carlo-based optimization using, say, stochastic gradient decent.  Some initial work on this was presented in \citet{syring.martin.isipta21}, but I think more can be done. 
\end{itemize}

\section*{Acknowledgments}

Thanks to Leonardo Cella for helpful discussions and comments on an earlier draft.  This work is partially supported by the U.S.~National Science Foundation, SES--2051225.

\bibliographystyle{apalike}
\bibliography{/Users/rgmarti3/Dropbox/Research/mybib}

\end{document}